\newif\ifdraft
\newif\ifappendix
\newcommand{\suppressed}[1]{
	\PackageWarning{}{#1 suppressed on page \thepage\space on input line \number\inputlineno!}}
	\renewcommand{\marginpar}[1]{\suppressed{Margin paragraph}}
	\newcommand{\assumeapx}[1]{#1}
	\newcommand{\assumeapx}[1]{\suppressed{Content assuming the Appendix}}
	\newenvironment{ccbox}[2][]{
			\begin{tcolorbox}[colback=#2!20,colframe=#2,
				breakable,
				boxsep=3pt,boxrule=.5pt,#1]
			}{\end{tcolorbox}}
\theoremstyle{plain}\newtheorem{theorem}{Theorem}[section]
\theoremstyle{plain}\newtheorem{corollary}[theorem]{Corollary}
\theoremstyle{plain}\newtheorem{proposition}[theorem]{Proposition}
\theoremstyle{plain}\newtheorem{lemma}[theorem]{Lemma}
\theoremstyle{plain}\newtheorem{definition}[theorem]{Definition}
\theoremstyle{remark}\newtheorem{remark}[theorem]{Remark}
\theoremstyle{remark}\newtheorem{example}[theorem]{Example}
	\providecommand{\@lastarg}[1]{#1}
	\providecommand{\@lastarg}[4]{#4}
\newcounter{freshthmlabel}
\newif\ifnoproofsketch
\newcommand{\setthmlabel}{%
	\stepcounter{freshthmlabel}%
	\label{proofatend:\thefreshthmlabel}%
	\global\noproofsketchtrue
}
\newcommand\fixstatement[2][\proofname\space of]{%
  \ifcsname thmt@original@#2\endcsname
    \AtEndEnvironment{#2}{%
      \xdef\pat@label{\expandafter\expandafter\expandafter
        \@lastarg\csname thmt@original@#2\endcsname}
		\xdef\pat@proofof{\@nameuse{pat@proofof@#2}}%
		\setthmlabel
    }%
  \else
	\ifcsname #2name\endcsname
		\AtEndEnvironment{#2}{%
			\xdef\pat@label{\expandafter\expandafter\expandafter\@lastarg\csname #2name\endcsname}
			\xdef\pat@proofof{\@nameuse{pat@proofof@#2}}%
			\setthmlabel
		}%
	\else
		\AtEndEnvironment{#2}{%
			\xdef\pat@label{\expandafter\expandafter\expandafter\@lastarg\csname #1\endcsname}
			\xdef\pat@proofof{\@nameuse{pat@proofof@#2}}%
			\setthmlabel
		}%
    \fi
  \fi
  \@namedef{pat@proofof@#2}{#1}%
}
\globtoksblk\prooftoks{100}
\newcounter{proofcount}
\newcommand{\seeproofref}{%
	\ifappendix
	\renewcommand*{\qedsymbol}{%
		\hyperref[proofatend:proof-\thefreshthmlabel]{%
			\(\Box\)}}
	\fi
}
	\edef\next{%
		\noexpand\begin{proof}[{\pat@proofof\space\pat@label\space\noexpand\ref{proofatend:\thefreshthmlabel}}]%
		\noexpand\phantomsection%
		\noexpand\label{proofatend:proof-\thefreshthmlabel}%
		\unexpanded\expandafter{\BODY}}%
	\def\printproofs{%
		\ifnum0<\value{proofcount}
			\section{Omitted proofs}
			\label{proofatend:proofs}%
			\count@=\z@
			\loop
				\the\toks\numexpr\prooftoks+\count@\relax
				\ifnum\count@<\value{proofcount}%
					\advance\count@\@ne
			\repeat
		\fi
	}
\newcommand{\superimpose}[2]{%
  {\ooalign{$#1\@firstoftwo#2$\cr\hfil$#1\@secondoftwo#2$\hfil\cr}}}
	\newcommand{\footnoteref}[1]{%
		\protected@xdef\@thefnmark{\ref{#1}}\@footnotemark%
	}
\newcommand{\customlabel}[2]{%
	\protected@write \@auxout {}{\string \newlabel {#1}{{#2}{\thepage}{#2}{#1}{}} }%
	\hypertarget{#1}{#2}
}
\newcommand{\ltrue}{\texttt{t\!t}}
\newcommand{\lfalse}{\texttt{f\!f}}
\newcommand{\defeq}{\triangleq}
\newcommand{\defiff}{\stackrel{\triangle}{\iff}}
\newcommand{\cat}[1]{\textnormal{\textsf{#1}}\xspace}
\newcommand{\set}{\cat{Set}}
\newcommand{\coalg}[1]{{#1\cat{-CoAlg}}}
\newcommand{\hole}{ {\mbox{\large\bf-}} }
\newcommand{\id}{\mathrm{Id}}
\newcommand{\f}[1]{\mathcal{F}_{#1}}
\newcommand{\ff}[1]{\f{\mathfrak{#1}}}
\newcommand{\fw}{\ff{W}}
\newcommand{\p}[1]{\mathcal{P}_{#1}}
\newcommand{\pf}{\p{\!\!f}}
\newcommand{\bottom}{\perp}
\newcommand{\support}[1]{\lfloor #1 \rfloor}
\newcommand{\totalweight}[1]{\llfloor #1 \rrfloor}
\newcommand{\totalweightg}[1]{\left\lfloor\kern-4pt\left\lfloor #1 \right\rfloor\kern-4pt\right\rfloor}
\newcommand\restr[2]{{\left.\kern-\nulldelimiterspace#1\vphantom{\big|}\right|_{#2}}}
\newcommand\corestr[2]{{\left.\kern-\nulldelimiterspace#1\vphantom{\big|}\right|^{#2}}}
\newcommand{\llbrace}{\lbrace\kern-2.2pt\vert}
\newcommand{\rrbrace}{\vert\kern-2.2pt\rbrace}
\newcommand{\pepasync}[1]{\raisebox{-1.0ex}{$\;\stackrel{\mbox{\large $\rhd\hspace{-1.2ex}\lhd$}}{\scriptscriptstyle #1}\,$}}
\newsavebox{\xtaTempBox}
\newlength{\xtaMinLen}
\newcommand{\rightarrowh}[1]{%
  \mathbin{
  \tikz[baseline=-.75ex]{
    \setlength{\xtaMinLen}{1.2em}
    \draw[#1] (0,0) -- (\xtaMinLen,0);}}}
\newcommand{\xrightarrowh}[4]{%
  \sbox{\xtaTempBox}{\hbox{\( \scriptstyle\mkern#3#2\mkern#4 \)}}
  \mathbin{
  \tikz[baseline=-0.6ex]{
    \setlength{\xtaMinLen}{1.2em}
    \setlength{\xtaMinLen}{\maxof{\wd\xtaTempBox}{\xtaMinLen}}
    \draw[#1] (0,0) --
    node[midway,above=-0.4ex]{\usebox\xtaTempBox} (\xtaMinLen,0);}}}
\newcommand{\rightarrowu}{\rightarrowh{-open triangle 60}}
\newcommand{\xrightarrowu}[1]{\xrightarrowh{-open triangle 60}{#1}{7mu}{17mu}}
\newcommand{\xrightarroww}[1]{\xrightarrowh{-triangle 60}{#1}{7mu}{17mu}}
\newcommand{\xrightarrowg}[1]{\xrightarrowh{-angle 90}{#1}{7mu}{9mu}}
\newcommand{\rightarrows}{\rightarrowh{->,double equal sign distance,-implies}}
\newcommand{\xrightarrowt}[1]{\xrightarrowh{-open triangle 60 reversed}{#1}{7mu}{17mu}}
\title{
	Structural operational semantics for\\
	non-deterministic processes with quantitative aspects\thanks{
	This work is partially supported by MIUR PRIN project 2010LHT4KM, \emph{CINA}.}
}
\author{
	\begin{tabular}{ccc}
	Marino Miculan&\qquad& Marco  Peressotti\\[-.4ex]
	\small\href{mailto:marino.miculan@uniud.it}{\tt marino.miculan@uniud.it}
	&\qquad&
	\small\href{mailto:marco.peressotti@uniud.it}{\tt marco.peressotti@uniud.it}
	\end{tabular}\\[-.2ex]
	\small	Laboratory of Models and Applications of Distributed Systems \\[-.4ex]
	\small	Department of Mathematics and Computer Science\\[-.4ex]
	\small	University of Udine, Italy\\
}
\date{}
\begin{document}

\maketitle

\begin{abstract}
	Recently, unifying theories for processes combining non-determinism
	with quantitative aspects (such as probabilistic or stochastically
	timed executions) have been proposed with the aim of providing
	general results and tools.  This paper provides two contributions in
	this respect.  First, we present a general GSOS specification format
	and a corresponding notion of bisimulation for non-deterministic
	processes with quantitative aspects.  These specifications define
	labelled transition systems according to the ULTraS model, an
	extension of the usual LTSs where the transition relation associates
	any source state and transition label with \emph{state reachability
	weight functions} (like, e.g., probability distributions). This
	format, hence called \emph{Weight Function GSOS} (WF-GSOS), covers
	many known systems and their bisimulations (e.g.~PEPA, TIPP, PCSP)
	and GSOS formats (e.g.~GSOS, Weighted GSOS, Segala-GSOS).
	
	The second contribution is a characterization of these systems
	as coalgebras of a class of functors, parametric in the weight
	structure. This result allows us to prove \emph{soundness} and
	\emph{completeness} of the WF-GSOS specification format, and
	that bisimilarities induced by these specifications are always
	congruences.
\end{abstract}

\section{Introduction}


Process calculi and labelled transition systems have proved very
successful for modelling and analysing concurrent, non-deterministic
systems.
This success has led to many extensions dealing with quantitative
aspects, whose transition relations are endowed with further
information like probability rates or stochastic rates; see
\cite{bg98:empa,denicola13:ultras,hillston:pepabook,ks2013:w-s-gsos,pc95:cj} among others.  These calculi are very effective in modelling and
analysing quantitative aspects, like performance analysis of computer
networks, model checking of time-critical systems, simulation of
biological systems, probabilistic analysis of security and safety
properties, etc.

 
Each of these calculi is tailored to a specific quantitative aspect
and for each of them we have to develop a quite complex theory almost
from scratch.  This is a daunting and error-prone task, as it embraces
the definition of syntax, semantics, transition rules,
various behavioural equivalences, logics, proof systems; the proof of
important properties like congruence of behavioural equivalences; the
development of algorithms and tools for simulations, model checking,
etc.
This situation would naturally benefit from general
\emph{frameworks} for LTS with quantitative aspects, i.e.,
mathematical \emph{meta}models offering general methodologies, results, and
tools, which can be uniformly instantiated to a wide range of specific
calculi and models.
In recent years, some of these theories have been proposed; we mention
\emph{Segala systems} \cite{sl:njc95},
\emph{Functional Transition Systems} (FuTS)
\cite{latella:qapl2015},
\emph{weighted labelled transition systems} (WLTSs)
\cite{handbook:weighted2009,ks2013:w-s-gsos},
and \emph{Uniform Labelled Transition Systems} (ULTraS), introduced by
Bernardo, De Nicola and Loreti specifically as ``a uniform setting for
modelling non-deterministic, probabilistic, stochastic or mixed
processes and their behavioural equivalences''
\cite{denicola13:ultras}.

A common feature of most of these meta-models is that their labelled
transition relations do not yield simple states (e.g., processes), but
some mathematical object representing quantitative information about
``how'' each state can be reached.  In particular, transitions in
ULTraS systems have the form $P \xrightarrowu{a} \rho$ where $\rho$ is
a \emph{state reachability weight function}, i.e., a function
assigning a \emph{weight} to each possible state.\footnote{The reader
  aware of advanced process calculi will be not baffled by the fact
  that targets are not processes. Well known previous examples are the
  LTS abstractions/concretions for $\pi$-calculus, for the applied
  $\pi$-calculus, for the ambient calculus, etc.}  By suitably
choosing the set of weights, and how these functions can be combined,
we can recover ordinary non-deterministic LTSs, probabilistic
transition systems, stochastic transition systems, etc.  As
convincingly argued in \cite{denicola13:ultras}, the use of weight
functions in place of plain processes simplifies the combination of
non-determinism with quantitative aspects, like in the case of EMPA or
PEPA.  Moreover, it paves the way for general definitions and results,
an important example being the notion of $\mathcal M$-bisimulation
\cite{denicola13:ultras}.

Albeit quite effective, these meta-models are at their dawn, with many
results and techniques still to be developed.  An important example of
these missing notions is a \emph{specification format}, like the
well-known GSOS, ntyft/ntyxt and ntree formats for non-deterministic
labelled transition systems.  These formats are very useful in
practice, because they can be used for ensuring important properties
of the system; in particular, the bisimulations induced by systems in
these formats is guaranteed to be a congruence (which is crucial for
compositional reasoning).  From a more foundational point of view,
these frameworks would benefit from a categorical characterization in
the theory of coalgebras and bialgebras: this would allow a
cross-fertilizing exchange of definitions, notions and techniques with
similar contexts and theories.


In this paper, we provide two main contributions in this respect.
First, we present a GSOS-style format, called \emph{Weight Function
  GSOS} (WF-GSOS), for the specifications of non-deterministic systems
with quantitative aspects. The judgement derived by rules in this
style is of the form $P \xrightarrowu{a} \psi$, where $P$ is a process
and $\psi$ is a \emph{weight function term}. These terms describe
weight functions by means of an \emph{interpretation}; hence, a
specification given in this format defines a ULTraS. By choosing the
set of weights, the language of weight function terms and their
interpretation, we can readily capture many quantitative notions
(probabilistic, stochastic, etc.), and different kinds of
non-deterministic interactions, covering models like PEPA, TIPP, PCSP,
EMPA, among others.  Moreover, the WF-GSOS format supports a general
definition of \emph{(strong) bisimulation}, which can be readily
instantiated to the various specific systems.

The second contribution is more fundamental.  We provide a general
categorical presentation of these non-deterministic systems with
quantitative aspects. Namely, we prove that ULTraS systems are in
one-to-one correspondence with coalgebras of a precise class of
functors, parametric on the underlying weight structure.  Using this
characterization we define the abstract notion of \emph{WF-GSOS
  distributive law} (i.e.~a natural transformation of a specific
shape) for these functors.  We show that each WF-GSOS specification
yields such a distributive law (i.e., the format is sound); taking
advantage of Turi-Plotkin's bialgebraic framework, this implies that
the bisimulation induced by a WF-GSOS is always a congruence, thus
allowing for compositional reasoning in quantitative settings.
Additionally, we extend the results we presented in \cite{mp:qapl14}
proving that the WF-GSOS format is also \emph{complete}: every
abstract WF-GSOS distributive law for ULTraSs can be described by
means of some WF-GSOS specification.

The rest of the paper is structured as follows.  In
Section~\ref{sec:ultras} we recall Uniform Labelled Transition
Systems, and their bisimulation.  In Section~\ref{sec:WF-GSOS} we
introduce the \emph{Weight Function SOS} specification format for the
syntactic presentation of ULTraSs.
In Section~\ref{sec:examples} we provide some application examples,
such as a WF-GSOS specification for PEPA and the translations of
Segala-GSOS and WGSOS specifications in the WF-GSOS format.  The
categorical presentation of ULTraS and WF-GSOS, with the results that
the format is sound and complete and bisimilarity is a congruence, are
in Section~\ref{sec:coalg}.  Final remarks, comparison with related
work and directions for future work are in Section~\ref{sec:concl}.


\section{Uniform Labelled Transition Systems and their bisimulation}\label{sec:ultras}
In this section we recall and elaborate the definition of ULTraSs, and
define the corresponding notion of (coalgebraically derived)
bisimulation; finally we compare it with the notion of $\mathcal
M$-bisimulation presented in \cite{denicola13:ultras}.
\assumeapx{Additional examples are provided in the Appendix. }%
Although we focus on the ULTraS framework, the results and
methodologies described in this paper can be ported to similar formats
(like FuTS \cite{latella:qapl2015}),
and more generally to a wide range of systems combining 
computational aspects in different ways.

\subsection{Uniform Labelled Transition Systems}
ULTraS are
(non-deterministic) labelled transition systems whose transitions lead
to \emph{state reachability weight functions}, i.e.~functions
representing quantitative information about ``how'' each state can be
reached. Examples of weight functions include probability
distributions, resource consumption levels, or stochastic rates. In
this light, ULTraS can be thought of as a generalization of Segala systems
\cite{sl:njc95}, which stratify non-determinism over probability.
Following the parallel with Segala systems, ULTraS transitions can be
pictured as being composed by two steps:
\[
   x \xrightarrowu{a} \rho \xrightarroww{w} y
\]
where the first is a labelled non-deterministic (sub)transition
and the second is a weighted one; from this perspective the weight function 
plays the r\^ole of the ``hidden intermediate state''.

Akin to Weighted Labelled Transition Systems (WLTS)
\cite{ks2013:w-s-gsos,handbook:weighted2009},
weights are drawn from a fixed
set endowed with a commutative monoid structure, where the unit is
meant to be assigned to disabled transitions (i.e.~those yielding
unreachable states) and the monoidal addition is used to
compositionally weigh sets of transitions given by non-determinism.

\begin{definition}[$\mathfrak W$-ULTraS]
	\label{def:ultras}
	Given a commutative monoid $\mathfrak W = (W,+,0)$, a 
	\emph{($\mathfrak W$-weighted) Uniform Labelled Transition System} 
	is a triple $(X,A,\rightarrowu)$ where:
	\begin{itemize}
		\item 
			$X$ is a set of \emph{states} (processes) called \emph{state space} or \emph{carrier};
		\item 
			$A$ is a set of \emph{labels} (actions);
		\item 
			${\rightarrowu} \subseteq X\times A \times [X \to W]$ 
			is a \emph{transition relation} where $[X \to W]$ 
			denotes the set of all \emph{weight functions} from $X$ to 
			the carrier of $\mathfrak W$.
	\end{itemize}
\end{definition}
Monoidal addition does not play any r\^ole in the above
definition\footnote{%
	Originally, in \cite{denicola13:ultras} $W$ is a partial order
	with bottom.  Actually, the order is not crucial to the basic definition
	of ULTraS as it is only used by some equivalences considered in that
	paper.
} but it is crucial to define the notion of bisimulation
and in general how the ``merging'' of two states (e.g.~induced by functions between carriers) affects the transition relation. In fact, bisimulations
can be thought as inducing ``state space refinements that are well-behaved 
w.r.t.~the transition relation''. From this perspective, monoidal addition
provides an \emph{abstract, uniform and compositional} way to ``merge'' the 
outgoing transitions into one: adding their weight; likewise probabilities
or stochastic rates are added in probabilistic or stochastic systems.

Because the monoidal structure supports finite addition
only\footnote{%
  Indeed it is possible to assume sums for any family indexed by some
  set; however, in Section~\ref{sec:coalg} we assume image-finiteness
  to guarantee the existence of a final coalgebra.  } we can only
merge finitely many transitions.  Assuming ULTraSs to have a finite
carrier or maps between carriers to define finite pre-images
(i.e.~$|f^{-1}(y)| \in \mathbb N$) is preposterous: since we aim to
provide syntactic description of ULTraSs, state spaces may be infinite
(cf.~initial semantics) and functions may map arbitrary many states to
the same image, e.g., their behaviour (cf.~bisimulations, final
semantics).  Therefore, in this paper we shall consider \emph{image
  finite} ULTraSs only.  This is a mild and common assumption
(e.g.~\cite{ks2013:w-s-gsos,bartels04thesis,bloomIM:95}) and our
results readily generalise to transfinite bounds (e.g.~to deal with
countably-branching systems).

\begin{definition}[Image finiteness]
  Let $\mathfrak W = (W,+,0)$ be a commutative monoid.
  For a function $\rho : X \to W$ the set $\support{\rho} \defeq \{ x \mid \rho(x) \neq 0\}$
  is called \emph{support of $\rho$} and whenever it is finite
  $\rho$ is said to be \emph{finitely supported}.
  The set of finitely supported functions with domain $X$
  is denoted by $\fw X$.
  A $\mathfrak W$-ULTraS $(X,A,\rightarrowu)$ is said to be \emph{image finite} 
  iff for any state $x \in X$ and label $a \in A$ the set 
  $\{\rho \mid x\xrightarrowu{a}\rho\}$ is finite and contains
  only finitely supported weight functions.
\end{definition}

\begin{example}
A weight function $\rho \in \ff{\!\!2}X$ 
(for $\mathfrak 2 = (\{\ltrue,\lfalse\},\lor,\lfalse)$)
is a predicate describing a finite subset of $X$.
Thus $\pf X \cong \ff{\!\!2}X$.
Likewise, a function $\rho \in \ff{N}X$ 
(for $\mathfrak N = (\mathbb N,+,0)$) assigns to
each element of $X$ a multiplicity and hence
describes a finite multiset.
\end{example}

Intuitively, elements of $\fw X$ can be seen as ``generalised
multisets''. Therefore, it is natural to extend a function $f : X \to Y$
to a function $\fw(f) : \fw X \to \fw Y$ mapping (finitely supported) 
weight functions over $X$ to (finitely supported) weight functions over 
$Y$ as follows:
\begin{equation}\label{eq:fw-action}
  \textstyle
  \fw(f)(\rho) \defeq \lambda y:Y.
  \sum_{x \in f^{-1}(y)}\rho(x)\text{.}
\end{equation}
This definition generalises the extension of a function to the powerset;
in fact, $\ff{\!\!2}(f)(\rho) = \lambda y : Y.\bigvee_{x \in f^{-1}(y)}\rho(x)$
describes the subset of $Y$ whose elements are image of
some element in the subset of $X$ described by $\rho$.
Henceforth, we shall refer to $\fw(f)(\rho)$ as the 
\emph{action of $f$ on $\rho$} and denote it by $\rho[f]$,
when confusion seems unlikely.

We can now make the idea of 
``state space maps being well-behaved w.r.t.~the transition relation'' 
formal:
\begin{definition}[ULTraS homomorphism]
	Let $(X,A,\rightarrowu_X)$ and $(Y,A,\rightarrowu_Y)$ be two image-finite 
	$\mathfrak W$-ULTraS. A \emph{homomorphism} $f : (\rightarrowu_X) \to (\rightarrowu_Y)$ 
	is a function $f : X \to Y$ between their state spaces 
	such that for any $x \in X$ and $a \in A$:
	\[
		x \xrightarrowu{a}_{\!\!X} \rho \iff f(x) \xrightarrowu{a}_{\!\!Y} \rho[f]
		\text{.}
	\]
\end{definition}
Given two homomorphisms 
$f : (\rightarrowu_X) \to (\rightarrowu_Y)$ and 
$g : (\rightarrowu_Y) \to (\rightarrowu_Z)$,
the function $g \circ f : X \to Z$ is a homomorphism
$g \circ f : (\rightarrowu_X) \to (\rightarrowu_Z)$.
Homomorphism composition is always defined, it is associative
and has identities.
In Section~\ref{sec:coalg} we will show that ULTraSs homomorphisms
indeed form categories equivalent to categories of coalgebras for a suitable
functor. For the time being, consider the degenerate monoid $\mathfrak 1$
containing exactly its unit and let $A$ be a singleton; 
then a $\mathfrak 1$-ULTraS  $(X,A,\rightarrowu_X)$ is just a relation 
$\rightarrowu_X \cong R_X$ on $X$ and
any homomorphism is exactly a relation homomorphism. In fact,
$f : X \to Y$ is a $\mathfrak 1$-ULTraS homomorphism 
$f : (\rightarrowu_X) \to (\rightarrowu_Y)$ iff
$(x,x') \in R_x \iff (f(x),f(x')) \in R_Y$. For $A$ with more
than one label we get exactly homomorphisms of labelled relations
i.e.~LTSs.

\subsection{Bisimulation}
We present now the definition of bisimulation for ULTraS
based on the notion of \emph{kernel bisimulation}
(a.k.a.~behavioural equivalence) i.e.~``a relation which is
the kernel of a common compatible refinement of the two\footnote{%
	We present bisimulations as relations between two
	state spaces instead of considering one system in isolation;
	we are aware that in the case of ULTraS two systems can
	be ``run in parallel'' still the notion of having a common
	refinement allows for different homomorphisms even when
	considering a single system and therefore offers greater
	generality.
} systems''
\cite{staton11}.  This notion naturally stems from the final
semantics approach and, under mild assumptions, coincides
with Aczel-Medler's coalgebraic bisimulation, as we will see in Section~\ref{sec:coalg}.

\begin{definition}[Refinement]
	Given $(X,A,\rightarrowu_X)$ a \emph{refinement} for it
	is any $(Y,A,\rightarrowu_Y)$ such that there exists
	an homomorphism $f : (\rightarrowu_X) \to (\rightarrowu_Y)$.
\end{definition}
Homomorphisms provide the right notion of refinement.
Consider an equivalence relation $R \subseteq X \times X$,
$R$ is \emph{stable w.r.t.~$\rightarrowu_X$}
if, and only if, its equivalence classes are not split by
the transition relation $\rightarrowu_X$, i.e., 
iff there is a refinement whose carrier is $Y = X/R$. 
Hence, stability of an equivalence relation corresponds to the 
canonical projection $\kappa : X \to X/R$ being a ULTraS homomorphism.
This observation contains all the ingredients needed
to define bisimulations for ULTraSs. Before we formalise this 
notion let us introduce some accessory notation.

In the following, we will denote the \emph{total weight} of
$\rho \in \fw X$ by
$\totalweight{\rho} \defeq \sum_{x \in X} \rho(x)$.
The weight $\rho$ assigned to $C \subseteq X$ is
the total weight of the restriction $\restr{\rho}{C}$
i.e.~$\totalweight{\restr{\rho}{C}} = \sum_{x \in C} \rho(x)$.
Any relation $R$ between two sets $X$ and $Y$ defines a relation 
$R_\mathfrak{W}$
between finitely supported weight functions for $X$ and $Y$ as:
\[
	(\phi,\psi) \in R_\mathfrak{W} \defiff
	\forall (C,D) \in R^\star\,
	\totalweight{\restr{\phi}{C}} = 
	\totalweight{\restr{\psi}{D}}
\]
where $R^\star\subseteq \p{}X\times\p{}Y$ is the
\emph{subset closure} of $R$
i.e.~smallest relation s.t., for $C\subseteq X$, $D\subseteq Y$:
\begin{align*}
(C,D)\in R^\star \iff & (\forall x\in C,\forall y\in Y: (x,y)\in R \Rightarrow
y\in D) \wedge  \\& 
 (\forall x\in X,\forall y\in D: (x,y)\in R \Rightarrow x\in C)
\end{align*}

\begin{definition}[Bisimulation]
	\label{def:bisim}
	Let $(X,A,\rightarrowu_X)$ and $(Y,A,\rightarrowu_Y)$ be two image-finite 
	$\mathfrak W$-ULTraS. A relation $R$ between $X$ and $Y$  
	is a \emph{bisimulation} if, and only if,
	for each pair of states $x \in X$ and $y \in Y$, 
	$(x,y) \in R$ implies that for each label $a \in A$ 
	the following hold:
	\begin{itemize}
		\item
		if ${x\xrightarrowu{a}_{\!\!X}\phi}$ then there exists 
	    ${y\xrightarrowu{a}_{\!\!Y}\psi}$ s.t.~$(\phi,\psi) \in R_\mathfrak{W}$.
		\item
		if ${y\xrightarrowu{a}_{\!\!Y}\psi}$ then there exists
	    ${x\xrightarrowu{a}_{\!\!X}\phi}$ s.t.~$(\phi,\psi) \in R_\mathfrak{W}$.
	\end{itemize}
	Processes $x$ and $y$ are said to be \emph{bisimilar} 
	if there exists a bisimulation relation $R$ such that $(x,y) \in R$.
\end{definition}

\looseness=-1
As ULTraSs can be seen as stacking non-determinism over other
computational behaviour, Definition~\ref{def:bisim} stratifies
bisimulation for non-deterministic labelled transition system over
bisimulation for systems expressible as labelled transition systems
weighted over commutative monoids. In fact, two processes $x$ and $y$ are related by some
bisimulation if, and only if, whether one reaches a weight function
via a non-deterministic labelled transition, the other can reach
another function via a transition with the same label, where the
two functions are equivalent in the sense that they assign the
same total weight to the classes of states in the relation.  For
instance, in the case of weights being probabilities, functions are
considered equivalent only when they agree on the probabilities
assigned to each class of states which is precisely the intuition
behind probabilistic bisimulation \cite{ls:probbisim}.
More examples will be discussed below\assumeapx{ and in the Appendix}.

\paragraph{Constrained ULTraS}
Sometimes, the ULTraSs induced by a given monoid are too many, and we
have to restrict to a subclass.  For instance, fully-stochastic
systems such as (labelled) CTMCs are a strict subclass of ULTraSs
weighted over the monoid of non-negative real numbers $(\mathbb
R_0^+,+,0)$, where weights express rates of exponentially distributed
continuous time transitions. In the case of fully-stochastic systems,
for each label, each state is associated with precisely one weight
function. This kind of ``deterministic'' ULTraSs are called
\emph{functional} in \cite{denicola13:ultras}, because the transition
relation is functional, and correspond precisely to
WLTSs~\cite{ks2013:w-s-gsos,handbook:weighted2009}.
These are a well-known family of systems (especially their automata
counterpart) and have an established coalgebraic understanding as long
as a (coalgebraically derived) notion of \emph{weighted bisimulation}
which are shown to subsume several known kinds of systems such as
non-deterministic, (fully) stochastic, generative and reactive
probabilistic \cite{ks2013:w-s-gsos}. Moreover,
Definition~\ref{def:bisim} coincides with weighted bisimulation on
functional ULTraSs/WLTSs over the same monoid
\cite[Def.~4]{ks2013:w-s-gsos}; hence Definition~\ref{def:bisim}
covers every system expressible in the framework of WLTS.
\assumeapx{(cf.~Appendix~\ref{apx:vs-wlts}).}
\begin{proposition}
  \label{prop:w-bisim}
  Let $\mathfrak W$ be a commutative monoid and $(X,A,\rightarrowu_X)$,
  $(Y,A,\rightarrowu_Y)$
  be $\mathfrak W$-LTSs seen as a functional $\mathfrak W$-ULTraSs.
  Every bisimulation relation between them is a $\mathfrak W$-weighted bisimulation
  and vice versa.
\end{proposition}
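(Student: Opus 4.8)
The plan is to exploit functionality to collapse Definition~\ref{def:bisim} to a pointwise comparison of weight functions, and then to reduce membership in $R_\mathfrak{W}$ to a check on the classes of the equivalence generated by $R$, which is exactly the shape of the condition in \cite[Def.~4]{ks2013:w-s-gsos}. First, since both systems are functional, for each state and label there is a \emph{unique} outgoing weight function; write $\phi^a_x$ and $\psi^a_y$ for the functions with $x \xrightarrowu{a}_X \phi^a_x$ and $y \xrightarrowu{a}_Y \psi^a_y$. Under this assumption the two existential clauses of Definition~\ref{def:bisim} have a forced witness, so $R$ is a bisimulation if and only if for every $(x,y)\in R$ and every $a\in A$ one has $(\phi^a_x,\psi^a_y)\in R_\mathfrak{W}$. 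Since $\mathfrak W$-weighted bisimulation compares exactly these same two functions, it suffices to prove, for arbitrary $\phi\in\fw X$ and $\psi\in\fw Y$, that $(\phi,\psi)\in R_\mathfrak{W}$ is equivalent to the defining agreement condition of \cite[Def.~4]{ks2013:w-s-gsos}, namely that $\phi$ and $\psi$ assign equal total weight to each class of the least equivalence relation containing $R$ on $X\sqcup Y$.

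Second, I would characterise the subset closure $R^\star$. Unwinding the definition, $(C,D)\in R^\star$ exactly when $R[C]\subseteq D$ and $R^{-1}[D]\subseteq C$, i.e.\ when $(C,D)$ is a ``closed'' pair. The key combinatorial step is to view $R$ as the edge set of a bipartite graph on $X\sqcup Y$ and to show that every such closed pair is a disjoint union of connected components: walking a zigzag $R$-path out of any $x\in C$ and applying $R[C]\subseteq D$ and $R^{-1}[D]\subseteq C$ alternately keeps the path inside $(C,D)$, so the entire component of $x$ lies in $(C,D)$, and symmetrically for $y\in D$. Hence the atoms of $R^\star$ are precisely the pairs $(E\cap X,\,E\cap Y)$ ranging over the connected components $E$ — equivalently, over the classes of the equivalence relation generated by $R$ — with $R$-isolated states contributing singleton atoms whose partner side is empty.

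Finally, since the assignment $C\mapsto\totalweight{\restr{\phi}{C}}$ is additive over disjoint unions of subsets, the condition ``$\totalweight{\restr{\phi}{C}}=\totalweight{\restr{\psi}{D}}$ for all $(C,D)\in R^\star$'' is equivalent to the same equality restricted to the atoms: the forward direction is mere instantiation, as each atom already lies in $R^\star$, and the converse follows by summing the atomic equalities over the components comprising a general $(C,D)$. What remains is exactly the statement that $\phi$ and $\psi$ agree on every class of the equivalence generated by $R$, which is the defining condition of $\mathfrak W$-weighted bisimulation; the degenerate one-sided atoms correctly force weight $0$ on any state related to nothing on the other carrier. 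Both implications of the proposition then follow simultaneously.

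I expect the only delicate point to be the closure analysis of the second paragraph — in particular carrying it out for a general relation between two distinct carriers and handling the $R$-isolated states correctly — whereas the collapse by functionality and the additivity reduction are routine.
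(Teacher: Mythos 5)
Your proof is correct, and its first paragraph is exactly the paper's own argument: functionality forces the witnesses in both clauses of Definition~\ref{def:bisim}, so $R$ is a bisimulation iff $(\phi^a_x,\psi^a_y)\in R_\mathfrak{W}$ for every $(x,y)\in R$ and $a\in A$. Where you diverge is in what you compare this against. The paper recalls \cite[Def.~4]{ks2013:w-s-gsos} (Definition~\ref{def:w-bisim}) already in the $R^\star$ form: for every $(C,D)\in R^\star$, $\sum_{c\in C}\phi(x,a,c)=\sum_{d\in D}\psi(y,a,d)$. Since $(\phi,\psi)\in R_\mathfrak{W}$ unfolds to literally this condition, the paper's proof ends right after the functionality collapse --- the two definitions coincide verbatim and there is nothing left to prove. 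You instead target the class-based (Larsen--Skou style) formulation, agreement on the classes of the equivalence generated by $R$ on $X\sqcup Y$, and therefore need your second and third paragraphs: the characterisation of the closed pairs in $R^\star$ as unions of connected components of the bipartite graph of $R$, plus additivity of $\totalweight{\restr{\phi}{C}}$ over disjoint unions (unproblematic for finitely supported functions, even for infinite unions of components). That bridging lemma is correct --- the zigzag argument and the treatment of $R$-isolated states, which land in one-sided atoms $(\{x_0\},\emptyset)\in R^\star$ forcing weight $0$, both check out --- and it is a genuine, self-contained equivalence that the paper never needs. It buys you independence from which of the two equivalent formulations of weighted bisimulation one adopts, at the cost of combinatorial work that is redundant against the formulation actually recalled in the paper.
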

\ifappendix
\begin{proof}[Proof (Omitted)]
See Appendix~\hyperref[proof:w-bisim]{\ref*{apx:vs-wlts}}.
\end{proof}
\fi

Another constraint arises in the case of probabilistic
systems, i.e., weight functions are probability distributions.  Since
addition is not a closed operation in the unit interval $[0,1]$, there
is no monoid $\mathfrak W$ such that every weight function on it is also a
probability distribution.  Altough we could relax
Definition~\ref{def:ultras}
to allow commutative \emph{partial} monoids\footnote{A commutative
  partial monoid is a set endowed with a unit and a partial binary
  operation which is associative and commutative, where it is defined,
  and always defined on its unit. } such as the weight structure of
probabilities $([0,1],+,0)$, not every weight function
on $[0,1]$ is a probability distribution.  In fact, probabilistic
systems (among others) can be recovered as ULTraSs over the $(\mathbb
R^+_0,+,0)$ (i.e.~the free completion of $([0,1],+,0)$) and subject to
suitable constraints.  For instance, Segala systems \cite{sl:njc95}
are precisely the strict subclass of $\mathbb R^+_0$-ULTraS such that
every weight function $\rho$ in their transition relation is a
probability distribution i.e.~$\totalweight\rho = 1$.  Moreover,
bisimulation is preserved by constraints; e.g., bisimulations
on the above class of (constrained) ULTraS corresponds to Segala's (strong)
bisimulations \cite[Def.~13]{sl:njc95}.
\begin{proposition}\label{prop:segala-bisim}
  Let  $(X,A,\rightarrowu_X)$ and $(Y,A,\rightarrowu_Y)$
  be image-finite Segala-systems seen as
  ULTraSs on $(\mathbb R^+_0,+,0)$.
  Every bisimulation relation between them is a strong bisimulation
  in the sense of \cite[Def.~13]{sl:njc95} and vice versa.
\end{proposition}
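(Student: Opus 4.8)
The plan is to reduce the whole statement to a single fact about weight functions: that, restricted to probability distributions, the ULTraS relation lifting $R_\mathfrak{W}$ of Definition~\ref{def:bisim} coincides with Segala's lifting of $R$ to distributions. Both bisimulation notions have exactly the same \emph{transfer} shape — for every $a$-labelled transition out of a related state on one side there must be a matching $a$-labelled transition on the other, related by the lifted relation, and symmetrically — so once the two liftings are identified the two bisimulation conditions become literally the same clause, and the ``vice versa'' direction comes for free. Since we view the Segala systems as $\mathbb R^+_0$-ULTraSs in which every transition target $\rho$ satisfies $\totalweight{\rho}=1$, every weight function occurring is a genuine probability distribution, so Segala's lifting \cite[Def.~13]{sl:njc95} is the usual one: $\phi$ and $\psi$ are related precisely when they assign equal probability to each equivalence class of $R$.

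It suffices, by the standard reduction of a bisimulation between two systems to an equivalence relation on their coproduct $Z \defeq X \uplus Y$ (the generality flagged in the footnote of Definition~\ref{def:bisim} being recovered this way), to analyse a single $\mathbb R^+_0$-ULTraS equipped with an equivalence relation $R$. The technical core is then a description of the subset closure $R^\star$. Writing $R(C)$ for the $R$-image of $C$, I claim that for an equivalence relation $R$ one has $R^\star = \{(C,C)\mid C\text{ is a union of }R\text{-classes}\}$. Indeed, given $(C,D)\in R^\star$, the two defining conditions $R(C)\subseteq D$ and $R^{-1}(D)\subseteq C$ together with reflexivity of $R$ yield the chain $C\subseteq R(C)\subseteq D\subseteq R^{-1}(D)\subseteq C$, forcing $C=D$ and $C=R(C)$, i.e.~$C$ is $R$-saturated; conversely every union of classes $C$ satisfies $R(C)=C$, hence $(C,C)\in R^\star$.

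With this description, $(\phi,\psi)\in R_\mathfrak{W}$ reads $\totalweight{\restr{\phi}{C}} = \totalweight{\restr{\psi}{C}}$ for every union of classes $C$, which by additivity of $\totalweight{\cdot}$ is equivalent to $\totalweight{\restr{\phi}{E}}=\totalweight{\restr{\psi}{E}}$ for every single class $E$ — exactly the condition that $\phi$ and $\psi$, now honest probability distributions because $\totalweight{\phi}=\totalweight{\psi}=1$, give equal mass to each $R$-class, which is Segala's distribution lifting. Substituting this identification into the transfer clauses of Definition~\ref{def:bisim} reproduces verbatim the transfer clauses of Segala strong bisimulation (the two symmetric clauses collapsing into one since $R$ is symmetric), and conversely; the bi-implication, and hence the coincidence of the two induced bisimilarities, follows. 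I expect the handling of $R^\star$ to be the main obstacle: the transfer and normalisation parts are routine unfoldings, but one must verify with care that the subset-closure definition of $R_\mathfrak{W}$ collapses \emph{exactly} to class-matching — in particular that no spurious pairs in $R^\star$ impose extra constraints, and that degenerate situations (empty or one-sided classes, states of zero weight) are correctly absorbed by the coproduct reduction.
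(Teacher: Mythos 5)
Your proposal is correct in substance, but it takes a genuinely different route from the paper. The paper's proof (Appendix~\ref{apx:vs-segala}) is essentially a definition unfolding: it recalls Segala's strong bisimulation already in a two-system relational form whose matching condition quantifies over pairs $(C,D)\in R^\star$ (Definition~\ref{def:segala-bisim}, stated ``with minor notational differences'' from \cite{sl:njc95}), observes that $\mathcal{D}X \subseteq \f{\mathbb R^+_0}X$ so that Segala systems are constrained $\mathbb R^+_0$-ULTraSs, and then simply notes that the transfer clauses and the $R^\star$-summation conditions of Definition~\ref{def:bisim} and of Segala bisimulation coincide verbatim --- no analysis of $R^\star$, equivalence classes, or coproducts appears. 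You instead target the classical class-based formulation (an equivalence relation with distributions matched class by class), and the mathematical heart of your argument --- the characterization $R^\star = \{(C,C)\mid C \text{ a union of } R\text{-classes}\}$ for an equivalence $R$, combined with finite additivity of $\totalweight{\cdot}$ over the (finite) supports --- is proved correctly and is exactly the bridge between the lifting $R_\mathfrak{W}$ and class-wise equality of distributions. What your route buys is a genuine verification that the paper's relational restatement of Segala's definition is faithful to the original one; what the paper's route buys is brevity, at the price of burying that bridging step inside its restated definition.

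The one step you assert rather than prove is the ``standard reduction'' to an equivalence on the coproduct $Z = X\uplus Y$, and in this weighted setting it is not purely formal: you must check (i) that the equivalence $\tilde R$ generated on $Z$ by a bisimulation $R\subseteq X\times Y$ is again a bisimulation, which requires chaining transfer steps along zigzags and hence reflexivity, symmetry and transitivity of the lifted relation, and (ii) that restricting an equivalence bisimulation on $Z$ back to $X\times Y$ yields a bisimulation in the sense of Definition~\ref{def:bisim}. Both can be discharged with facts you essentially already have. For (ii): if $(C,D)\in R^\star$ then $C\cup D$ is $\tilde R$-saturated (an $R$-step from $C$ lands in $D$ and an $R^{-1}$-step from $D$ lands in $C$, so zigzags never leave $C\cup D$), whence, for $\phi$ supported in $X$ and $\psi$ supported in $Y$,
\[
\totalweight{\restr{\phi}{C}} = \totalweight{\restr{\phi}{C\cup D}}
= \totalweight{\restr{\psi}{C\cup D}} = \totalweight{\restr{\psi}{D}}\text{.}
\]
For (i): every $\tilde R$-class $E$ satisfies $(E\cap X, E\cap Y)\in R^\star$, so $R_\mathfrak{W}$-relatedness of a pair of weight functions implies their class-wise agreement on $Z$, and class-wise agreement is an equivalence, so the zigzag chaining goes through. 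Spelling this out would close the only gap; as it stands, all the claims you do make are true, and the degenerate cases you flag (singleton classes, one-sided classes, zero-weight states) are indeed absorbed by the saturation argument above.
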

\ifappendix
\begin{proof}[Proof (Omitted)]
See Appendix~\hyperref[proof:segala-bisim]{\ref*{apx:vs-segala}}.
\end{proof}
\fi

A similar result holds for generative (or fully) or reactive
probabilistic systems and their bisimulations.
In fact, these are functional ${\mathbb R^0_+}$-ULTraS 
s.t.~for all $x \in X$ 
${x \xrightarrowu{a} \rho} \implies \totalweight{\rho} \in \{0,1\}$
and
$\sum_{\{\rho \mid x \xrightarrowu{a} \rho\}}\totalweight{\rho} \in \{0,1\}$
respectively.

\subsection{Comparison with $\mathcal M$-bisimulation}
Bernardo et al.~defined a notion of bisimulation for ULTraS
parametrized by a function $\cal M$ which is used to weight sets of
(sequences of) transitions \cite[Def.~3.3]{denicola13:ultras}. 
Notably, $\cal M$'s
codomain may be not the same of that used for weight functions in the
transition relation. This offers an extra degree of freedom with
respect to Definition~\ref{def:bisim}. 
We recall the relevant definitions with minor modifications since
the original ones have to consistently weight also sequences of transitions
in order to account also for trace equivalences which are not in the scope of this paper.

\begin{definition}[$M$-function]
  \label{def:mfun}
  Let $(M,\bot)$ be a pointed\footnote{%
	  A pointed set (sometimes called based set or rooted set)
	  is a set equipped with a distinguished element
	  called (base) point; homomorphisms are
	  point preserving functions.
  } set and $(X,A,\rightarrowu)$ be a 
  $\mathfrak W$-ULTraS.
  A function $\mathcal M : X \times A\times \mathcal P X \to M$ is an
  \emph{$M$-function for $(X,A,\rightarrowu)$} if, and only if,
  it agrees with termination and class union, i.e.:
	\begin{itemize}
		\item
			for all $x \in X$, $a\in A$ and $C \in \mathcal PX$, 
			$\mathcal M(x,a,C) = \bot$ whenever 
			${x \centernot{\xrightarrowu{a}}}$ or
			$\totalweight{\restr{\rho}{C}} = 0$ for every $x \xrightarrowu{a}\rho$;
		\item
			for all $x,y \in X$, $a \in A$ and  $C_1,C_2 \in \mathcal P X$, if
			$\mathcal{M}(x,a,C_1) = \mathcal{M}(y,a,C_1)$ and $\mathcal{M}(x,a,C_2) = \mathcal{M}(y,a,C_2)$ then
			$\mathcal{M}(x,a,C_1\cup C_2) = \mathcal{M}(y,a,C_1 \cup C_2)$.
	\end{itemize}
\end{definition}

\begin{definition}[$\mathcal M$-bisimulation {\cite{denicola13:ultras}}]
  \label{def:mbisim}
  Let $\mathcal M$ be an $M$-function for $(X,A,\rightarrowu)$.
  An equivalence relation $R \subseteq X\times X$ is a \emph{$\cal M$-bisimulation for $\rightarrowu$} 
  iff for each pair $(x,y)\in R$, label $a \in A$, and class $C \in X/R$,
  $\mathcal{M}(x,a,C) = \mathcal{M}(y,a,C)$.
\end{definition}

\looseness=-1
Differently from Definition~\ref{def:bisim}, $M$ may be not $W$
allowing one to, for instance, consider stochastic rates up-to a 
suitable tolerance as a way to account for experimental measurement 
errors in the model.
A further distinction between bisimulation and $\mathcal M$-bisimulation
arises from the fact that ULTraSs come with two distinct ways of
\emph{terminating}.  A state can be seen as ``terminated'' either when
its outgoing transitions are always the constantly zero function, or
when it has no transitions at all.  In the first case, the state has
still associated an outcome, saying that no further state is
reachable; we call these states \emph{terminal}. In the second case,
the LTS does not even tell us that the state cannot reach any further
state; in fact, there is no ``meaning'' associated to the state. In
this case, we say that the state is \emph{stuck}.\footnote{This is
  akin to sequential programs: a terminal state is when we reach the
  end of the program; a stuck state is when we are executing an
  instruction whose meaning is undefined.} The bisimulation given in
Definition~\ref{def:bisim} keeps these two terminations as different
(i.e., they are not bisimilar), whereas $\mathcal M$-bisimulation does not
make this distinction (cf.~\cite[Def.~3.2]{denicola13:ultras} or, for
a concrete example based on Segala systems,
\cite[Def.~7.2]{denicola13:ultras}).

Finally, the two notions differ on the quantification over equivalence
classes: in the case of Definition~\ref{def:bisim} quantification
depends on the non-deterministic step whereas in the case of $\mathcal
M$-bisimulation it does not.

Under some mild assumptions, the two notions agree.  In
particular, let us restrict to systems with just one of the two
terminations for each action $a$---i.e.~if for some $x$, $\{\rho\mid
x\xrightarrowu{a}\rho\} = \emptyset$ then for all $y$, $\lambda z. 0
\notin \{\rho\mid y\xrightarrowu{a}\rho\}$, and, symmetrically, if for
some $x$, $\lambda z. 0 \in \{\rho\mid x\xrightarrowu{a}\rho\}$ then
for all $y$, $\{\rho\mid y\xrightarrowu{a}\rho\} \neq \emptyset$.
Then, the bisimulation given in Definition~\ref{def:bisim} corresponds
to a $\mathcal M$-bisimulation for a suitable choice of $\mathcal M$.
\begin{proposition}\label{prop:vs-m-bisim}
	Let $(X,A,\rightarrowu)$ be a $\mathfrak W$-ULTraS with at most one kind
	of termination, for each label.
	Every bisimulation $R$ is also an $\mathcal M$-bisimulation for
	\[
		\mathcal{M}(x,a,C) \defeq 
		\{[\rho]_{R_{\mathfrak{W}}} \mid {x\xrightarrowu{a}\rho}
		\text{ and } 
		\totalweight{\restr{\rho}{C}} \neq 0\}
		\cup
		\{[\lambda z. 0]_{R_{\mathfrak{W}}}\}
	\]
	where 
	$(M,\bottom) = (\pf(\fw X/{R_{\mathfrak{W}}}), \{[\lambda z. 0]_{R_{\mathfrak{W}}}\})$.
\end{proposition}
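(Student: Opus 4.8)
The plan is to take $R$ to be a bisimulation that is simultaneously an equivalence relation (as Definition~\ref{def:mbisim} requires $\mathcal M$-bisimulations to be equivalences) and to proceed in three stages: fix the ambient pointed set, check that the proposed $\mathcal M$ is a genuine $M$-function, and then verify the $\mathcal M$-bisimulation condition on $R$-related pairs.

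First I would record the shape of $R^\star$ and $R_{\mathfrak{W}}$ when $R$ is an equivalence. Unfolding the definition of $R^\star$ shows that $(C,D) \in R^\star$ forces $C = D$ with $C$ an $R$-saturated set (a union of classes); consequently $(\phi,\psi) \in R_{\mathfrak{W}}$ amounts to $\totalweight{\restr{\phi}{C}} = \totalweight{\restr{\psi}{C}}$ for every class $C \in X/R$, so $R_{\mathfrak{W}}$ is itself an equivalence relation on $\fw X$ and the classes $[\rho]_{R_{\mathfrak{W}}}$ are well defined. In particular $(C,C) \in R^\star$ for every $C \in X/R$, and this is the fact that drives the whole argument. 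Image-finiteness guarantees that $\{\rho \mid x \xrightarrowu{a} \rho\}$ is finite, so $\mathcal M(x,a,C)$ is a finite subset of $\fw X/R_{\mathfrak{W}}$ and $(M,\bottom) = (\pf(\fw X/R_{\mathfrak{W}}), \{[\lambda z.0]_{R_{\mathfrak{W}}}\})$ is a legitimate pointed set containing every value of $\mathcal M$.

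Next I would check that $\mathcal M$ satisfies Definition~\ref{def:mfun}. The termination clause is immediate: if $x \centernot{\xrightarrowu{a}}$, or if every $a$-transition of $x$ vanishes on $C$, then the comprehension defining $\mathcal M(x,a,C)$ is empty and only the manually adjoined base point $\{[\lambda z.0]_{R_{\mathfrak{W}}}\} = \bottom$ survives. The always-present base point is precisely what erases in $\mathcal M$ the distinction between stuck and terminal states that Definition~\ref{def:bisim} preserves, and here the hypothesis of at most one kind of termination per label is what makes this erasure compatible with $R$: since no label admits both a terminal and a stuck state, collapsing the two terminations never forces $R$ to relate states it keeps apart. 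The class-union clause is the technical heart: for $R$-saturated arguments one shows $\mathcal M(x,a,C_1 \cup C_2) = \mathcal M(x,a,C_1) \cup \mathcal M(x,a,C_2)$, using that a transition $\rho$ has $\totalweight{\restr{\rho}{C_1 \cup C_2}} \neq 0$ exactly when it is nonzero on $C_1$ or on $C_2$; taking unions on both sides then yields the required implication. I expect this clause to be the main obstacle, since it is the one place where the monoid structure (the behaviour of total weight under unions of supports) must be handled with care.

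Finally I would verify the $\mathcal M$-bisimulation condition. Fixing $(x,y) \in R$, a label $a$, and a class $C \in X/R$, I would prove $\mathcal M(x,a,C) \subseteq \mathcal M(y,a,C)$, the reverse inclusion following by symmetry from the second clause of Definition~\ref{def:bisim}. The base point lies in both sets by construction, so take a class $[\rho]_{R_{\mathfrak{W}}}$ arising from some $x \xrightarrowu{a} \rho$ with $\totalweight{\restr{\rho}{C}} \neq 0$. By the bisimulation property there is $\psi$ with $y \xrightarrowu{a} \psi$ and $(\rho,\psi) \in R_{\mathfrak{W}}$, whence $[\rho]_{R_{\mathfrak{W}}} = [\psi]_{R_{\mathfrak{W}}}$; and since $(C,C) \in R^\star$ we obtain $\totalweight{\restr{\psi}{C}} = \totalweight{\restr{\rho}{C}} \neq 0$, so $[\psi]_{R_{\mathfrak{W}}} \in \mathcal M(y,a,C)$. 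This establishes the inclusion, hence the equality $\mathcal M(x,a,C) = \mathcal M(y,a,C)$, which is exactly the defining condition of an $\mathcal M$-bisimulation. Combining the three stages shows that the given bisimulation $R$ is an $\mathcal M$-bisimulation for the stated $\mathcal M$.
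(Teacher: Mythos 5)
Your proposal follows essentially the same route as the paper's proof: your stage 3 is exactly the paper's core argument (the paper packages it as $\Phi_{x,a}=\Phi_{y,a}$ and then filters down to $\Psi_{x,a,C}=\mathcal M(x,a,C)$, while you chase elements and use symmetry), and your stages 1--2 correspond to the paper's check that $\mathcal M$ is ``well-given''. There is, however, one step in stage 2 that does not hold at the stated level of generality, plus a misplacement of where the proposition's hypothesis does its work.

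The problematic step is the class-union clause. The equivalence you invoke --- $\totalweight{\restr{\rho}{C_1\cup C_2}}\neq 0$ exactly when $\totalweight{\restr{\rho}{C_1}}\neq 0$ or $\totalweight{\restr{\rho}{C_2}}\neq 0$ --- is false in an arbitrary commutative monoid. Over $(\mathbb Z,+,0)$ take $\rho(u)=1$, $\rho(v)=-1$, $\rho(w)=1$ and $C_1=\{u,v\}$, $C_2=\{v,w\}$: then $\rho$ has total weight $0$ on $C_1$ and on $C_2$ but $1$ on the union; dually, cancellation can annihilate a nonzero weight when passing to a union. What rescues the claim is \emph{positivity} of $\mathfrak W$ ($v+w=0\Rightarrow v=w=0$): then ``nonzero total weight on $C$'' is the same as ``$\support{\rho}$ meets $C$'', and your equivalence becomes the tautology that the support meets $C_1\cup C_2$ iff it meets $C_1$ or meets $C_2$. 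So your stage 2 silently assumes a positive monoid. To be fair, this gap is shared with (indeed inherited from) the paper, whose proof dismisses the same clause with the words ``by definition of $R_{\mathfrak W}$''; the $\mathbb Z$ example above shows the clause genuinely fails without positivity, which the paper only acknowledges implicitly in its related-work discussion of $\mathcal M$-bisimulation. Your argument becomes a proof once positivity is stated as an assumption. A smaller point in the same stage: Definition~\ref{def:mfun} quantifies over \emph{all} $C_1,C_2\in\mathcal P X$, not only $R$-saturated sets; under positivity your argument works verbatim for arbitrary subsets, so the restriction should simply be dropped rather than left as an unaddressed case.

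Finally, the hypothesis of ``at most one kind of termination per label'': you attach it to the termination clause of Definition~\ref{def:mfun}, where it does no work (that clause holds for the given $\mathcal M$ unconditionally, the comprehension being empty in both the stuck and the terminal case). In fact, for the implication actually claimed by the proposition the hypothesis is never needed: by the second clause of Definition~\ref{def:bisim} a bisimulation can never relate a stuck state with a terminal one, so the fact that $\mathcal M$ conflates the two terminations is harmless on $R$-related pairs. The paper invokes the hypothesis for a different purpose: its proof also establishes the complementary claim that $(x,y)\notin R$ forces $\mathcal M(x,a,C)\neq\mathcal M(y,a,C)$ for some $a$ and $C$, and it is only there --- where inequalities must be preserved while adjoining the base point $\bottom$ --- that the hypothesis is used. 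Your proof of the stated direction is complete without it.
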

\begin{proofatend}
The function $\mathcal M$ is well-given because 
$\mathcal M(x,a,C) = \bot = \bottom$ whenever 
$x \centernot{\xrightarrowu{a}}$ or, for each
$x \xrightarrowu{a} \rho$, $\rho(C) = 0$, and
$\mathcal{M}(x,a,C_1) = \mathcal{M}(y,a,C_1)$ and $\mathcal{M}(x,a,C_2) = \mathcal{M}(y,a,C_2)$
implies $\mathcal{M}(x,a,C_1\cup C_2) = \mathcal{M}(y,a,C_1 \cup C_2)$
by definition of $R_{\mathfrak{W}}$.

By Definition~\ref{def:bisim}, whenever $x \xrightarrowu{a} \phi$
then $y \xrightarrowu{a} \psi$ s.t.~$\phi(C) = \psi(C)$
for each $C\in X/R$ i.e.~$\phi R_{\mathfrak{W}} \psi$ and the symmetric case for $y$.
Therefore $(x,y) \in R$ implies that 
$\Phi_{x,a} \defeq \{[\phi]_{R_{\mathfrak{W}}} \mid x \xrightarrowu{a} \phi\}$
and
$\Phi_{y,a} \defeq \{[\psi]_{R_{\mathfrak{W}}} \mid y \xrightarrowu{a} \psi\}$ 
are equal for each $a \in A$.
We can safely add $\bottom$ to both $\Phi_{x,a}$
and $\Phi_{y,a}$ since, whenever both $x$ and $y$ terminate,
they are either both stuck or both terminal.
In fact, equality and inequality are preserved
while adding $\bottom$ since $\Phi_{x,a} = \emptyset \implies
\bottom \notin \Phi_{y,a}$ (and vice versa) by hypothesis.
For each $C\in X/R$ ($x,y \in X$ and $a \in A$) let $\Psi_{x,a,C} \defeq 
(\Phi_{x,a}\setminus\{[\rho]_{R_{\mathfrak{W}}}\mid\rho(C) = 0\})\cup\{\bottom\}$.
Clearly $\Phi_{x,a} \cup = \bigcup_{C \in X/R} \Psi_{x,a,C}$
and if $(x,y) \in R$ then $\Psi_{x,a,C} = \Psi_{y,a,C}$. 
Complementarly, if $(x,y) \notin R$
then there exists some $\phi \in \Phi_{x,a}$ s.t.~for no $\psi \in \Phi_{y,a}$
$\phi R_{\mathfrak{W}} \psi$ or vice versa; w.l.o.g.~assume the former.
Hence there exists $C \in X/R$ such that $\phi(C) \neq \psi(C)$
whence $\Psi_{x,a,C} \neq \Psi_{y,a,C}$. Finally, we conclude by
$\mathcal M(x,a,C) = \Psi_{x,a,C}$ for each $x \in X$, $a\in A$ and $C \in X/R$.
\end{proofatend}

Intuitively, Definition~\ref{def:bisim} generalises strong bisimulation
for Segala systems (Segala and Lynch's probabilistic bisimilarity \cite{sl:njc95}) 
and $\mathcal{M}$-bisimulation generalises convex bisimulation
\cite{denicola13:ultras}.

\section{WF-GSOS: A complete GSOS format for ULTraSs}\label{sec:WF-GSOS}
In this section we introduce the \emph{Weight Function SOS} 
specification format for the syntactic presentation of ULTraSs.  As it
will be proven in Section~\ref{sec:cong-proof}, bisimilarity for
systems given in this format is guaranteed to be a congruence with
respect to the signature used for representing processes.

The format is parametric in the weight monoid $\mathfrak W$ and, as usual,
in the \emph{process signature} $\Sigma$ defining the syntax of system
processes.  In contrast with ``classic'' GSOS formats
\cite{klin:tcs11}, targets of rules are not processes but terms whose
syntax is given by a different signature, called the \emph{weight
  signature}.  This syntax can be thought of as an ``intermediate
language'' for representing weight functions along the line of viewing
ULTraSs as stratified (or staged) systems. An early example of this
approach can be found in \cite{bm:2015stocsos}, where targets are terms
representing measures over the continuous state space.
Earlier steps in this direction can be found e.g.~in Bartels' GSOS
format for Segala systems (cf.~\cite[§5.3]{bartels04thesis} and \cite[§4.2]{mp:qapl14}) or in
\cite{cm:quest10,denicola13:ultras} where targets are described by
meta-expressions.

\begin{definition}[WF-GSOS Rule]\label{def:wf-gsos-rule}
	Let $\mathfrak W$ be a commutative monoid and $A$ a set of labels. 
	Let $\Sigma$ and $\Theta$ be the \emph{process signature} and 
	the \emph{weight signature}, respectively.
	A WF-GSOS rule over them is a rule of the form: 
	\[\frac{
		\begin{array}{c}
			\Big\{
			x_i \xrightarrowu{a} \phi^a_{ij}
			\Big\}
			\hspace{-1.2ex}\begin{array}{l}
				\scriptstyle  1 \leq i \leq n,\\[-4pt]
				\scriptstyle  a \in A_i,\\[-4pt]
				\scriptstyle  1\leq j \leq m^a_i
			\end{array}
			\quad
			\Big\{
			x_i \centernot{\xrightarrowu{b}}
			\Big\}
			\hspace{-1.2ex}\begin{array}{l}
				\scriptstyle  \\[-4pt]
				\scriptstyle  1 \leq i \leq n,\\[-4pt]
				\scriptstyle  b \in B_i
			\end{array}
			\quad
			\Big\{
			\totalweight{\phi^{a_k}_{i_kj_k}} = w_k
			\Big\}
			_{1 \leq k \leq p}
			\quad
			\Big\{
			\corestr{\phi^{a_k}_{i_kj_k}}{\mathfrak C_k} \ni y_k
			\Big\}
			_{1 \leq k \leq q}
			\end{array}
		}{
			\mathtt{f}(x_1,\dots,x_n) \xrightarrowu{c} \psi
		}\]
	where:
	\begin{itemize}\itemsep=0pt
		\item 
			$\mathtt f$ is an $n$-ary symbol from $\Sigma$;
		\item 
			$X = \{x_i\mid 1 \leq i \leq n\}$, 
			$Y = \{y_k\mid 1 \leq k \leq q\}$ 
			are sets of pairwise distinct \emph{process} variables;
		\item 
			$\Phi = \{\phi^a_{ij}\mid 1 \leq i \leq n,\ a \in A_i,\ 1\leq j \leq m^a_i\}$
			is a set of pairwise distinct  \emph{weight function} variables;
		\item
			$\{w_k \in \mathfrak W \mid 1 \leq k \leq p\}$ are 
			\emph{weight constants};
		\item
			$\{\mathfrak C_k \mid 1 \leq k \leq q,\, w_k \in \mathfrak C_k\}$ is a set of \emph{clubs} of $\mathfrak W$,
			i.e.~subsets of $W$ being monoid ideals whose complements are sub-monoids of $\mathfrak W$;
		\item 
			$a,b,c \in A$ are labels and $A_i \cap B_i = \emptyset$ for  
			$1 \leq i \leq n$;
		\item
			$\psi$ is a \emph{weight term} for the signature $\Theta$ 
			such that $\mathrm{var}(\psi) \subseteq X \cup Y \cup \Phi$.
	\end{itemize}
	A rule like above is \emph{triggered} by 
		a tuple $\langle C_1,\dots,C_n\rangle$ of \emph{enabled labels} and 
		by a tuple $\langle v_1,\dots,v_p \rangle$ of weights 
	if, and only if, 
		$A_i \subseteq C_i$, $B_i \cap C_i = \emptyset$, and
		$w_j = v_j$
	for 
		$1 \leq i \leq n$ and
		$1 \leq j \leq p$.
\end{definition}
Intuitively, the four families of premises can be grouped in two
kinds: the first two families correspond to the non-deterministic (and
labelled) behaviour, whereas the other two correspond to the weighting
behaviour of quantitative aspects.  The former are precisely the
premises of GSOS rules for LTSs (up-to targets being functions), and
describe the possibility to perform some labelled transitions.  The
latter are inspired by Bartels' \emph{Segala-GSOS}
\cite[§5.3]{bartels04thesis} and Klin's WGSOS \cite{ks2013:w-s-gsos} formats;
a premise like $\totalweight{\phi} = w$ constrains the variable
$\phi$ to those functions whose total weight is exactly the constant $w$;
a premise like $\corestr{\phi}{\mathfrak C}\ni y$ binds the process variable
$y$ to those elements being assigned a weight in $\mathfrak C$.
This kind of premises are meant to single out elements from weight 
functions domain in a way that is coherent w.r.t.~function actions (hence
independent from carrier maps and variable substitutions). 
To this end, selection may depend on weights only and has to be 
unaffected by sums, i.e., $z = f(x) = f(x')$ is selected if and only if at 
least $x$ or $x'$ is.
Clubs are the finest substructures of commutative monoids that are 
``isolated'' w.r.t. the monoidal operation in the sense that:
\begin{itemize}
	\item
		are commutative monoid ideals, i.e.~subsets $\mathfrak C$ with a module structure;
	\item
		their complement $\overline{\mathfrak C}$ in $\mathfrak W$ is a sub-monoid of $\mathfrak W$.
\end{itemize}
Because of the first assumption 
$v+w \in \mathfrak C \implies v \in \mathfrak C \lor w \in \mathfrak C$
and because of the second
$v,w \notin \mathfrak C \implies v+w \notin \mathfrak C$
In other words, if something is selected depending on its weight, no matter what is added to, it will remain selected and vice versa:
$v+w \in \mathfrak C \iff v \in \mathfrak C \lor w \in \mathfrak C$.
Note that no club can contain the unit $0$ (otherwise $\overline{\mathfrak C} = \emptyset$) and this ensures selections
to be confined within the weight function supports (hence to be finite).
\begin{remark}
	The empty set trivially is a club.
	Not all complements of submonoids are clubs, for instance 
	even natural numbers under addition are a submonoid of 
	$(\mathbb{N},+,0)$ but odd numbers are not a club;
	the only non-empty club in $(\mathbb{N},+,0)$ is $\mathbb{N}\setminus\{0\}$.
	Elements with an opposite cannot be part of a club: if
	$x \in \mathfrak C$ then $x+(-x) = 0$ is in $\mathfrak C$
	and hence $\overline{\mathfrak C}$ cannot be a submonoid of $\mathfrak W$.
\end{remark}

Like Segala-GSOS (but unlike WGSOS), there are no
variables denoting the weight of each $y_k$ since this information can be
readily extracted from $\phi^{a_k}_{i_k j_k}$, e.g.~by some
operator from $\Theta$ that ``evaluates'' $\phi^{a_k}_{i_k j_k}$
on $y_k$.
Targets of transitions defined by these rules are terms generated from
the signature $\Theta$.  In order to characterize transition relations
for ULTraSs, we need to \emph{evaluate} these terms to weight
functions.  This is obtained by adding an \emph{interpretation for
weight terms}, besides a set of rules in the above format.

Before defining interpretations and specifications, we need to
introduce some notation.  For a signature $S$ and a set $X$ of
variable symbols, let $T^S X$ denote the set of terms freely
generated by $S$ over the variables $X$ (in the
following, $S$ will be either $\Sigma$ or $\Theta$).  A substitution
for symbols in $X$ is any function $\sigma:X \to Y$;
its action extends to terms defining the function $T^S(\sigma) :
T^S X \to T^S Y$ (i.e.~simultaneous substitution). When confusion
seems unlikely we use the more evocative notation $\mathtt t[\sigma]$ instead of
$T^S(\sigma)(\mathtt t)$.

\begin{definition}[Interpretation]\label{def:wf-gsos-eval}
Let $\mathfrak W$ be a commutative monoid, $\Sigma$ and $\Theta$ be 
the process and the weight signature respectively. A
\emph{weight term interpretation}
for them is a family of functions 
\[
  \llbrace\hole\rrbrace_X : T^\Theta(X + \fw(X)) \to \pf\fw
  T^\Sigma(X)
\]
indexed over sets of variable symbols, and respecting substitutions, i.e.:
\[
	\forall \sigma : X \to Y, \psi\in T^\Theta(X):
	\llbrace\psi\rrbrace_X[\sigma] = 
	\llbrace \psi[\sigma] \rrbrace_Y
\text{.}\]
\end{definition}
Different from \cite{mp:qapl14} interpretations allow one term to
represent finitely many weight functions. This generalization offers 
more freedom in the use of the format by reducing the
constrains on what can be encoded in weight function terms
and simplifies the proof for completeness.

We are ready to introduce the WF-GSOS specification format. 
Basically, this is a set of WF-GSOS rules,
subject to some finiteness conditions to ensure image-finiteness,
together with an interpretation. 
\begin{definition}[WF-GSOS specification]\label{def:wf-gsos-spec}
  Let $\mathfrak W$ be a commutative monoid, $A$ the set of labels, $\Sigma$
  and $\Theta$ the process and the weight signature respectively.
  An \emph{image-finite WF-GSOS specification over 
  $\mathfrak W, A, \Sigma$ and $\Theta$} is a pair $\langle\mathcal R,
  \llbrace\hole\rrbrace\rangle$ where
  $\llbrace\hole\rrbrace$ is a weight term interpretation
  and $\mathcal R$ is a set of rules compliant with
  Definition~\ref{def:wf-gsos-rule} and such that only finitely many
  rules share the same operator in the source ($\mathtt f$), the same
  label in the conclusion ($c$), and the same trigger $\langle
  A_1,\dots,A_n\rangle$, $\langle w_1,\dots,w_p\rangle$.
\end{definition}

Every WF-GSOS specification induces an 
ULTraS over ground process terms.
\begin{definition}[Induced ULTraS]\label{def:induced-ultras}
	The ULTraS induced by an image-finite WF-GSOS specification
	$\langle\mathcal R,\llbrace\hole\rrbrace\rangle$ over $\mathfrak
	W, \Sigma, \Theta$ is the $\mathfrak W$-ULTraS $(T^\Sigma\emptyset, A,
	\rightarrowu)$ where $\rightarrowu$ is defined as the smallest subset
	of $T^\Sigma\emptyset
	\times A \times \fw T^\Sigma\emptyset$ being closed under the following condition.

	Let $p = \mathtt{f}(p_1, \dots, p_n) \in T^\Sigma\emptyset$. Since the ground 
	$\Sigma$-terms $p_i$ are structurally smaller than $p$ assume 
	   (by structural recursion)
	that the set $\{\rho \mid p_i \xrightarrowu{a} \rho\}$ --
	and hence the trigger $\vec{A} = \langle A_1,\dots,A_n\rangle$, 
	$\vec{w} = \langle w_1,\dots,w_q\rangle$ --
	is determined for every $i \in \{1,\dots,n\}$ and $a\in A$.
	For any rule $R \in \mathcal R$ whose conclusion is of the form
	$\mathtt{f}(x_1,\dots,x_n) \xrightarrowu{c} \psi$ and triggered by 
	$\vec A$ and $\vec w$
	let $X$, $Y$, $\Phi$ be the set of process and weight function
	variables involved in $R$ as per Definition~\ref{def:wf-gsos-rule}.
	Then, for any substitution $\sigma:X\cup Y  \to  T^\Sigma\emptyset$
	and map $\theta : \Phi \to \fw T^\Sigma\emptyset$ such that:
	\begin{enumerate}
		\item
			$\sigma(x_i) = p_i$ for $x_i \in X$;
		\item
		$\theta(\phi^a_{ij}) = \rho$ 
			for each premise $x_i\xrightarrowu{a}\phi^a_{ij}$ 
			and $\totalweight{\phi^{a}_{ij}} = w_k$ of $R$, 
			and for any $\rho$ such that $p_i\xrightarrowu{a} \rho$ 
			and $\totalweight{\rho} = w_k$;
		\item
			$\sigma(y_k) = q_k$ for each premise 
			$\corestr{\phi^{a_k}_{i_k j_k}}{\mathfrak C_k}\ni y_k$ 
			of $R$ and for any $q_k\in T^\Sigma\emptyset$ s.t.~			$\theta(\phi^{a_k}_{i_k j_k})(q_k) \in \mathfrak C_k$;
	\end{enumerate}
	there is $p \xrightarrowu{c} \rho$ where 
	$\rho \in \llbrace \psi[\theta]\rrbrace_{X\cup Y}[\sigma]$
	is an instantiated interpretation of the target $\Theta$-term $\psi$.
\end{definition}

The above definition is well-defined since it is based on structural recursion
over ground $\Sigma$-terms (i.e.~the process $p$ in each triple $(p,a,\rho)$);
in particular, terms have finite depth and only structurally smaller terms
are used by the recursion (i.e.~the assumption of $p_i \xrightarrowu{a} \rho$
being defined for each $p_i$ in $p = \mathtt{f}(p_1, \dots, p_n)$).
Moreover, for any trigger, operator, and conclusion label 
only finitely many rules have to be considered.

Finally we can state the main result for the proposed format.
\begin{theorem}[Congruence]\label{th:congruence-1}
	The bisimulation on the ULTraS induced by a
	WF-GSOS specification is a congruence with respect 
	to the process signature.
\end{theorem}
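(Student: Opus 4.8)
The plan is to deploy the bialgebraic machinery of Turi and Plotkin, relying on the coalgebraic characterisation of ULTraSs developed in Section~\ref{sec:coalg}. Recall from there that image-finite $\mathfrak W$-ULTraSs with a fixed label set $A$ are exactly the coalgebras for the behaviour functor $B \defeq (\pf\fw\hole)^A$ on \set, that ULTraS homomorphisms are precisely $B$-coalgebra morphisms, and that the bisimulation of Definition~\ref{def:bisim} coincides with behavioural equivalence (kernel bisimulation) for $B$. Writing $\Sigma$ also for the polynomial \set-endofunctor associated with the process signature and $T^\Sigma$ for the induced free monad, the whole statement reduces to exhibiting the specification as an \emph{abstract GSOS distributive law}
\[
  \lambda : \Sigma(\id\times B) \Rightarrow B\,T^\Sigma\text{,}
\]
since any such $\lambda$ lifts the initial algebra $T^\Sigma\emptyset$ to a $\lambda$-bialgebra whose $B$-part is the induced coalgebra, while the final $B$-coalgebra carries a canonical $\lambda$-bialgebra structure. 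Behavioural equivalence on the induced coalgebra is then the kernel of the unique bialgebra morphism into this final object; because that morphism is simultaneously a $\Sigma$-algebra homomorphism, its kernel is a congruence, which is exactly the assertion of the theorem.

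The substantive work is therefore the construction of $\lambda$ from $\langle\mathcal R,\llbrace\hole\rrbrace\rangle$ and the verification that it is well defined and natural. First I would fix a set $X$ and an element $\mathtt f(\langle x_1,d_1\rangle,\dots,\langle x_n,d_n\rangle)$ of $\Sigma(X\times BX)$, reading each $d_i \in (\pf\fw X)^A$ as the already-computed one-step behaviour of the argument $x_i$; this datum determines, for each $i$ and $a$, the finite set $\{\phi \mid \phi \in d_i(a)\}$ together with the triggers $\langle A_1,\dots,A_n\rangle$ and the available total weights, exactly as in Definition~\ref{def:induced-ultras}. For each conclusion label $c$ I would collect the finitely many rules of $\mathcal R$ (finiteness guaranteed by Definition~\ref{def:wf-gsos-spec}) matching this operator, label and trigger, range over the valuations $\theta$ of the weight-function variables and $\sigma$ of the process variables $y_k$ permitted by the premises, and output the union over all such matches of the interpreted targets $\llbrace\psi[\theta]\rrbrace_{X\cup Y}[\sigma]\subseteq\fw T^\Sigma X$. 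Collecting these over $c\in A$ yields the component $\lambda_X$ with codomain $B\,T^\Sigma X = (\pf\fw T^\Sigma X)^A$; image-finiteness of the output follows from the finiteness condition on rules, finite support of the $\phi$'s, and the fact that $\llbrace\hole\rrbrace$ lands in $\pf\fw T^\Sigma(\hole)$.

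The delicate point -- and the step I expect to be the main obstacle -- is naturality of $\lambda$ in $X$, i.e.\ commutation with the action of an arbitrary $g : X \to Z$. Two independent invariances are needed here, and both are exactly what the format was designed to secure. On the weighting side, the positive premises $\corestr{\phi^{a_k}_{i_kj_k}}{\mathfrak C_k}\ni y_k$ must select the same targets before and after applying $g$; this is where the \emph{club} hypothesis is essential, since $\fw(g)(\rho)(z) = \sum_{x\in g^{-1}(z)}\rho(x)$ lies in $\mathfrak C_k$ if and only if some summand does, precisely because a club is a monoid ideal whose complement is a submonoid (giving $v+w\in\mathfrak C \iff v\in\mathfrak C \lor w\in\mathfrak C$). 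Likewise the total-weight premises $\totalweight{\phi}=w_k$ are stable under $\fw(g)$ because $\fw$ preserves total weight, so the set of triggered rules is unchanged. On the syntactic side, one needs that transporting $\llbrace\psi[\theta]\rrbrace_{X\cup Y}[\sigma]$ along $g$ agrees with the interpretation recomputed over $Z$, which is exactly the substitution-compatibility demanded in Definition~\ref{def:wf-gsos-eval}. Granting naturality, a direct induction on the structure of ground terms identifies the $\lambda$-operational model with the ULTraS of Definition~\ref{def:induced-ultras}, after which the bialgebraic argument of the first paragraph applies verbatim and the theorem follows.
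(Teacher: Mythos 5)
Your proposal is correct and follows essentially the same route as the paper: the coalgebraic characterisation of ULTraSs and their bisimulations (Propositions~\ref{prop:ultras-as-coalgebras} and~\ref{prop:bisim-correspondence}), the construction of a distributive law $\lambda$ from the specification, with naturality secured by the club premises, preservation of total weights under $\fw$-action, and substitution-compatibility of $\llbrace\hole\rrbrace$ (this is precisely the paper's soundness theorem, Theorem~\ref{thm:wf-gsos-soundness}, including the identification of $h_\lambda$ with the induced ULTraS), followed by the Turi--Plotkin bialgebraic argument of Section~\ref{sec:cong-proof}. Your naturality discussion is, if anything, more explicit than the paper's, which handles the non-deterministic component by deferring to \cite[Th.~1.1]{tp97:tmos}.

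There is, however, one step you assert without justification: the existence of the final $(\pf\fw)^A$-coalgebra. Your argument genuinely needs it --- the congruence is extracted as the kernel of the unique bialgebra morphism into the final $\lambda$-bialgebra, whose carrier is the final $B$-coalgebra --- and existence is not automatic for functors of this shape: the unrestricted analogues (full powerset, or weight functions with arbitrary support) admit no final coalgebra at all, so image-finiteness is doing real work here. The paper isolates exactly this point as Proposition~\ref{prop:final-coalg}: $\fw$ is finitary by construction, $\pf \cong \ff{2}$, finitarity is preserved by composition, and every finitary \set-endofunctor admits a final coalgebra by Barr's theorem. Adding that observation closes the gap. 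The only other step you gloss over --- extending $\lambda$ from a law over the plain functors to a law of the free monad $T^\Sigma$ over the cofree copointed functor, so that \cite[Cor.~7.3]{tp97:tmos} applies --- is standard and is treated just as briefly in the paper.
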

The proof is postponed to Section~\ref{sec:cong-proof},
where we will take advantage of the bialgebraic framework.

\begin{remark}[Expressing interpretations]\rm
  Weight term interpretation can be defined in many
  ways, e.g.~by structural recursion on $\Theta$-terms.
  For instance, every substitution-respecting family of maps:
\[
  h_X : \Theta\fw T^\Sigma(X) \to \pf\fw T^\Sigma(X)
  \qquad 
  b_X : X \to \pf\fw T^\Sigma(X)
\]
uniquely extends to an interpretation by structural recursion on
$\Theta$-terms where $h_X$ and $b_X$ define the inductive and base
cases respectively. These maps can be easily given by means of a set
of equations, as in~\cite[§4.1]{mp:qapl14}.
\end{remark}

\section{Examples and applications of WF-GSOS specifications}
\label{sec:examples}
In this section we provide some examples of applications of the WF-GSOS format.
First, we show how a process calculus can be given a WF-GSOS
specification; in particular, we consider PEPA, a well known process
algebra with quantitative features.  Then we show that Klin's Weighted
GSOS format for weighted systems \cite{ks2013:w-s-gsos} and Bartels'
Segala-GSOS format for Segala systems \cite{bartels04thesis} are
subsumed by our WF-GSOS format; this corresponds to the fact that
ULTraSs subsume both weighted and Segala systems.

\subsection{WF-GSOS for PEPA}\label{sec:pepa-WF-GSOS}
In PEPA \cite{hillston:pepabook,hillston05}, processes are terms over the grammar:
\begin{equation}
  \label{eq:pepa-grammar}
  P ::= (a,r).P \mid P + P \mid P \pepasync{L} P 
  \mid P\setminus L 
\end{equation}
where $a$ ranges over a fixed set of labels $A$, $L$ over subsets of $A$ and
$r$ over $\mathbb R^+$.
The semantics of process terms is usually defined by the inference rules
in Figure~\ref{fig:pepa-classic-sos}
\begin{figure}
\[\begin{array}{c}
  \frac{}{(a,r).P \xrightarrowg{a,r} P}
\quad
  \frac{P_1 \xrightarrowg{a,r} Q}{P_1 + P_2 \xrightarrowg{a,r} Q}
\quad
  \frac{P_2 \xrightarrowg{a,r} Q}{P_1 + P_2 \xrightarrowg{a,r} Q}
\quad
  \frac{P \xrightarrowg{a,r} Q}{P \setminus L \xrightarrowg{a,r} Q
  }\  a \notin L
\quad  
  \frac{P \xrightarrowg{a,r} Q}{
    P \setminus L \xrightarrowg{\tau,r} Q
  }\ a \in L
\\
  \frac{
    P_1 \xrightarrowg{a,r_1}Q_1 \quad P_2 \xrightarrowg{a,r_2}Q_2}{
    P_1 \pepasync{L}P_2 \xrightarrowg{a,R} Q_1 \pepasync{L} Q_2
  }\ a \in L
\quad
  \frac{P_1 \xrightarrowg{a,r}Q}{
    P_1 \pepasync{L}P_2 \xrightarrowg{a,r} Q \pepasync{L} P_2
  }\ a \notin L
\quad
  \frac{P_2 \xrightarrowg{a,r}Q}{
    P_1 \pepasync{L}P_2 \xrightarrowg{a,r} P_1 \pepasync{L} Q
  }\ a \notin L
\end{array}\]\vspace{-1ex}
\caption{Structural operational semantics for PEPA.}
\label{fig:pepa-classic-sos}
\end{figure}
where $a \in A$, $r,r_1,r_2,R \in \mathbb R^+$ 
(passive rates are omitted for simplicity) and $R$ depends only on
$r_1$, $r_2$ and the intended meaning of synchronisation.
For instance, in applications to performance evaluation 
\cite{hillston:pepabook}, rates model time and $R$ is 
defined by the \emph{minimal rate law}:
\begin{equation}
\label{eq:minimal-rate-law}
  R = \frac{r_1}{r_a(P_1)}\cdot
  \frac{r_2}{r_a(P_2)}\cdot
  \min(r_a(P_1),r_a(P_2))
\end{equation}
where $r_a$ denotes the apparent rate of $a$ \cite{hillston:pepabook}.

PEPA can be characterized by a specification in the WF-GSOS format where
the process signature $\Sigma$ is the same as \eqref{eq:pepa-grammar}
and weights are drawn from the monoid of positive real numbers under
addition extended with the $+\infty$ element (only for technical
reasons connected with the ${\llbrace\hole\rrbrace}$ and
process variables---differently from other stochastic process algebras
like EMPA \cite{bg98:empa}, PEPA does not allow instantaneous actions,
i.e.~with rate $+\infty$).  The intermediate language of weight terms
is expressed by the grammar:
\[
\theta ::= \bottom \mid \diamondsuit_r(\theta) \mid \theta_1\oplus \theta_2 \mid
\theta_1\parallel_L \theta_2 \mid \xi \mid P
\]
where $r\in \mathbb R^+_0$, $L \subseteq A$, $\xi$ range over weight
functions $\fw X$, and $P$ over processes in $T^\Sigma X$ for some set $X$. 
Note that the grammar is untyped
since it describes the terms freely generated by the signature $\Theta
= \{\bottom:0,\diamondsuit_r:1,\oplus:2,\parallel_L:2\}$, over
weight function variables and processes.  Intuitively $\bottom$ is
the constantly $0$ function, $\diamondsuit_r$ reshapes its argument to
have total weight $r$, $\oplus$ is the point-wise sum and
$\parallel_L$ parallel composition e.g.~by
\eqref{eq:minimal-rate-law}. The formal meaning of these operators is
given below by the definition (by structural recursion on
$\Theta$-terms) of the interpretation
${\llbrace\hole\rrbrace}$ which is introduced alongside
WF-GSOS rules for presentation convenience.  Each operator is
interpreted as a singleton (PEPA describes functional ULTraSs) and
hence we will describe ${\llbrace\hole\rrbrace}$ as if a
weight function is returned.

For each action $a\in A$ and rate $r \in \mathbb R^+$, a process
$(a,r).P$ presents exactly one $a$-labelled transition ending in the
weight function assigning $r$ to the (sub)process denoted by the variable $P$ 
and $0$ to everything else. Hence, the \emph{action axiom} is expressed as
follows:
\[
  \frac{}{(a,r).P \xrightarrowu{a} \diamondsuit_r(P)}
  \qquad
  \llbrace\diamondsuit_r(\psi)\rrbrace_{X}(t) = 
    \begin{cases}
      \frac{r}{
        \raisebox{-1pt}{$\scriptstyle\left|
          \raisebox{-1pt}{$\scriptstyle
          \support{\llbrace\psi\rrbrace_{X}\!}
         $}\right|$}} & \text{if } \llbrace\psi\rrbrace_{X}(t)\neq 0\\
      0 & \text{otherwise}
    \end{cases}
\]
where $\diamondsuit_r$ normalises\footnote{Since the interpretation
  $\llbrace\hole\rrbrace$ is being defined by structural
  recursion and has to cover all the language freely generated from
  $\Theta$, we can not use the (slightly more intuitive) ``Dirac''
  operator $\delta_r(P)$ where $P$ is restricted to be a process
  variable instead of a $\Theta$-term. Likewise, indexing
  $\delta_{r,P}$ also over processes would break substitution
  independence i.e.~naturality.} $\llbrace P \rrbrace_{X}$
to equally distribute the weight $r$ over its support; in particular,
since process variables will be interpreted as ``Dirac-like''
functions $\diamondsuit_r(P)$ corresponds to the weight function
assigning $r$ to $\Sigma$-term denoted by $P$.

Conversely to the action axiom, $(a,r).P$ can not perform any action but $a$:
\[
  \frac{}{(a,r).P \xrightarrowu{b} \bottom}\ a \neq b
  \qquad
  \llbrace \bottom \rrbrace_{X}(t) = 0
\]
This rule is required to obtain a functional ULTraS and 
is implicit in Figure~\ref{fig:pepa-classic-sos}
where disabled transitions are assumed with rate $0$
as in any specification in the Stochastic GSOS or Weighted 
GSOS formats. Without this rule, transitions would have 
been disabled in the non-deterministic layer 
i.e.~$(a,r).P\centernot{\xrightarrowu{b}}$.

Stochastic choice is resolved by the stochastic race, hence the rate
of each competing transition is added point-wise as in
Figure~\ref{fig:pepa-classic-sos} (and in the SGSOS and WGSOS
formats).  This passage belongs to the stochastic layer of the
behaviour (hence to the interpretation, in our setting) whereas the
selection of which weight functions to combine is in the
non-deterministic behaviour represented by the rules and, in
particular, to the labelling. Therefore, the \emph{choice rules}
become:
\[
  \frac{P_1 \xrightarrowu{a} \phi_1 \quad P_2 \xrightarrowu{a} \phi_2}{
    P_1 + P_2 \xrightarrowu{a} \phi_1 \oplus \phi_2
  }\qquad
  \llbrace \psi \oplus \phi \rrbrace_{X}(t) = 
    \llbrace \psi \rrbrace_{X}(t) + 
    \llbrace \phi \rrbrace_{X}(t)
\]
Likewise, process cooperation depends on the labels to select the
weight function to be combined. This is done in the next two rules:
one when the two processes cooperate, and the other when one process
does not interact on the channel:
\begin{gather*}
 \frac{P_1 \xrightarrowu{a} \phi_1 \quad P_2 \xrightarrowu{a} \phi_2}{
    P_1 \pepasync{L} P_2 \xrightarrowu{a} \phi_1 \parallel_{L} \phi_2
  } a \in L
\quad
  \frac{P_1 \xrightarrowu{a} \phi_1 \quad P_2 \xrightarrowu{a} \phi_2}{
    P_1 \pepasync{L}P_2 \xrightarrowu{a}  
    (\phi_1 \parallel_{L} P_2) \oplus
    (P_1 \parallel_{L} \phi_2)
  } a \notin L
\end{gather*}
The combination step depends on the minimal rate law \eqref{eq:minimal-rate-law}:
\[
\llbrace \psi \parallel_{L} \phi \rrbrace_{X}(t) = 
\begin{cases}
  \frac{\llbrace \psi \rrbrace_{X}(t_1)}{
    \totalweight{\llbrace \psi \rrbrace_{X}}}\cdot
  \frac{\llbrace \phi \rrbrace_{X}(t_2)}{
    \totalweight{\llbrace \phi \rrbrace_{X}}}\cdot
  \min(\totalweight{\llbrace \psi \rrbrace_{X}},\totalweight{\llbrace
  \phi \rrbrace_{X}}) & \text{if } t = t_1 \pepasync{L} t_2\\
  0 & \text{otherwise}
\end{cases}
\]

Each process is interpreted as a weight function over process terms.
This is achieved by a Dirac-like function assigning $+\infty$ to the
$\Sigma$-term composed by the aforementioned variable: $\llbrace
P \rrbrace_{X}(t) = +\infty$ if $P=t$, 0 otherwise.  The
infinite rate characterizes instantaneous actions as if all the mass
is concentrated in the variable; e.g., in interactions based on the
minimal rate law, processes are not consumed.  For the same reason, if
we were dealing with concentration rates and the multiplicative law, we
would assign $1$ to $P$.

The remaining rules for hiding
are straightforward:
\[
\frac{P \xrightarrowu{a} \phi}
{P \setminus L \xrightarrowu{a} \phi}
\  a \notin L
\qquad
\frac{P \xrightarrowu{a} \phi}
{P \setminus L \xrightarrowu{\tau} \phi}
\ a \in L
\]

This completes the definition of $\llbrace\hole\rrbrace$
by structural recursion and hence the WF-GSOS specification of PEPA.  It
is easy to check that the induced ULTraS is functional and correspond
to the stochastic system of PEPA processes, that bisimulations on it
are stochastic bisimulations (and vice versa) and that bisimilarity is
a congruence with respect to the process signature.

\subsection{Segala-GSOS}
In \cite{bartels04thesis}, Bartels proposed a GSOS specification
format\footnote{Segala-GSOS specifications yield distributive
  laws for Segala systems but
  it still is an open problem whether every such distributive law
  arises from some Segala-GSOS specification.} for Segala systems
(hence Segala-GSOS), i.e.~ULTraS where weight functions are exactly
probability distributions. We recall Bartels' definition, with minor
notational differences.%
\begin{definition}[{\cite[§5.3]{bartels04thesis}}]
A \emph{GSOS rule for Segala systems} is a rule of the form
\vspace{-1ex}\[
\vspace{-1ex}
\frac{\left\{x_i \xrightarrow{a} \phi^a_{ij}\right\}_{
  1 \leq i \leq n,\ 
  a \in A_i,\ 
  1\leq j \leq m^a_i
}
\quad 
\left\{x_i \centernot{\xrightarrow{b}}\right\}_{
  1 \leq i \leq n,\ 
  b \in B_i
}
\quad 
\left\{\phi^a_{ij} \rightarrows y_k\right\}_{
  1 \leq k \leq q
}}
{\mathtt{f}(x_1,\dots,x_n) \xrightarrow{c} w_1\cdot t_1 + \dots + w_m \cdot t_m}
\]

\vspace{-.5ex}where:
\begin{itemize}\itemsep=-1pt
\item 
  $\mathtt f$ is an $n$-ary symbol from $\Sigma$;
\item 
  $X = \{x_i\mid 1 \leq i \leq n\}$, $Y = \{y_k\mid 1 \leq k \leq q\}$, and 
  $V = \{\phi^a_{ij}\mid 1 \leq i \leq n,\ a \in A_i,\ 1\leq j \leq m^a_i\}$
  are pairwise distinct \emph{process} and \emph{probability distribution} variables
  respectively;
\item 
  $a,b,c \in A$ are labels and $A_i \cap B_i = \emptyset$ for any 
  $i \in \{1,\dots ,n\}$;
\item
  $t_1,\dots, t_m$ are target terms on variables
  $X$, $Y$ and $V$; the latter are associated with colours from
  a finite palette to indicate different instances;
\item
  $\{w_i \in (0,1] \mid 1 \leq i \leq m\}$ describe a linear composition of the
  targets terms i.e.~are weights associated
  to the target terms and such that $w_1 + \dots + w_m = 1$.
\end{itemize}
A rule like above is \emph{triggered} by a tuple $\langle
C_1,\dots,C_n\rangle$ of \emph{enabled labels} if, and only if, $A_i \subseteq C_i$ and $B_i \cap C_i = \emptyset$ for each $i \in \{1,\dots ,n\}$.  A \emph{GSOS
  specification for Segala systems} is a set of rules in the above
format containing finitely many rules for any source symbol $\mathtt f$,
conclusion label $c$ and trigger $\vec C$.
\end{definition}

Segala-GSOS specifications can be easily turned into WF-GSOS ones.  The
first two families of premises are translated straightforwardly to the
corresponding ones in our format; the third can be turned into those
of the form $\support{\phi} \ni y$.  Targets of transitions describe
finite probability distributions and are evaluated to actual
probability distributions by a fixed interpretation of a form similar
to Definition~\ref{def:wf-gsos-eval}.  Some care is needed to
handle copies of probability variables.
In practice, duplicated variables are expressed by adding
``colouring'' operators to $\Theta$; their number is finite
and depends only on the set of rules since multiplicities
are fixed and finite for rules in the above format.
Let $\tilde V$ be the set of ``coloured'' variables from $V$ where the colouring is
used to distinguish duplicated variables (cf. \cite[§5.3]{bartels04thesis}).
Given a substitution $\nu$ from $\tilde V$ to (finite) probability distributions 
over $T^\Sigma(X + Y)$, each $t_i$ is interpreted as the probability distribution:
\vspace{-.5ex}\[
\vspace{-.5ex}
   \tilde t_i(t) \defeq 
   \begin{cases}
     \prod^{|\tilde V \cap var(t_i)|}_{k = 1} \nu(\phi_{k})(t_k) &
       \text{if $t = t_i[\phi_{k}/t_k]$ for $t_k \in T^\Sigma(X+Y)$} \\
     0 & \text{otherwise}
   \end{cases}
\]
and each target term $w_1\cdot t_1 + \dots + w_m \cdot t_m$ is
interpreted as the convex combination of $\tilde t_1,\dots,\tilde t_m$.

\subsection{Weighted GSOS}
In \cite{ks2013:w-s-gsos}, Klin and Sassone proposed a GSOS
format\footnote{Weighted GSOS specifications are proved to yield GSOS
  distributive laws for Weighted LTSs but it is currently an open
  question whether the format is also complete.  }  for Weighted LTSs
that is parametric in the commutative monoid $\mathfrak W$ and hence called
$\mathfrak W$-GSOS. The format subsumes many known formats for systems
expressible as $WLTS$: for instance, Stochastic GSOS specifications
are in the $\mathbb R^+_0$-GSOS format and GSOS for LTS are in the $\mathbb
B$-GSOS format where $\mathfrak 2 = (\{\ltrue,\lfalse\},\lor,\lfalse)$.
\begin{definition}[{\cite[Def.~13]{ks2013:w-s-gsos}}]\label{def:wgsos-rule}
A $\mathfrak W$-GSOS rule is an expression of the form: 
\vspace{-1ex}\[
\vspace{-1ex}
\frac{\left\{x_i \xrightarrowt{a} w_{ai}\right\}_{
  1 \leq i \leq n,\ 
  a \in A_i
}
\quad 
\left\{x_{i_k} \xrightarrowu{b_k,u_k} y_{k}\right\}_{
  1 \leq k \leq m
}}
{\mathtt{f}(x_1,\dots,x_n) \xrightarrowu{c,\beta(u_1,\dots,u_m)} t}
\]
where:
\vspace{-.5ex}
\begin{itemize}\itemsep=-1pt
\item 
  $\mathtt f$ is an $n$-ary symbol from $\Sigma$;
\item 
  $X = \{x_i\mid 1 \leq i \leq n\}$,
  $Y = \{y_k\mid 1 \leq k \leq m\}$ and
  $\{u_k\mid 1 \leq k \leq m\}$
  are pairwise distinct \emph{process} and 
  \emph{weight} variables;
\item
  $\{w_{ai} \in \mathfrak W \mid 1 \leq i \leq n,\ a\in A_i\}$ are \emph{weight constants}
  such that $w_{i_k} \neq 0$ for $1 \leq k \leq m$;
\item
  $\beta : W^m \to W$ is a multiadditive function on $\mathfrak W$;
\item 
  $a,b,c \in A$ are labels and $A_i \subseteq A$ for  
  $1 \leq i \leq n$;
\item
  $t$ is a $\Sigma$-term such that $Y \subseteq \mathrm{var}(\mathtt t) \subseteq X \cup Y$;
\end{itemize}
A rule is \emph{triggered} by a $n$-tuple $\vec C$ of \emph{enabled
labels} s.t.~$A_i \subseteq C_i$ and by a family of weights
$\{v_{ai}\mid 1\leq i \leq n,\ a \in A_i\}$ s.t.~$w_{ai} = v_{ai}$.  A
$\mathfrak W$-GSOS specification is a set of rules in the above format such
that there are only finitely many rules for the same source symbol,
conclusion label and trigger.
\end{definition}

Each rule describes the weight of $\mathtt t$
in terms of weights assigned to each $y_k$ (i.e.~$u_k$) 
occurring in it; if two rules share the same symbol, label, trigger and 
target then their contribute for $\mathtt t$ is added. 

To turn a $\mathfrak W$-GSOS specification into WF-GSOS ones, the first
step is to make weight function explicit, by means of premises like
\vspace{-2pt}$x_i \xrightarrowu{a} \phi^a_i$ (since WLTS are
functional ULTraS, i.e.~$m^a_i = 1$).  Then, each premise $x_i
\xrightarrowt{a} w_{ai}$ is translated into $\totalweight{\phi^a_i}
= w_{ai}$.  If $\mathfrak W$ is \emph{positive} (i.e., whenever $a+b=0$
then $a=b=0$) then $W \setminus\{0\}$ is a club and the
translation of a $\mathfrak W$-GSOS into a WF-GSOS is straightforward.
More generally, it suffices to combine rules sharing the same source,
label and trigger into a single WF-GSOS rule with the same source,
label and trigger. Its target is a suitable weight term containing the
functions $\beta$ and targets $\mathtt t$ of the original rules; every
occurrence of variables $y_k$ and $u_k$ is replaced with the
corresponding function variable (i.e.~$\phi^{b_k}_{i_k}$).  In
order to deal with multiple copies of the same weight variable, we
wrap each occurrence in a different ``colouring'' operator, like in
the case of Segala-GSOS.

\section{A coalgebraic presentation of ULTraS and WF-GSOS}\label{sec:coalg}
The aim of this section is to prove some important results about
WF-GSOS specifications.  We first provide a characterization of
ULTraSs as coalgebras for a specific behavioural functor
(Section~\ref{sec:ultras-as-coalgebras}), and their bisimulations as
\emph{cocongruences}.  Then, leveraging this characterization in
Section~\ref{sec:soundness} we apply Turi and Plotkin's bialgebraic
theory \cite{tp97:tmos}, which allows us to define the categorical
notion of \emph{WF-GSOS distributive laws}; these laws describe the
interplay between syntax and behaviour in any GSOS presentation of
ULTraS.  We will prove that every WF-GSOS specification yields a
WF-GSOS distributive law, i.e., the format is \emph{sound}.  As a
consequence, we obtain that the bisimilarities induced by these
specifications are always congruence relations.  Finally, in
Section~\ref{sec:completeness} we prove that WF-GSOS specification are
also \emph{complete}: every abstract WF-GSOS distributive law can be
described by means of a WF-GSOS specification.

\subsection{Abstract GSOS}\label{sec:abstract-gsos}
In \cite{tp97:tmos}, Turi and Plotkin detailed an abstract
presentation of well-behaved structural operational
semantics for systems of various kinds. There syntax and 
behaviour of transition systems are modelled by algebras 
and coalgebras respectively. For instance, an (image-finite)
LTS with labels in $A$ and states in $X$ is seen as
a (successor) function $h : X \to (\pf X)^A$ mapping
each state $x$ to a function yielding, for each label $a$,
the (finite) set of states reachable from $x$
via $a$-labelled transitions i.e.~$\{ y \mid x \xrightarrow{a} y\}$:
\vspace{-0.5ex}
\[
  y \in h(x)(a) \iff x \xrightarrow{a} y\text.
\vspace{-0.5ex}
\]
Functions like $h$ are \emph{coalgebras} for the 
(finite) \emph{labelled powerset functor} $(\pf)^A$
over the category of sets and functions $\cat{Set}$.
In general, state based transition systems can be viewed 
as \emph{$B$-coalgebra} i.e.~sets (\emph{carriers})
enriched by functions (\emph{structures}) like $h : X\to BX$
for some suitable covariant functor $B : \cat{Set} \to \cat{Set}$. 
The $\cat{Set}$-endofunctor $B$ is often called \emph{behavioural} 
since it encodes the computational behaviour characterizing the 
given kind of systems. A \emph{morphism} from a $B$-coalgebra $h : X \to BX$ to 
$g : Y \to BY$ is a function $f : X \to Y$ such that
the coalgebra structure $h$ on $X$ is consistently
mapped to the coalgebra structure $g$ on $Y$ 
i.e.~$g\circ f = Bf \circ h$.
Therefore, $B$-coalgebras and their homomorphisms form the category $\coalg{B}$.

Two states $x,y \in  X$ are said to be \emph{behaviourally equivalent}
with respect to the coalgebraic structure $h : X \to BX$ if
they are equated by some coalgebraic morphism from $h$. Behavioural
equivalences are generalised to two (or more) systems in the form
of kernel bisimulations \cite{staton11} 
i.e.~as the pullbacks of morphisms extending
to a cospan for the $B$-coalgebas structures associated with the
given systems as pictured below.
\vspace{-.5ex}\[\vspace{-.5ex}
\begin{tikzpicture}[auto,font=\small,yscale=1.2,xscale=1.5,
	baseline=(current bounding box.center)]
		\node (n0) at (0,1) {\(X_1\)};
		\node (n1) at (2,1) {\(X_2\)};
		\node (n2) at (1,.5) {\(Y\)};
		\node (n3) at (0,0) {\(B X_1\)};
		\node (n4) at (2,0) {\(B X_2\)};
		\node (n5) at (1,-.5) {\(B Y\)};
		\node (n6) at (1,1.5) {\(R\)};
		\draw[->] (n0) to node [swap] {\(f_1\)} (n2);
		\draw[->] (n1) to node [] {\(f_2\)} (n2);
		\draw[->] (n0) to node [swap] {\(h_1\)} (n3);
		\draw[->] (n1) to node [] {\(h_2\)} (n4);
		\draw[->] (n2) to node [swap] {\(g\)} (n5);
		\draw[->] (n3) to node [swap] {\(Bf_1\)} (n5);
		\draw[->] (n4) to node [] {\(Bf_2\)} (n5);
		\draw[->] (n6) to node [swap] {\(p_1\)} (n0);
		\draw[->] (n6) to node [] {\(p_2\)} (n1);
	\begin{scope}[shift=($(n6)!0.28!(n2)$),scale=0.4]
		\draw +(-.5,.25) -- +(0,0)  -- +(.5,.25);
	\end{scope}
\end{tikzpicture}
\]
If the cospan $f_1,f_2$ is jointly epic, 
i.e.~$j\circ f_1 = k \circ f_2 \implies j = k$ for any $j,k : C \to Z$,
(in general if $\{f_i\}$ is an epic sink, hence $\{p_i\}$ is a monic source)
then the set $Y$ is isomorphic to the equivalence classes induced by $R$.
We refer the interested reader to \cite{rutten:universal}
for more information on the coalgebraic approach to process theory.

Dually, process syntax is modelled via algebras for endofunctors.
Every algebraic signature $\Sigma$ defines an endofunctor
$\Sigma X = \coprod_{\mathtt f\in \Sigma}
X^{ar(\mathtt f)}$ on $\cat{Set}$ such that every model for 
the signature is an algebra for the functor i.e.~a set $X$ (carrier) together
with a function $g : \Sigma X \to X$ (structure). A morphism 
from a $\Sigma$-algebras $g :  \Sigma X \to X$ to
$h : \Sigma Y \to Y$ is a function $f : X \to Y$ such that $f \circ g = h \circ \Sigma f$.
The set of $\Sigma$-terms with variables from a set $X$ is denoted
by $T^\Sigma X$ and the set of ground ones admits an obvious
$\Sigma$-algebra $a : \Sigma T^\Sigma\emptyset \to T^\Sigma\emptyset$
which is the \emph{initial $\Sigma$-algebra} in the sense that
for every other $\Sigma$-algebra $g$, there exists a unique morphism
from $a$ to $g$ i.e.~the \emph{inductive extension} of the underlying
function $f : T^\Sigma\emptyset \to X$. The construction $T^\Sigma$
is a functor, moreover, it is the monad freely generated by $\Sigma$.

In \cite{tp97:tmos}, Turi and Plotkin showed that structural
operational specifications for LTSs in the well-known image
finite GSOS format \cite{bloomIM:95} correspond to
natural transformations of the following form: 
\[\lambda : \Sigma(\id \times B) \Longrightarrow BT^\Sigma\text.\]
\looseness=-1
These transformations, hence called \emph{GSOS distributive laws}, contain the information
needed to connect $\Sigma$-algebra and $B$-coalgebra
structures over the same carrier set
and capture the interplay between syntax and dynamics
at the core of the SOS approach. 
These structures are called \emph{$\lambda$-bialgebras} and are
formed by a carrier $X$ endowed with a $\Sigma$-algebra $g$ and a $B$-coalgebra
$h$ structure s.t.:
\[\vspace{-.5ex}
\begin{tikzpicture}[auto,font=\small,yscale=1.2,xscale=1.5,
	baseline=(current bounding box.center)]
		\node (n0) at (0,1) {\(\Sigma X\)};
		\node (n1) at (0,0) {\(\Sigma(X \times BX)\)};
		\node (n2) at (2,0) {\(BT^\Sigma X\)};
		\node (n3) at (1,1) {\(X\)};
		\node (n4) at (2,1) {\(BX\)};
		\draw[->] (n0) to node [] {\(g\)} (n3);
		\draw[->] (n0) to node [swap] {\(\Sigma \langle id_X, h\rangle\)} (n1);
		\draw[->] (n1) to node [] {\(\lambda_X\)} (n2);
		\draw[->] (n2) to node [swap] {\(Bg^\flat\)} (n4);
		\draw[->] (n3) to node [] {\(h\)} (n4);
\end{tikzpicture}
\vspace{-.5ex}\]
where $g^\flat: T^\Sigma X \to X$ is the canonical extension of $g$
by structural recursion.
In particular, every $\lambda$-distributive law gives rise to a $B$-coalgebra structure
over the set of ground $\Sigma$-terms $T^\Sigma\emptyset$ and to a
$\Sigma$-algebra structure on the carrier of the final $B$-coalgebra.
These two structures are part of the initial and final 
$\lambda$-bialgebra respectively and therefore, because the unique
morphism from the former to the latter is both a $\Sigma$-algebra
and a $B$-coalgebra morphism, observational
equivalence on the system induced over $T^\Sigma\emptyset$
is a congruence with respect to the syntax $\Sigma$.

\subsection{ULTraSs as coalgebras}\label{sec:ultras-as-coalgebras}
Since ULTraSs alternate non-deterministic steps with quantitative
steps, the corresponding behavioural functor can be obtained by
composing the usual functor $(\pf)^A: \set \to \set$ of non-deterministic labelled
transition systems with the functors capturing the quantitative
computational aspects: $\fw$. 
It is easy to see that the action of a set function on a weight function
\eqref{eq:fw-action} preserves identities and composition rendering
$\fw$ an endofunctor over \set.

For any $\mathfrak W$ and any $A$, $A$-labelled image-finite $\mathfrak W$-ULTraSs
and their homomorphisms clearly form a category: $\cat{ULTS}_{\mathfrak W,A}$.
Objects and morphisms of this category are in 1-1 correspondence
with $(\pf\fw)^A$-coalgebras and their homomorphisms respectively.
\begin{proposition}\label{prop:ultras-as-coalgebras}
	$\cat{ULTS}_{\mathfrak W,A} \cong \coalg{(\pf\fw)^A}$.
\end{proposition}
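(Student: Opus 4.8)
The plan is to exhibit a pair of functors $F : \cat{ULTS}_{\mathfrak W,A} \to \coalg{(\pf\fw)^A}$ and $G : \coalg{(\pf\fw)^A} \to \cat{ULTS}_{\mathfrak W,A}$ that are mutually inverse and act as the identity on underlying state spaces and on underlying functions; an isomorphism of categories then follows at once. Since both functors leave the carrier and the function untouched, all the content lies in (i) a bijection between transition relations and coalgebra structures on a fixed carrier, and (ii) the fact that this bijection identifies ULTraS homomorphisms with $(\pf\fw)^A$-coalgebra morphisms.

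At the object level I would define $F$ on a $\mathfrak W$-ULTraS $(X,A,\rightarrowu)$ by the structure map $h_{\rightarrowu} : X \to (\pf\fw X)^A$ with $h_{\rightarrowu}(x)(a) \defeq \{\rho \mid x \xrightarrowu{a} \rho\}$. Image finiteness is exactly what makes this well typed: for each $x$ and $a$ the set $\{\rho \mid x\xrightarrowu{a}\rho\}$ is finite and contains only finitely supported weight functions, hence is an element of $\pf\fw X$. Conversely $G$ sends a coalgebra $h : X \to (\pf\fw X)^A$ to the relation ${\rightarrowu_h}\subseteq X\times A\times\fw X$ given by $x\xrightarrowu{a}\rho \iff \rho\in h(x)(a)$, which is automatically image finite. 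These assignments are visibly inverse on objects, since membership in $h(x)(a)$ and derivability of $x\xrightarrowu{a}\rho$ record the same data.

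The core step is the morphism correspondence. A function $f:X\to Y$ is a $(\pf\fw)^A$-coalgebra morphism from $h_X$ to $h_Y$ iff $(\pf\fw f)^A\circ h_X = h_Y\circ f$; unfolding the functor (using that $\fw$ is an endofunctor, as already observed, and that $(\pf\,\hole)^A$ acts by direct image in each label-component) this equation reads, for all $x\in X$ and $a\in A$, \[ \{\,\rho[f]\mid \rho\in h_X(x)(a)\,\} = h_Y(f(x))(a)\text{.} \] Through the object bijection the left-hand side is $\{\rho[f]\mid x\xrightarrowu{a}_{\!\!X}\rho\}$ and the right-hand side is $\{\psi\mid f(x)\xrightarrowu{a}_{\!\!Y}\psi\}$, so the coalgebra equation is precisely the demand that $x\xrightarrowu{a}_{\!\!X}\rho$ hold exactly when $f(x)\xrightarrowu{a}_{\!\!Y}\rho[f]$ does, i.e.~the defining condition of a ULTraS homomorphism. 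I would split the set-equality into its two inclusions and match them against the two halves of the homomorphism biconditional, checking throughout that the transported target is exactly $\rho[f]=\fw(f)(\rho)$.

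The step I expect to be most delicate is this identification of morphisms: one must verify that the set-equality form of the coalgebra-morphism equation is genuinely equivalent to the biconditional form of the homomorphism condition. The \emph{preservation} direction is routine, but the \emph{reflection} direction requires showing that every weight function $\psi$ reachable from $f(x)$ is of the shape $\rho[f]$ for some $\rho$ reachable from $x$; controlling this rests entirely on the summation-over-preimages definition of $\fw(f)$ together with finiteness of supports, which simultaneously keep every set in sight inside $\pf\fw$. Once the morphism correspondence is in place, functoriality of $F$ and $G$ is immediate (both are the identity on arrows), and since they are mutually inverse on objects and on underlying functions we conclude $\cat{ULTS}_{\mathfrak W,A}\cong\coalg{(\pf\fw)^A}$.
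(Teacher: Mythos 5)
Your object-level construction is exactly the paper's (the paper's proof defines $h(x)(a) \defeq \{\rho \mid x\xrightarrowu{a}\rho\}$, invokes image-finiteness for well-typedness, and dismisses the rest as ``easy to check''), and your plan of identity-on-carriers, identity-on-arrows functors is the intended one. The genuine problem sits precisely at the step you flag as delicate, and it cannot be discharged the way you propose: the set equality $\{\rho[f]\mid x\xrightarrowu{a}_{\!\!X}\rho\} = \{\psi\mid f(x)\xrightarrowu{a}_{\!\!Y}\psi\}$ and the paper's pointwise biconditional $x\xrightarrowu{a}_{\!\!X}\rho \iff f(x)\xrightarrowu{a}_{\!\!Y}\rho[f]$ (quantified over $\rho\in\fw X$) are \emph{not} equivalent, and no unwinding of the summation-over-preimages definition of $\fw(f)$ or finiteness of supports will make them so. Concretely, take $\mathfrak W=(\mathbb N,+,0)$, $A=\{a\}$, $X=\{x\}$, $Y=\{y_1,y_2\}$, $f(x)=y_1$; let $x$ have no transitions and let the only transition in $Y$ be $y_1\xrightarrowu{a}_{\!\!Y}\psi$ with $\psi(y_1)=0$, $\psi(y_2)=1$. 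Since $f^{-1}(y_2)=\emptyset$, every pushforward satisfies $\rho[f](y_2)=0$, so $f(x)\xrightarrowu{a}_{\!\!Y}\rho[f]$ is false for every $\rho$, and the biconditional holds vacuously: $f$ is a homomorphism in the literal sense, yet it is not a coalgebra morphism, because the left-hand set is empty while the right-hand one is $\{\psi\}$. So the reflection inclusion you promise to recover simply does not follow.

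The failure goes both ways, which also defeats your plan of matching the two inclusions against the two halves of the biconditional. Let $X=\{x_1,x_2\}$, $Y=\{y\}$, $f$ constant, $x_1\xrightarrowu{a}_{\!\!X}\rho_1$ and $x_2\xrightarrowu{a}_{\!\!X}\rho_1$ with $\rho_1(x_1)=1$, $\rho_1(x_2)=0$, and $y\xrightarrowu{a}_{\!\!Y}\sigma$ with $\sigma(y)=1$: this $f$ \emph{is} a coalgebra morphism, but for $\rho_2$ with $\rho_2(x_1)=0$, $\rho_2(x_2)=1$ we have $\rho_2[f]=\sigma$, hence $f(x_1)\xrightarrowu{a}_{\!\!Y}\rho_2[f]$ while $x_1$ has no transition to $\rho_2$, so the backward implication of the biconditional fails. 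The two conditions are thus incomparable in general (they agree only for, e.g., injective or surjective $f$), and under the literal pointwise reading of the paper's homomorphism definition your functors $F$ and $G$ are not even well defined on arrows. What makes the proposition true --- and what the paper's ``easy to check'' silently assumes --- is that the homomorphism condition must be read as the set equality itself: the $a$-transitions of $f(x)$ are exactly the $\fw(f)$-images of the $a$-transitions of $x$. Your proof should adopt that reading explicitly (at which point your ``delicate step'' disappears, because it is the definition), rather than promise a verification that counterexamples show is impossible.
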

\begin{proof}
	Any image-finite $\mathfrak W$-ULTraS $(X,A,\rightarrowu)$ determines a
	coalgebra $(X,h)$ where, for any $x \in X$ and $a \in A$: $h(x)(a)
	\defeq \{\rho \mid x\xrightarrowu{a} \rho\}$.  Image-finiteness
	guarantees that these sets are finite and that their elements are
	finitely supported weight functions from $X$ to the carrier of $\mathfrak
	W$.  Then, it is easy to check that the correspondence is bijective.
\end{proof}

A similar result holds for the bisimulation given in
Definition~\ref{def:bisim}.  Categorically, a relation between $X$ and
$Y$ is a (jointly monic) span $X\leftarrow R \rightarrow Y$.  In our
case, this span has to be subject to some conditions, as shown 
next.
\begin{proposition}\label{prop:bisim-correspondence}
  Let $(X_1,A,\rightarrowu_1)$ and $(X_2,A,\rightarrowu_2)$ be
  two image-finite
  $\mathfrak W$-ULTraSs; let $(X_1,h_1)$, $(X_2,h_2)$ be the corresponding
  coalgebras according Proposition~\ref{prop:ultras-as-coalgebras}.
  A relation between $X_1$ and $X_2$ is a bisimulation iff there exists a coalgebra $(Y,g)$ and
  two coalgebra morphisms $f_1:(X_1,h_1)\to (Y,g)$ and
  $f_2:(X_2,h_2)\to (Y,g)$ such that $f_1,f_2$ are jointly epic and
  $R$ is their pullback, i.e.~the
  diagram below commutes.
\[\begin{tikzpicture}[auto,font=\footnotesize,yscale=1.5,xscale=1.96,
	baseline=(current bounding box.center)]
		\node (n0) at (0,1) {\(X_1\)};
		\node (n1) at (2,1) {\(X_2\)};
		\node (n2) at (1,.5) {\(Y\)};
		\node (n3) at (0,0) {\((\pf\fw X_1)^A\)};
		\node (n4) at (2,0) {\((\pf\fw X_2)^A\)};
		\node (n5) at (1,-.5) {\((\pf\fw Y)^A\)};
		\node (n6) at (1,1.5) {\(R\)};
		\draw[->] (n0) to node [swap] {\(f_1\)} (n2);
		\draw[->] (n1) to node [] {\(f_2\)} (n2);
		\draw[->] (n0) to node [swap] {\(h_1\)} (n3);
		\draw[->] (n1) to node [] {\(h_2\)} (n4);
		\draw[->] (n2) to node [swap] {\(g\)} (n5);
		\draw[->] (n3) to node [swap] {\((\pf\fw f_1)^A\)} (n5);
		\draw[->] (n4) to node [] {\((\pf\fw f_2)^A\)} (n5);
		\draw[->] (n6) to node [swap] {\(p_1\)} (n0);
		\draw[->] (n6) to node [] {\(p_2\)} (n1);
	\begin{scope}[shift=($(n6)!0.28!(n2)$),scale=0.4]
		\draw +(-.5,.25) -- +(0,0)  -- +(.5,.25);
	\end{scope}
\end{tikzpicture}\]
\end{proposition}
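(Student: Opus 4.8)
The plan is to prove the biconditional by translating the combinatorial definition of bisimulation (Definition~\ref{def:bisim}) into the coalgebraic language established by Proposition~\ref{prop:ultras-as-coalgebras}. I need a precise bridge between the two relations on weight functions: the relation $R_{\mathfrak W}$ on $\fw X_1 \times \fw X_2$ defined via subset-closure and total weights, and the relation $(\pf\fw p_1)$, $(\pf\fw p_2)$ induced by the span $X_1 \leftarrow R \rightarrow X_2$ at the level of the behavioural functor $(\pf\fw)^A$. First I would establish the following key lemma: for the jointly epic cospan $f_1, f_2$ whose pullback is $R$ (so that $Y$ is, up to isomorphism, the quotient $X_1 + X_2$ modulo the equivalence generated by $R$), two functions $\phi \in \fw X_1$ and $\psi \in \fw X_2$ satisfy $(\phi,\psi) \in R_{\mathfrak W}$ if and only if $\fw(f_1)(\phi) = \fw(f_2)(\psi)$, i.e.\ $\phi[f_1] = \psi[f_2]$ in $\fw Y$. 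This is the heart of the matter: pushing $\phi$ and $\psi$ forward along the quotient maps collapses exactly those states merged by $R$, and the value of $\fw(f_i)$ at a class is the total weight assigned to that class by \eqref{eq:fw-action}, which is precisely $\totalweight{\restr{\phi}{C}}$; the condition ranging over $R^\star$-pairs $(C,D)$ matches pairs of classes with common image in $Y$.

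Granting this lemma, the proof splits into the two implications. For the forward direction, assume $R$ is a bisimulation. I would take $Y$ to be the quotient of $X_1 + X_2$ by the equivalence closure of $R$, with $f_1, f_2$ the evident coprojections-then-quotient maps; these are jointly epic by construction, and $R$ is their pullback because $R$ is already an equivalence-compatible relation (here I would lean on the discussion preceding Definition~\ref{def:bisim} identifying stable equivalences with homomorphic quotients). The substantive obligation is to define a coalgebra structure $g : Y \to (\pf\fw Y)^A$ making both $f_1$ and $f_2$ coalgebra morphisms; I would set $g([x])(a) \defeq \{\rho[f_i] \mid x \xrightarrowu{a}_{\!\!i} \rho\}$ for any representative $x$ and show this is well-defined. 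Well-definedness is exactly where the bisimulation clauses are consumed: if $x \mathrel{R} x'$ then the transfer property guarantees that for every $x$-transition there is a matching $x'$-transition with $R_{\mathfrak W}$-related targets, and by the lemma these targets have equal images under the respective $f_i$, so the two candidate values of $g$ coincide. The morphism condition $g \circ f_i = (\pf\fw f_i)^A \circ h_i$ then holds by the definition of $g$.

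For the converse, suppose such a cospan and pullback exist. Given $(x,y) \in R$, joint commutativity gives $g(f_1(x))(a) = (\pf\fw f_1)(h_1(x)(a))$ and $g(f_2(y))(a) = (\pf\fw f_2)(h_2(y)(a))$, and since $f_1(x) = f_2(y)$ (as $(x,y)$ is in the pullback), these two sets in $\pf\fw Y$ are equal. Unwinding, every $\phi$ with $x \xrightarrowu{a}_{\!\!1} \phi$ has $\phi[f_1]$ lying in the common set, hence equal to $\psi[f_2]$ for some $\psi$ with $y \xrightarrowu{a}_{\!\!2} \psi$; by the lemma $(\phi,\psi) \in R_{\mathfrak W}$, which is the first bisimulation clause, and the symmetric argument yields the second. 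Thus $R$ is a bisimulation.

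The main obstacle I anticipate is the key lemma, specifically reconciling the \emph{subset-closure} formulation of $R_{\mathfrak W}$ (quantifying over all $R^\star$-pairs $(C,D)$) with the pointwise equality $\phi[f_1] = \psi[f_2]$ of pushed-forward functions. The technical care lies in checking that $R^\star$-pairs correspond exactly to fibres of the common quotient map and that finiteness of supports keeps all the sums finite and well-defined; when $R$ is not itself an equivalence one must pass to its equivalence closure and verify that the subset-closure condition is insensitive to this passage. A secondary subtlety is confirming that the forward-direction $Y$ genuinely realises $R$ as a pullback rather than merely a subrelation of the kernel of $\langle f_1, f_2\rangle$, which is why I restrict attention to the jointly-epic, stable situation singled out before Definition~\ref{def:bisim}.
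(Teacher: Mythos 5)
Your proposal follows essentially the same route as the paper's proof: one direction builds the quotient coalgebra on the classes generated by $R$ and consumes the transfer property to get well-definedness, the other unwinds the two commuting morphism squares over the pullback; your ``key lemma'' is exactly the bridge the paper invokes implicitly when it asserts a bijective correspondence between the classes of the kernel and the elements of the common codomain.

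There is, however, a genuine gap, and it sits in the parenthetical of that lemma: joint epicness of $f_1,f_2$ together with $R$ being their pullback does \emph{not} force $Y$ to be the quotient of $X_1+X_2$ by the equivalence generated by $R$. A cospan may identify $R$-isolated states \emph{inside} one system, and such identifications leave no trace in $R\subseteq X_1\times X_2$; this matters precisely where the converse direction uses the lemma. Concretely, over $\mathfrak W=(\mathbb Z,+,0)$ take $X_1=\{x,u,u'\}$ with $x\xrightarrowu{a}\phi$, $\phi(x)=0$, $\phi(u)=1$, $\phi(u')=-1$, and $u,u'$ stuck; take $X_2=\{y\}$ with $y\xrightarrowu{a}\psi$, $\psi(y)=0$. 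With $Y=\{*,\circ\}$, $f_1(x)=f_2(y)=*$, $f_1(u)=f_1(u')=\circ$, $g(*)=\{\lambda z.0\}$, $g(\circ)=\emptyset$, both $f_i$ are coalgebra morphisms (indeed $\phi[f_1]=\psi[f_2]=\lambda z.0$), they are jointly epic, and their pullback is $R=\{(x,y)\}$; yet $(\{u\},\emptyset)\in R^\star$ and $\totalweight{\restr{\phi}{\{u\}}}=1\neq 0=\totalweight{\restr{\psi}{\emptyset}}$, so $(\phi,\psi)\notin R_\mathfrak{W}$ and $R$ is not a bisimulation. So the implication ``$\phi[f_1]=\psi[f_2]$ implies $(\phi,\psi)\in R_\mathfrak{W}$'' is sound for the particular cospan you construct in the forward direction (there the fibres are exactly the classes, $R^\star$-pairs are exactly unions of classes, and your finite-support argument works), but fails for an arbitrary cospan as the converse requires; it can be rescued by assuming $\mathfrak W$ positive, since positivity forces $\phi$ to vanish on one-sided fibres. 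You should know that the paper's own proof has the same soft spot (its claimed bijection between classes of $R$ and elements of $Z$ fails on the example above, since $\circ$ is the image of no class), and that the companion issue you flagged at the end is also real: a pullback of functions is always difunctionally closed, whereas a bisimulation in the sense of Definition~\ref{def:bisim} need not be (take all states stuck and $R=\{(a,b),(a',b),(a,b')\}$), so the forward direction in both your argument and the paper's literally realises only the difunctional closure of $R$ as a pullback. Flagging these as subtleties is not the same as resolving them; as stated, your lemma is false and the converse direction does not go through without an additional hypothesis on $\mathfrak W$ or a reformulation of the statement.
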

\begin{proofatend}
Let $\mathfrak W = (W,+,0)$ be a commutative monoid and let $(X,A,\rightarrowu_X)$,
$(Y,A,\rightarrowu_Y)$, $(X,\alpha)$ and $(Y,\beta)$ be two ULTraS
over $\mathfrak W$ and their corresponding coalgebras 
(Proposition~\ref{prop:ultras-as-coalgebras}).
Recall that a function $f : X \to Y$ is a also 
coalgebra morphism $f : \alpha \to \beta$ iff, 
for each $x \in X$, and $a \in A$:
\[ f(x) \xrightarrowu{a}_Y \psi \iff x \xrightarrowu{a}_X \phi 
\land \psi = \phi[f] \]
where $\phi[f]$ denotes the action of $f$ on $\phi$ (i.e.~the
function $\lambda y : Y . \sum_{x\in f^{-1}(y)}\phi(x)$)
and function equality is defined point-wise as usual.
Firstly, we prove that if $R$ is a kernel relation
of some jointly epic cospan of coalgebra mophism from $\alpha$ and $\beta$
then it is a bisimulation.
Let the aforementioned cospan be 
$(X,\alpha) \xleftarrow{f} (Z,\gamma) \xrightarrow{g} (Y,\beta)$,
$(Z,A,\rightarrowu_Z)$ the ULTraS for $\gamma$
and assume $x$ and $y$ such that $f(x) = g(y)$.
By definition of coalgebra morphism, $f(x) = z$ implies:
\[x \xrightarrowu{a}_X \phi \iff
  z \xrightarrowu{a}_Z \rho = \phi[f] = \lambda c:Z\sum_{x\in f^{-1}(c)}\phi(x)
\text.\]
Likewise $g(y) = z$ implies:
\[
  y \xrightarrowu{a}_Y \psi \iff
  z \xrightarrowu{a}_Z \rho = \psi[g] = \lambda c:Z\sum_{y\in g^{-1}(c)}\psi(y)
\text.\]
Therefore $f(x) = g(y)$ implies:
\begin{gather*}
  x \xrightarrowu{a}_X \phi \implies y \xrightarrowu{a}_Y \psi \land
\forall C \in Z.Z\sum_{x\in f^{-1}(C)}\phi(x) =  Z\sum_{y\in g^{-1}(C)}\psi(y)\\
y \xrightarrowu{a}_Y \psi \implies x \xrightarrowu{a}_X \phi \land
\forall C \in Z.Z\sum_{x\in f^{-1}(C)}\phi(x) =  Z\sum_{y\in g^{-1}(C)}\psi(y)
\end{gather*}
Then, we conclude by noting that if $R$ is the kernel of $f,g$
there is a bijective correspondence between its equivalence classes
and elements in $Z$ since every class is in the image of $f$ or $g$
by the jointly epic assumption.

For the converse, given a bisimulation $R$ for $(X,A,\rightarrowu_X)$
$(Y,A,\rightarrowu_Y)$ let $Z$ be the set of the equivalence classes in $R$ and
consider the ULTraS $(Z,A,\rightarrowu_Z)$ defined as follows:
\begin{gather*}
  C\xrightarrowu{a}_Z \lambda D:Z.\sum_{x' \in D} \phi(x') \iff x\xrightarrow{a}_X \phi \land x \in C \\
  C\xrightarrowu{a}_Z \lambda D:Z.\sum_{y' \in D} \psi(y') \iff y\xrightarrow{a}_Y \psi\land y \in C 
\end{gather*}
The two statements are redundant since $x,y \in C \iff x R y$ and hence
iff  for every $x\xrightarrow{a}_X\phi$ there is 
$y \xrightarrow{a}_Y \psi$ s.t.~$\phi \equiv_R\psi$ and vice versa.
Finally, class membership defines a jointly epic coalgebra cospan
from the coalgebras associated to $(X,A,\rightarrowu_X)$ and
$(Y,A,\rightarrowu_Y)$ to the one associated to $(Z,A,\rightarrowu_Z)$
by simply mapping each $x \in X$ and each $y \in Y$ to its class.
\end{proofatend}
Intuitively, the system $(Y,g)$ ``subsumes'' both $(X_1,h_1)$ and
$(X_2,h_2)$ via $f_1$, $f_2$; then, $R$ relates the states which are
mapped to the same behaviour in $Y$ ($(x_1,x_2)\in R$ iff
$f_1(x_1)=f_2(x_2)$).  

\paragraph{Coalgebraic bisimulation} In Concurrency Theory also
Aczel-Medler's \emph{coalgebraic bisimulation} \cite{am89:final} is
widely used.
In fact, it is known that kernel bisimulations and coalgebraic
bisimulations coincide if the behavioural functor is \emph{weak
  pullback preserving} (wpp).  This is the case for many behavioural
functors, but not for $\fw$ in general
\cite{ks2013:w-s-gsos}. Actually, the fact that $\fw$ (and
$(\pf\fw)^A$) preserves weak pullbacks depends on the underlying monoid only.
\begin{definition}
	A commutative monoid is called \emph{positive} (sometimes zerosumfree, positively ordered) whenever $x + y = 0 \implies x = y = 0$ holds true.
	It is called  \emph{refinement} if for each $r_1 + r_2 = c_1 + c_2$
	there is a $2\times 2$ matrix $(m_{i,j})$ s.t.~
	$r_i = m_{i,1} + m_{i,2}$ and $c_j = m_{1,j} + m_{2,j}$.
\end{definition}
\begin{lemma}
	\label{lem:wpp}
	Coalgebraic bisimulation and behavioural equivalence on ULTraSs
	coincides if $\mathfrak W$ is a positive refinement monoid.
\end{lemma}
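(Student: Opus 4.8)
The plan is to reduce everything to a single weak-pullback-preservation property of the weight functor $\fw$, and then to exploit the two monoid hypotheses separately. As recalled in the text, Aczel--Mendler coalgebraic bisimulation coincides with behavioural equivalence (kernel bisimulation) whenever the behavioural functor preserves weak pullbacks; combined with Proposition~\ref{prop:ultras-as-coalgebras} and Proposition~\ref{prop:bisim-correspondence}, it therefore suffices to show that $B=(\pf\fw)^A$ preserves weak pullbacks. Since the finite powerset functor $\pf$ is well known to preserve weak pullbacks, since $(\hole)^A=\prod_{a\in A}(\hole)$ preserves both pullbacks and epimorphisms (a compatible $A$-indexed pair is pointwise a family of pullback elements, and $(\hole)^A$ sends surjections to surjections) and hence weak pullbacks, and since a composite of weak-pullback-preserving functors is again weak-pullback-preserving, the whole problem collapses to proving that $\fw$ preserves weak pullbacks.

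Unfolding this for a cospan $X_1\xrightarrow{f_1}Z\xleftarrow{f_2}X_2$ with pullback $P=\{(x_1,x_2)\mid f_1(x_1)=f_2(x_2)\}$, I have to show that the canonical comparison map $\fw P\to\fw X_1\times_{\fw Z}\fw X_2$ is surjective: given $\phi_1\in\fw X_1$ and $\phi_2\in\fw X_2$ with $\fw f_1(\phi_1)=\fw f_2(\phi_2)$, I must produce $\chi\in\fw P$ whose actions along the two projections recover $\phi_1$ and $\phi_2$. The pullback splits as a disjoint union $P=\coprod_{z\in Z}f_1^{-1}(z)\times f_2^{-1}(z)$ over the fibres of $Z$, and the compatibility condition reads fibrewise as $\sum_{x_1\in f_1^{-1}(z)}\phi_1(x_1)=\sum_{x_2\in f_2^{-1}(z)}\phi_2(x_2)$ for every $z$. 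Hence it is enough to solve, independently on each fibre, a finite ``transportation'' problem: find $\mathfrak W$-valued entries on a rectangular array $A_z\times B_z$, with $A_z=\support{\phi_1}\cap f_1^{-1}(z)$ and $B_z=\support{\phi_2}\cap f_2^{-1}(z)$ both finite, whose row sums are the $\phi_1(x_1)$ and whose column sums are the $\phi_2(x_2)$.

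Here the two hypotheses enter. Positivity guarantees that the supports are correctly matched across the two sides: on a fibre of total weight $0$ the equality $\sum\phi_1=0$ forces, via $x+y=0\Rightarrow x=y=0$, every $\phi_1(x_1)=0$, so $A_z=\emptyset$, and symmetrically $B_z=\emptyset$; this rules out the pathological case of nonzero mass on one side facing an empty array on the other, which is exactly where weak-pullback preservation fails for non-positive monoids. The refinement property supplies the solvability of the transportation problem: the stated $2\times 2$ refinement extends, by a routine induction on $|A_z|+|B_z|$, to the general fact that $\sum_i r_i=\sum_j c_j$ implies the existence of a matrix $(m_{ij})$ with $r_i=\sum_j m_{ij}$ and $c_j=\sum_i m_{ij}$. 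Assembling these fibrewise matrices yields a well-defined $\chi$, finitely supported because only the finitely many fibres meeting $\support{\phi_1}\cup\support{\phi_2}$ contribute, and by construction $\fw$ of the two projections carries $\chi$ to $\phi_1$ and $\phi_2$.

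I expect the main obstacle to be the inductive generalisation of the $2\times 2$ refinement to arbitrary finite rectangular arrays while keeping the bookkeeping consistent with positivity: one must peel off a single row (or column) using one application of refinement, check that the reduced row/column totals still agree and still lie in $\mathfrak W$, and verify that the produced entries never escape the finite supports. Once this combinatorial lemma is in place, the construction of $\chi$ and the verification of the two projection identities are purely mechanical, and the statement follows by the reduction of the first paragraph.
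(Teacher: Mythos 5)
Your proof is correct, and at the top level it follows the same route as the paper: reduce the coincidence of coalgebraic bisimulation and behavioural equivalence to weak-pullback preservation of the behavioural functor $(\pf\fw)^A$, obtain this compositionally from preservation by $\pf$, $(\hole)^A$ and $\fw$, and transfer the conclusion to ULTraSs via Proposition~\ref{prop:ultras-as-coalgebras}. The genuine difference lies in the key step: where the paper simply cites \cite{gummS:01monlbl} for the fact that $\fw$ preserves weak pullbacks when $\mathfrak W$ is a positive refinement monoid, you reprove that fact directly, decomposing the pullback into fibres over the codomain of the cospan and solving a finite transportation problem on each fibre. This buys a self-contained argument that makes visible exactly where each hypothesis enters: positivity handles the degenerate fibres where one side of the array is empty (e.g.\ for $(\mathbb{Z},+,0)$, which satisfies refinement but not positivity, preservation indeed fails precisely there), while refinement---extended from the stated $2\times 2$ case to $m\times n$ arrays by the routine double induction you sketch, first $2\times n$ by induction on $n$, then $m\times n$ by induction on $m$---solves the non-degenerate cases; the paper's citation, in turn, buys brevity and defers this combinatorics to the literature. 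Two minor remarks: since $(\hole)^A$ is a right adjoint it preserves actual pullbacks, which already implies preservation of weak pullbacks, so your additional appeal to preservation of epimorphisms (which quietly uses choice) is unnecessary; and your finite-support check is the right one, since the constructed $\chi$ is supported inside $\support{\phi_1}\times\support{\phi_2}$.
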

\begin{proof} 
  $(\pf)^A$ is wpp, and under the lemma hypothesis also $\fw$ is wpp,
  by \cite{gummS:01monlbl}.  Therefore, $(\pf\fw)^A$ is wpp, hence
  every behavioural equivalence is a coalgebraic bisimulation on
  $(\pf\fw)^A$-coalgebras.  We conclude by
  Proposition~\ref{prop:ultras-as-coalgebras}.
\end{proof} 
This condition, can be easily verified and in fact holds for several
monoids of interest, e.g.: $(\{\ltrue,\lfalse\},\lor,\ltrue)$, $(\mathbb N,+,0)$,
$(\mathbb R^+_0,+,0)$, $(\mathbb N,\max,0)$, and $(A^*,\cdot,\varepsilon)$.
A simple counter example is $(\{0,a,b,1\}, +, 0)$ where
$x + y \defeq 1$ whenever $x \neq 0 \neq y$
for it is positive but not refinement 
(cf.~$\vec r = \langle a, a\rangle$ and $\vec c = \langle b, b\rangle$).

\vspace{-1ex}
\subsection{WF-GSOS specifications are WF-GSOS distributive laws}
\label{sec:soundness}\label{sec:cong-proof}
In this subsection we put the WF-GSOS format within the bialgebraic
framework \cite{tp97:tmos}.  As a consequence, we obtain that the
bisimilarity induced by the ULTraS defined by this specification is a
congruence.

In particular, we prove that every WF-GSOS specification represents a
distributive law of the signature over the ULTRaS
behavioural functor, i.e., a natural transformation of the form
\vspace{-.5ex}\begin{equation}\label{eq:wf-gsos-nat}\vspace{-.5ex}
  \lambda : 
  \Sigma(\id \times (\pf\fw)^A) 
  \Longrightarrow
  (\pf\fw T^\Sigma)^A
\end{equation}
where $A$ is the set of labels, $\mathfrak W$ is the commutative monoid
of weights,
$\Sigma = \coprod_{\mathtt f\in \Sigma}\id^{\mathrm{ar}(\mathtt f)}$ is
the syntactic endofunctor induced by the process signature $\Sigma$,
and $T^\Sigma$ is the free monad for $\Sigma$.  We will call natural
transformations of this type \emph{WF-GSOS distributive laws}.

Before stating the soundness theorem, we note that every natural
transformation $\lambda$ as above induces a
$(\pf\fw)^A$-coalgebra structure over ground
$\Sigma$-terms. Namely, this is the only function $h_\lambda :
T^\Sigma\emptyset \to (\pf\fw(T^\Sigma\emptyset))^A$ such that:\vspace{-.5ex}
\begin{equation}\label{eq:induced-coalg}\vspace{-.5ex}
h_\lambda \circ a = 
(\pf\fw T^\Sigma(a^\#))^A\circ\lambda_X \circ \Sigma\langle id,h_\lambda\rangle
\end{equation}
where $a^\# : T^\Sigma T^\Sigma\emptyset \to T^\Sigma\emptyset$ is the
inductive extension of $a$. 

We can now provide the soundness result for WF-GSOS specifications
with respect to WF-GSOS distributive laws, and between
systems and coalgebras they induce over ground $\Sigma$-terms.

\begin{theorem}[Soundness]\label{thm:wf-gsos-soundness}
A specification $\langle\mathcal R,\llbrace\hole\rrbrace\rangle$
yields a natural transformation $\lambda$ as in \eqref{eq:wf-gsos-nat}
such that $h_\lambda$ and the ULTraS induced by $\langle\mathcal R,
\llbrace\hole\rrbrace\rangle$ coincide.\par
\end{theorem}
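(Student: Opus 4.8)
The plan is to manufacture the natural transformation $\lambda$ directly from the data of the specification, essentially by reading Definition~\ref{def:induced-ultras} ``parametrically in the variable set'', and then to show that the coalgebra $h_\lambda$ this law determines on ground terms is literally the induced ULTraS. First I would define, for each set $X$, the component
\[
  \lambda_X : \Sigma(\id \times (\pf\fw)^A)(X) \to (\pf\fw T^\Sigma X)^A
\]
on a generator $\mathtt{f}(\langle x_1, s_1\rangle,\dots,\langle x_n, s_n\rangle)$ and a conclusion label $c$ as follows. Each $s_i \in (\pf\fw X)^A$ plays the role of the successor behaviour $\{\rho \mid x_i \xrightarrowu{a}\rho\}$, and so determines both the tuple of enabled labels $C_i = \{a \mid s_i(a)\neq\emptyset\}$ and the total weights available at each argument. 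I then range over every rule $R\in\mathcal R$ with head $\mathtt f$, conclusion label $c$, triggered by this data in the sense of Definition~\ref{def:wf-gsos-rule}, and over every \emph{admissible instantiation}: a map $\theta$ sending each weight-function variable $\phi^a_{ij}$ to some $\rho\in s_i(a)$ of the prescribed total weight, together with a substitution $\sigma$ fixing $\sigma(x_i)=x_i$ and sending each $y_k$ to some element of $X$ on which the chosen $\rho$ assigns a weight lying in the club $\mathfrak C_k$. Each such choice contributes the weight functions of $\llbrace\psi[\theta]\rrbrace_{X\cup Y}[\sigma]$, and $\lambda_X(\cdots)(c)$ is the union of all these. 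The finiteness clause of Definition~\ref{def:wf-gsos-spec}, together with the finiteness of each $s_i(a)$ and of the supports of its elements, guarantees this union is a finite set of finitely supported functions, so it lands in $\pf\fw T^\Sigma X$ as required by \eqref{eq:wf-gsos-nat}.

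Next I would verify that $\lambda$ is natural, i.e.\ that for every $u:X\to Y$ the square relating $\lambda_X$ and $\lambda_Y$ along $u$ commutes. The substitution-respecting property of the interpretation (Definition~\ref{def:wf-gsos-eval}) handles the concluding interpretation step; the genuine content is that the \emph{selection of premises} is stable under the action of $u$, which may merge distinct successor states. This is exactly what the club axioms are engineered to guarantee: $\fw(u)$ preserves total weight, so the total-weight premises $\totalweight{\phi}=w$ are matched on both sides, and a collapsed state $z=u(x)=u(x')$ carries a weight in $\mathfrak C_k$ if and only if $x$ or $x'$ does, by the equivalence $v+w\in\mathfrak C \iff v\in\mathfrak C \lor w\in\mathfrak C$ recorded after Definition~\ref{def:wf-gsos-rule}. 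Hence admissible instantiations over $X$ transport to admissible instantiations over $Y$ with matching interpreted targets, and naturality follows.

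Finally, I would identify $h_\lambda$ with the induced ULTraS by structural induction on ground terms. By construction $h_\lambda$ is the unique solution of the recursion \eqref{eq:induced-coalg}, and unwinding that equation at $p=\mathtt f(p_1,\dots,p_n)$ reproduces verbatim the closure condition of Definition~\ref{def:induced-ultras}: assuming inductively that $h_\lambda(p_i)(a)=\{\rho\mid p_i\xrightarrowu{a}\rho\}$ for all $i$ and $a$, the successor sets supplied to $\lambda$ are precisely the $s_i$ used above, so the triggers and admissible instantiations coincide with those of Definition~\ref{def:induced-ultras} and therefore $h_\lambda(p)(c)=\{\rho\mid p\xrightarrowu{c}\rho\}$. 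Uniqueness of the coalgebra satisfying \eqref{eq:induced-coalg} then yields the claimed coincidence.

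I expect the naturality argument of the second step to be the main obstacle. One must check that merging successor states neither creates nor destroys admissible instantiations and that the resulting set of interpreted weight functions transforms correctly under $(\pf\fw T^\Sigma u)^A$; the bookkeeping is delicate precisely because the same weight function over $Y$ may arise from several distinct functions over $X$, and it is here—rather than in the construction or the inductive identification—that the club conditions do their essential work.
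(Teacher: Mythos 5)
Your proposal is correct and takes essentially the same approach as the paper: the paper defines $\lambda_X$ as the composite of a rule-instantiation map $\llbracket\mathcal R\rrbracket_X : \Sigma(X\times(\pf\fw X)^A) \to (\pf T^\Theta(X+\fw X))^A$ followed by the interpretation step $(\mu\circ\pf\llbrace\hole\rrbrace_X)^A$, which is precisely your one-step construction with the two stages fused, and it likewise identifies $h_\lambda$ with the induced ULTraS by observing that Definition~\ref{def:induced-ultras} is structural recursion applying exactly $\lambda$ (cf.~\eqref{eq:induced-coalg}). Your explicit club/total-weight stability argument for naturality is a worked-out version of what the paper compresses into a citation of Turi--Plotkin for the rule component plus the substitution-respecting property of Definition~\ref{def:wf-gsos-eval} for the interpretation component.
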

\begin{proof}
	For any set $X$, define the function $\lambda_X$ as the composite:
	\vspace{-.5ex}\[\vspace{-.5ex}
	  \Sigma(X \times (\pf\fw X)^A) 
	  \xrightarrow{\llbracket\mathcal R\rrbracket_X}
	  (\pf T^\Theta(X + \fw X))^A 
	  \xrightarrow{(\mu\circ\pf\llbrace\hole\rrbrace_X)^A\!\!}
	  (\pf\fw T^\Sigma X)^A
	\]
	where $\mu : \pf\pf \Rightarrow \pf$ and $\llbracket\mathcal R\rrbracket_X$ is defined as follows: for all
	$\psi' \in T^\Theta(X + \fw X)$, $\mathtt f \in \Sigma$, $c \in A$,
	trigger $\vec A = \langle A_1,\dots A_n\rangle$,
	$\vec w = \langle w_1,\dots w_p\rangle$,
	$y'_k\in X$ and 
	$\Phi_i(a) = \{\phi^a_{ij} \in \fw X \mid 1\leq j \leq m^a_i\}$ for 
	$n = \mathrm{ar}(\mathtt f)$ and $i \in \{1,\dots,n\}$,
	let
	\vspace{-.5ex}\[\vspace{-.5ex}
	  \psi' \in \llbracket\mathcal R\rrbracket_X
	  (\mathtt f((x'_1,\Phi_1),\dots,(x'_n,\Phi_n)))\]
	if, and only if, there exists in $\mathcal R$ a (possibly renamed) rule
	\vspace{-1ex}\[\vspace{-1ex}\frac{
		\begin{array}{c}
			\Big\{
			x_i \xrightarrowu{a} \phi^a_{ij}
			\Big\}
			\hspace{-1.2ex}\begin{array}{l}
				\scriptstyle  1 \leq i \leq n,\\[-4pt]
				\scriptstyle  a \in A_i,\\[-4pt]
				\scriptstyle  1\leq j \leq m^a_i
			\end{array}
			\quad
			\Big\{
			x_i \centernot{\xrightarrowu{b}}
			\Big\}
			\hspace{-1.2ex}\begin{array}{l}
				\scriptstyle  \\[-4pt]
				\scriptstyle  1 \leq i \leq n,\\[-4pt]
				\scriptstyle  b \in B_i
			\end{array}
			\quad
			\Big\{
			\totalweight{\phi^{a_k}_{i_kj_k}} = w_k
			\Big\}
			_{1 \leq k \leq p}
			\quad
			\Big\{
			\corestr{\phi^{a_k}_{i_kj_k}}{\mathfrak C_k}\ni y_k
			\Big\}
			_{1 \leq k \leq q}
		\end{array}
	}{
		\mathtt{f}(x_1,\dots,x_n) \xrightarrowu{c} \psi
	}\]
	such that $m^a_i \neq 0$ iff $a \in A_i$ and there exists
	a substitution $\sigma$ such that $\psi' = \sigma[\psi]$,
	$\sigma x_i = x'_i$, $\sigma y_k = y'_k$, $\sigma\phi^a_{ij} = \phi^a_{ij}$,
	$\totalweight{\phi^{a_k}_{i_kj_k}} = w_k$ and 
	$\phi^{a_k}_{i_kj_k}(\sigma y_k) \in \mathfrak C_k$.
	Then, naturality can be proved separately for the two components:
	the former can be tackled as in \cite[Th.~1.1]{tp97:tmos}
	and the latter readily follows from Definition~\ref{def:wf-gsos-eval}.
	
	Correspondence of $h_\lambda$ with the induced ULTraS follows by noting that
	the latter is given by structural recursion on $\Sigma$-terms
	by applying precisely $\lambda$ as given above
	(cf.~\eqref{eq:induced-coalg} and
	Definition~\ref{def:induced-ultras}).
\end{proof} 

Now, by general results from the bialgebraic framework, every
behavioural equivalence on $h_\lambda$ is also a congruence on
$T^\Sigma\emptyset$.  In order to obtain this result we need the
following (simple yet important) property.
\begin{proposition}\label{prop:final-coalg}
	The category of $(\pf\fw)^A$-coalgebras has a final object.
\end{proposition}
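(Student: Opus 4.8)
The plan is to exhibit $(\pf\fw)^A$ as an accessible endofunctor on $\set$ and then invoke the general theorem that every accessible endofunctor on $\set$ admits a final coalgebra (see, e.g.,~\cite{rutten:universal}). Recall that $B : \set \to \set$ is \emph{$\kappa$-accessible}, for a regular cardinal $\kappa$, if it preserves $\kappa$-filtered colimits; under this hypothesis the terminal sequence converges and $\coalg{B}$ has a final object. It therefore suffices to produce a single regular cardinal $\kappa$ for which $(\pf\fw)^A$ preserves $\kappa$-filtered colimits.

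First I would check that the two building blocks $\pf$ and $\fw$ are \emph{finitary}, i.e.~$\omega$-accessible. For $\pf$ this is classical: a finite subset of a filtered colimit $X = \mathrm{colim}_i\, X_i$ has all of its (finitely many) elements already present at a single stage $X_i$, by filteredness. The same idea works for $\fw$, and this is exactly where finite support---hence the image-finiteness assumption of Section~\ref{sec:ultras}---is used: any $\rho \in \fw X$ has finite support $\support{\rho} = \{x_1,\dots,x_n\}$, and filteredness lets us find a stage $X_i$ carrying representatives of all the $x_j$, so that $\rho$ is the action (in the sense of \eqref{eq:fw-action}) of the coprojection on some $\rho' \in \fw X_i$. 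Hence $\fw$ preserves filtered colimits.

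Next, accessibility is closed under composition, so $\pf\fw$ is again finitary, and in particular $\kappa$-accessible for every regular $\kappa \geq \omega$ (a $\kappa$-filtered colimit is in particular filtered). The only remaining factor is the power $(\hole)^A$: the functor $X \mapsto X^A$ preserves $\kappa$-filtered colimits whenever $\kappa > |A|$, since an $A$-indexed family of elements of a colimit consists of $|A| < \kappa$ many elements, which by $\kappa$-filteredness all factor through a common stage (when $A$ is finite, $(\hole)^A$ is itself finitary). Choosing a regular cardinal $\kappa$ with $\kappa > |A|$ and $\kappa \geq \omega$, both $\pf\fw$ and $(\hole)^A$ preserve $\kappa$-filtered colimits, and therefore so does their composite $(\pf\fw)^A$. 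Applying the cited final-coalgebra theorem to this $\kappa$-accessible functor produces the desired final $(\pf\fw)^A$-coalgebra.

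I expect the only delicate point to be the bookkeeping around the label set $A$: when $A$ is infinite one cannot take $\kappa = \omega$ and must choose the bound above $|A|$, checking that the finitary functor $\pf\fw$ still preserves $\kappa$-filtered (a fortiori filtered) colimits. The substantive content, by contrast, is the finitariness of $\fw$, which rests entirely on finite support and is precisely the reason image-finiteness was imposed.
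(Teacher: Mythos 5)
Your proposal is correct and takes essentially the same route as the paper's own proof: both establish that the building blocks are finitary ($\fw$ by finite support, $\pf$ --- which the paper identifies with $\ff{2}$ --- by the classical argument), use closure under composition, and invoke the Barr-type theorem that accessible \set-endofunctors admit final coalgebras. You are in fact more careful than the paper on one point: the exponent $(\hole)^A$ is not finitary when $A$ is infinite, so the paper's appeal to finitarity alone literally suffices only for finite $A$, whereas your passage to a regular cardinal $\kappa > |A|$ and $\kappa$-accessibility closes exactly that gap.
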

\begin{proof} 
  By \cite{barr93} every finitary $\set$ endofunctor admits a
  final coalgebra. By definition $\fw$ is finitary. The thesis follows from $\pf \cong \ff{2}$ and
  from finitarity being preserved by functor composition.
\end{proof} 

\begin{corollary}[Congruence]
  Behavioural equivalence on the coalgebra over $T^\Sigma\emptyset$
  induced by $\langle\mathcal R,\llbrace{\hole}\rrbrace\rangle$ is a
  congruence with respect to the signature $\Sigma$.
\end{corollary}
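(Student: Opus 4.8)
The plan is to invoke the general machinery of the bialgebraic framework recalled in Section~\ref{sec:abstract-gsos}, now that its two prerequisites are in place. By Theorem~\ref{thm:wf-gsos-soundness} the specification $\langle\mathcal R,\llbrace\hole\rrbrace\rangle$ gives a WF-GSOS distributive law $\lambda$ of the shape \eqref{eq:wf-gsos-nat}, whose induced coalgebra $h_\lambda$ over $T^\Sigma\emptyset$ coincides with the induced ULTraS; and by Proposition~\ref{prop:final-coalg} the behavioural functor $B = (\pf\fw)^A$ admits a final coalgebra. The heart of the argument is that $\lambda$ lets us transport the $\Sigma$-algebra structure onto the carrier of the final $B$-coalgebra, so that the unique map from $T^\Sigma\emptyset$ into it is simultaneously a coalgebra and an algebra morphism; its kernel is then at once behavioural equivalence and a $\Sigma$-congruence.

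Concretely, first I would instantiate the abstract GSOS recipe with $B = (\pf\fw)^A$ and the law $\lambda$ from Theorem~\ref{thm:wf-gsos-soundness}. This yields, on one side, the initial $\lambda$-bialgebra carried by $T^\Sigma\emptyset$, equipped with the initial $\Sigma$-algebra $a : \Sigma T^\Sigma\emptyset \to T^\Sigma\emptyset$ and the coalgebra $h_\lambda$ determined by \eqref{eq:induced-coalg}; on the other side, writing $(Z,\zeta)$ for the final $B$-coalgebra provided by Proposition~\ref{prop:final-coalg}, the law $\lambda$ endows $Z$ with a canonical $\Sigma$-algebra structure making $(Z,\zeta)$ the final $\lambda$-bialgebra. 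Second, I would take the unique $\lambda$-bialgebra morphism $!\colon T^\Sigma\emptyset \to Z$: by finality of $(Z,\zeta)$ in $\coalg B$ it is exactly the unique coalgebra morphism from $h_\lambda$ to $(Z,\zeta)$, and by the construction of the final bialgebra it is also a $\Sigma$-algebra morphism. Third, I would identify behavioural equivalence on $h_\lambda$ with $\ker(!)$: two ground terms are behaviourally equivalent iff some coalgebra morphism equates them, and composing such a morphism with the final map shows this holds iff $!$ equates them. Since $!$ is a $\Sigma$-algebra morphism, its kernel is automatically a congruence for $\Sigma$, which is the claim.

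The technically delicate step is the second one---promoting $!$ to an algebra morphism---but this is precisely what the $\lambda$-bialgebra square of Section~\ref{sec:abstract-gsos} guarantees, so with the Soundness Theorem already established there is no real obstacle left: the work has been front-loaded into Theorem~\ref{thm:wf-gsos-soundness}. One point I would be careful to state explicitly is that this congruence result uses only behavioural (kernel) equivalence together with the existence of a final coalgebra, and therefore holds for \emph{every} commutative monoid $\mathfrak W$, independently of the weak-pullback-preservation hypothesis of Lemma~\ref{lem:wpp}; that hypothesis is needed only to further identify behavioural equivalence with Aczel--Medler coalgebraic bisimulation, not for congruence. A second thing worth checking is the equivalence used in the third step, namely that behavioural equivalence really coincides with $\ker(!)$---this is immediate once a final coalgebra exists, and it is the one place where finality (hence Proposition~\ref{prop:final-coalg}) is genuinely used.
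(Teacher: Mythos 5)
Your proposal is correct and follows essentially the same route as the paper: soundness (Theorem~\ref{thm:wf-gsos-soundness}) plus existence of a final coalgebra (Proposition~\ref{prop:final-coalg}) feed into Turi--Plotkin's bialgebraic machinery, whose initial-to-final bialgebra morphism is simultaneously a $\Sigma$-algebra and coalgebra morphism, so its kernel (behavioural equivalence) is a congruence. The only difference is presentational: the paper compresses your explicit bialgebra unfolding into the remark that $\lambda$ uniquely extends to a law distributing the free monad over the cofree copointed functor and then cites \cite[Cor.~7.3]{tp97:tmos}, which is precisely the argument you spell out.
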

\begin{proof} 
The syntactic endofunctor $\Sigma$ admits an initial algebra and, by 
Proposition~\ref{prop:final-coalg}, the behavioural endofunctor
$(\pf\fw)^A$ admits a final coalgebra. The same holds for
their free monad and cofree copointed functor respectively.
The specification $\langle\mathcal R,\llbrace\hole\rrbrace\rangle$
defines, by Theorem~\ref{thm:wf-gsos-soundness}, a distributive law
which uniquely extends to a distributive law distributing the
free monad over the cofree copointed functor; then the thesis
follows from \cite[Cor.~7.3]{tp97:tmos}.
\end{proof} 

\vspace{-1ex}
\subsection{WF-GSOS distributive laws are WF-GSOS specifications}\label{sec:completeness}
In this subsection we give the important result that the WF-GSOS
format is also \emph{complete} with respect to distributive
laws of the form \eqref{eq:wf-gsos-nat}.

\begin{theorem}[Completeness]\label{thm:wf-gsos-completeness}
  Every WF-GSOS distributive law $\lambda$ arises from some WF-GSOS
  specification $\langle\mathcal R,\llbrace\hole\rrbrace\rangle$.
\end{theorem}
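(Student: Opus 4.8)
The plan is to invert the construction from the proof of Theorem~\ref{thm:wf-gsos-soundness}: from a natural transformation $\lambda$ of the shape \eqref{eq:wf-gsos-nat} I would read off a rule set $\mathcal R$ together with a weight-term interpretation $\llbrace\hole\rrbrace$ whose associated composite $(\mu\circ\pf\llbrace\hole\rrbrace_X)^A\circ\llbracket\mathcal R\rrbracket_X$ recovers $\lambda_X$ for every $X$. The backbone is a Yoneda-style reduction: by naturality, $\lambda$ is fixed by its values on \emph{generic} domain elements, and the finiteness conditions of the format---image-finiteness together with the finite branching of rules, cf.~Proposition~\ref{prop:final-coalg}---guarantee that these generic elements range over \emph{finite shapes} and that each finite output $\lambda$ produces depends on only finitely many premises.

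First I would fix an $n$-ary operator $\mathtt f \in \Sigma$ and make precise the notion of a generic input. An element of $\Sigma(X \times (\pf\fw X)^A)$ is $\mathtt f((x_1,\xi_1),\dots,(x_n,\xi_n))$ with each $\xi_i : A \to \pf\fw X$; a \emph{shape} records, for each argument $i$, the finite enabled set $A_i = \{a \mid \xi_i(a) \neq \emptyset\}$, the multiplicities $m_i^a = |\xi_i(a)|$, and, for each weight function occurring, its total weight together with the club-classes its support meets. For each shape I would build a canonical representative over a variable set $V = X \cup Y \cup \Phi$, with the $x_i$ naming sources, the $\phi^a_{ij}$ naming the reachable weight functions, and the $y_k$ naming the support elements singled out by club premises; evaluating $\lambda_V$ on this representative yields, for each conclusion label $c$, a finite subset of $\fw T^\Sigma V$. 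Each such output weight function becomes the target of one rule, whose positive, negative, total-weight and club premises transcribe the shape and whose trigger $\langle\vec A,\vec w\rangle$ records the enabled labels and total weights; finitely many rules per trigger follows from the finiteness of each shape.

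For the interpretation I would exploit the generalisation afforded by Definition~\ref{def:wf-gsos-eval}, in which a single $\Theta$-term may denote finitely many weight functions. This lets me take $\Theta$ rich enough to name every output weight function produced by $\lambda$ on a generic element, and to define $\llbrace\hole\rrbrace$ so that the denotation of each target term is precisely the corresponding image of $\lambda$. The substitution-respecting condition on $\llbrace\hole\rrbrace$ is then matched to, and in fact read off from, the naturality square of $\lambda$, so that the recovered composite agrees with $\lambda$ on all sets $X$, not merely on the generic ones.

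The hard part will be justifying that naturality forces $\lambda$ to inspect its weight-function inputs \emph{only} through total weights on club-classes and club-membership selection of support elements---exactly the four premise families the format allows. Concretely, for any $f : X \to Y$ the action $\phi[f]$ sums the weights of merged elements, so the constraint $(\pf\fw T^\Sigma f)^A \circ \lambda_X = \lambda_Y \circ \Sigma(f\times(\pf\fw f)^A)$ prevents $\lambda$ from distinguishing support elements except through weight-based predicates unaffected by addition; these are precisely the clubs, via the equivalence $v+w\in\mathfrak C \iff v\in\mathfrak C \lor w\in\mathfrak C$ established after Definition~\ref{def:wf-gsos-rule}. Establishing that every natural $\lambda$ decomposes in this way---and that, by finitarity, only finitely many clubs and weight constants are needed for each trigger---is the technical crux; once it is in place, the transcription of shapes into rules and the definition of $\llbrace\hole\rrbrace$ are routine.
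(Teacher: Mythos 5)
Your plan founders on the step you yourself flag as the crux, and the difficulty is not that this step is hard to prove but that, as you state it, it is false. Naturality does \emph{not} force $\lambda$ to inspect its weight-function arguments only through total weights and club membership. Take $\Sigma$ with a single unary symbol $\mathtt f$, $A=\{a\}$, $\mathfrak W=(\mathbb R^+_0,+,0)$, and define $\lambda_X(\mathtt f(x,\xi))(a) \defeq \{2\cdot\phi \mid \phi\in\xi(a)\}$, where $2\cdot\phi$ doubles every weight pointwise (functions viewed over $T^\Sigma X$ via the unit). This is a legitimate WF-GSOS distributive law, yet its output depends on the entire input weight function, not merely on its total weight and on which support elements fall in which club. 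Such laws \emph{are} expressible in the format, but only because the target term may contain the variable $\phi$ and the interpretation may apply an arbitrary transformation to whatever function is substituted for it; no combination of the four premise families can carry this dependency. This is exactly the trade-off the paper's proof isolates: after the Bartels-style reduction the residual dependency shows up as natural transformations $\epsilon_{j,v_j}:\fw^{v_j}\Rightarrow\pf^+(\id+\fw)$, and only their $\id$-valued (selection) component is characterised by clubs (Lemma~\ref{lem:clubs-nat}); the $\fw$-valued component is encoded as extra operators $\zeta_j$ in an auxiliary signature $\Xi$ and discharged by the interpretation $\xi$ --- which is also why Definition~\ref{def:wf-gsos-eval} allows one term to denote finitely many weight functions.

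The same oversight breaks your Yoneda-style reduction, because $\fw$ is not polynomial. A ``canonical representative'' determined by enabled sets, multiplicities, total weights and club data does not determine $\lambda$: two inputs with identical shape data but different weight distributions can receive genuinely different outputs (as in the doubling example), so evaluating $\lambda_V$ on one representative per shape loses information. If you instead enrich shapes with the actual weight vectors so that genericity is restored, you generate one rule per concrete distribution; since the trigger of Definition~\ref{def:wf-gsos-spec} records only \emph{total} weights, all distributions with the same totals share a trigger, and you get infinitely many rules per trigger, violating image-finiteness. The paper escapes this dilemma not by choosing generic elements but by decomposing $\pf^+\fw \cong \coprod_{V\in\pf^+\mathfrak W}\prod_{v\in V}\pf^+\fw^{v}$ and then analysing natural transformations \emph{out of} $\fw^{v}$ (the chain \eqref{eq:alpha-nat}--\eqref{eq:delta-nat}); rule variables are bound to the arguments of these transformations, and every remaining dependency is pushed into the interpretation via the factorisation of Lemma~\ref{lem:wf-gsos-factorization} together with the $\epsilon$/$\zeta$ encoding. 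Your proposal needs that factorisation in place before the transcription of shapes into rules can even be stated; with it, your construction essentially collapses into the paper's proof.
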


The proof of this Theorem follows the methodology introduced by
Bartels for proving adequacy of Bloom's GSOS specification format
\cite[§3.3.1]{bartels04thesis}.  The (rather technical) proof will
take the rest of this subsection, so for sake of conciseness we omit
to recall some results which can be found in \emph{loc.~cit.}.

The thesis follows from proving that, for every $\lambda$,
there exists an image-finite set of WF-SOS rules $\mathcal R$
(and suitable interpretations $\theta$ and $\xi$) 
making the diagram in Figure~\ref{fig:diag-comp} commute.
\begin{figure}
	\centering
	\begin{tikzpicture}[auto,xscale=2.4,yscale=1.2, font=\footnotesize,
		baseline=(current bounding box.center)]
		\node (n0) at (0,3) {$\Sigma(\id \times (\pf\fw)^A)$};
		\node (n1) at (4,3) {$(\pf\fw T^\Sigma)^A$};
		\node (n2) at (0,0) {$(\pf T^\Xi(\id + \fw))^A$};
		\node (n3) at (4,0) {$(\pf\fw T^\Sigma)^A$};
		\node (n4) at (1.2,2) {$(\pf T^\Theta(\id + \fw))^A$};
		\node (n5) at (2.8,2) {$(\pf^2\fw T^\Sigma)^A$};
		\node (n6) at (1.2,1) {$(\pf^2 T^\Theta(\id + \fw))^A$};
		\node (n7) at (2.8,1) {$(\pf^3\fw T^\Sigma)^A$};
		
		\draw[->] (n0) to node {$\lambda$} (n1);
		\draw[->] (n0) to node[swap] {$\llbracket\mathcal R\rrbracket$} (n2);
		\draw[->] (n0) to node {$\rho$} (n4);
		\draw[->] (n2) to node[swap] {$(\pf\llbrace\hole\rrbrace)^A$} (n3);
		\draw[->] (n2) to node[swap] {$(\pf\xi)^A$} (n6);
		\draw[->] (n3) to node[] {$(\mu_{\scriptscriptstyle \fw T^\Sigma})^A$} (n1);
		\draw[->] (n4) to node {$(\pf\theta)^A$} (n5);
		\draw[->] (n5) to node[pos=.2] {$(\mu_{\scriptscriptstyle \fw T^\Sigma})^A$} (n1);
		\draw[->] (n6) to node[] {$(\mu_{\scriptscriptstyle T^\Theta(\id + \fw)})^A$} (n4);
		\draw[->] (n6) to node[swap] {$(\pf^2\theta)^A$} (n7);
		\draw[->] (n7) to node[swap] {$(\pf\mu_{\scriptscriptstyle \fw T^\Sigma})^A$} (n3);
		\draw[->] (n7) to node {$(\pf\mu_{\scriptscriptstyle \fw T^\Sigma})^A$} (n5);
		
		\node[red, font=\small] at (2,2.6) {(Lem.~\ref{lem:wf-gsos-factorization})};
		\node[red, font=\small] at (2,.3) {(Def.~$\llbrace\hole\rrbrace$)};
		\node[red, font=\small] at (3.1,1.5) {($=$)};
		\node[red, font=\small] at (1.7,1.5) {(Nat.)};
	\end{tikzpicture}
	\caption{Factorization for $\lambda$-distributive laws as WF-GSOS specifications.}
	\label{fig:diag-comp}
\end{figure}
The lower part of the diagram defines the interpretation
$\llbrace\hole\rrbrace$ out of $\xi$ and $\theta$ completing
the WF-GSOS specification for $\lambda$.
The middle and right parts of the diagram trivially commute.

The upper part of the diagram commutes because of the following
lemma which states that every WF-GSOS distributive law
arises from an interpretation and a natural transformation
having the same type of those defined by image-finite
sets of WF-GSOS rules.
\begin{lemma}\label{lem:wf-gsos-factorization}
	Let $\Sigma$, $A$ and $\mathfrak W$ be a signature, a set of labels and
	a commutative monoid, respectively.
	Let $\lambda$ be a WF-GSOS distributive law as in \eqref{eq:wf-gsos-nat}.
	There exist $\Theta$ and an interpretation 
	factorizing $\lambda$ i.e.~there exists 
	$\rho : \Sigma(\id \times (\pf\fw)^A)\Rightarrow (\pf T^\Theta(\id + \fw))^A$ 
	such that $\lambda = (\mu \circ \pf \theta )^A \circ \rho$.
\end{lemma}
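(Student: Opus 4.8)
The plan is to follow Bartels' adequacy technique \cite{bartels04thesis}: build a sufficiently expressive weight signature $\Theta$ together with an interpretation, and read off the intermediate transformation $\rho$ from the values of $\lambda$ on \emph{generic} configurations, leaning on naturality to glue these values into a single natural transformation. The guiding observation is that the codomain $\fw T^\Sigma$ of $\lambda$ can always be resolved as ``syntax followed by interpretation'': a weight function produced by $\lambda$ depends on the transitions of the arguments only through the reached weight functions and the argument processes themselves, and this dependence is \emph{uniform} (natural). Hence each such output can be named by a single operator applied to the reached weight functions (seen as the $\fw$-variables of $T^\Theta(\id+\fw)$) and to the argument processes (seen as the $\id$-variables), and the interpretation merely re-applies the recipe that the operator stands for.

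Concretely, first I would reduce to generic inputs. Fixing a label $c \in A$ and an operator $\mathtt f \in \Sigma$ of arity $n$, an element of $\Sigma(X \times (\pf\fw X)^A)$ at $\mathtt f$ is a tuple $((x_1,\Phi_1),\dots,(x_n,\Phi_n))$; since $\pf$ and $\fw$ are finitary, $\lambda_X$ applied to it lands in a finite subset of $\fw T^\Sigma X$ and factors, by naturality and finitarity, through a finite carrier $X_0\subseteq X$ (spanned by the $x_i$ together with the relevant supports). I would then declare $\Theta$ to contain, for each such generic shape and each element $\omega$ of the resulting finite output set $\lambda_{X_0}(\dots)(c)\subseteq\fw T^\Sigma X_0$, one operator whose arity matches the number of generic process- and weight-function-slots, and define the interpretation $\llbrace\hole\rrbrace$ (i.e.\ $\theta$) to send that operator, applied to concrete data over $X$, to the singleton obtained by instantiating the generic recipe named by $\omega$. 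Because the recipe is exactly a component of the natural transformation $\lambda$, re-instantiation commutes with substitution, so $\theta$ is substitution-respecting in the sense of Definition~\ref{def:wf-gsos-eval}. Finally I would set $\rho_X(\mathtt f((x_i,\Phi_i)_i))(c)$ to be the finite set of these operators applied to the $x_i$ (as $\id$-variables) and to the members of $\Phi_i(a)$ (as $\fw$-variables); applying $\pf\theta$ and collapsing with $\mu$ then returns exactly $\lambda_X(\dots)(c)$, giving $\lambda=(\mu\circ\pf\theta)^A\circ\rho$.

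The main obstacle is naturality of $\rho$ (equivalently, substitution-respecting of $\theta$) in the presence of \emph{state-merging} substitutions $\sigma:X\to Y$: under such a $\sigma$ the reached weight functions add up via the action $\phi\mapsto\phi[\sigma]$ of \eqref{eq:fw-action}, and distinct outputs $\omega$ may collapse. The key to overcoming this is precisely to keep the reached weight functions as $\fw$-variables rather than recording their concrete weights as constants: since the naming operators stand for components of the natural transformation $\lambda$, the two ways around the naturality square---instantiate-then-push-forward versus push-forward-then-instantiate---agree, and the collapses on the $\rho$-side match those on the $\lambda$-side because $\theta$ respects substitution. The remaining bookkeeping (that $\rho$ lands in the \emph{finite} powerset, and that only finitely many shapes are relevant per trigger) follows from finitarity of $\pf$ and $\fw$ and from $\lambda$ itself taking values in $\pf$.
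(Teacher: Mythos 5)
Your plan takes a genuinely different route from the paper's own proof. The paper's (sketched) construction is \emph{extensional}: it lets $\Theta$ extend $\Sigma$ with collection and weight-assignment operators (e.g.~$(\hole\mapsto w)$ for $w \in \mathfrak W\setminus\{0\}$), turns $\lambda$ into $\rho$ by syntactically listing the support and weights of each output weight function, and lets $\theta$ decode such listings (sending all other terms to $\emptyset$). Yours is \emph{intensional}: operators name values of $\lambda$ at generic configurations, and $\theta$ re-instantiates the named recipe. To your credit, you identify exactly the crux that the paper's sketch passes over in silence --- naturality under state-merging substitutions --- and your interpretation side is fine: an operator whose semantics is ``re-run a fixed component of the natural transformation $\lambda$ on the arguments'' does respect substitution, for the reason you give.

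The genuine gap is that your $\rho$ is not natural, and the device you invoke (keeping reached weight functions as $\fw$-variables) does not repair it. Your operators are indexed by the ``generic shape'' of the input together with an output $\omega \in \lambda_{X_0}(\dots)(c)$, and this index contains data that is \emph{not} invariant under pushforward. Write $\delta_{z,w}$ for the function assigning $w$ to $z$ and $0$ elsewhere, and take an input over $X$ with $\Phi_1(a) = \{\delta_{z_1,w},\delta_{z_2,w}\}$, $z_1 \neq z_2$, and $\sigma : X \to Y$ with $\sigma z_1 = \sigma z_2$. Since $\pf\fw\sigma$ takes images, the translated input has $\{\delta_{\sigma z_1,w}\}$, a singleton: its shape has one weight-function slot where the original had two. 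Hence $\rho_Y(\mathrm{input}[\sigma])(c)$ consists of terms headed by operators of the coarser shape, while $\pf T^\Theta(\sigma + \fw\sigma)$ applied to $\rho_X(\mathrm{input})(c)$ consists of terms headed by the original operators: two different finite sets of terms, whatever $\theta$ is. Naturality of $\rho$ is an equation between elements of $\pf T^\Theta(Y + \fw Y)$, \emph{prior} to any interpretation, so your remark that ``the collapses on the $\rho$-side match those on the $\lambda$-side because $\theta$ respects substitution'' conflates two levels that must be kept separate. The same mismatch recurs when $\fw\sigma$ collapses supports inside a single weight function (weights add, changing the concrete weight data your shapes carry, since $\lambda$ must be evaluated at an actual configuration) and when two outputs $\omega \neq \omega'$ satisfy $\omega[\sigma]=\omega'[\sigma]$. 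Nor can you define $\rho$ at the merged input as the union of substituted values over all its preimages: already the multiplicity-$k$ preimages for $k = 1,2,3,\dots$ contribute terms with pairwise distinct head operators, so that union is infinite and still not natural.

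The repair is forced: operators may only be indexed by data that is stable along $\fw\sigma$ and $\pf\fw\sigma$, namely which labels are enabled and which total weights occur (note $\totalweight{\fw(\sigma)(\phi)} = \totalweight{\phi}$) --- precisely the paper's notion of \emph{trigger} --- while multiplicities must be absorbed by the outer $\pf$ in the codomain of $\rho$, producing one term per \emph{choice} of premise instantiation so that merged choices automatically yield merged terms. Carrying this out is exactly the Bartels-style reduction the paper performs \emph{after} this lemma (the passage through \eqref{eq:alpha-nat}--\eqref{eq:delta-nat}, Lemma~\ref{lem:clubs-nat}, and the encoding of the natural transformations $\epsilon_{j,v_j}$ as operators); it cannot be compressed into the one-step definition you propose. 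For what it is worth, the paper's own sketch is vulnerable to the same objection --- the term listing a support element twice with weight $w$ is not the same term as the one listing it once with weight $w+w$, so the extensional encoding is not substitution-stable either --- but that observation does not close the gap in your argument.
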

\begin{proof}[Proof (sketch)]
	In $\set$ it is easy to encode finitely supported functions
	as terms. For instance let $\Theta$ extend $\Sigma$ with 
	operators for describing collections and weight assignments (e.g.~$(\hole\mapsto w)$ where $w \in \mathfrak W \setminus \{0\}$).
	Then, we can turn $\lambda$ into $\rho$ by simply encoding its
	codomain. Then $\theta$ simply evaluates these terms
	back to weight functions everything else to the $\emptyset$.
\end{proof}
Following Bartels' methodology, the left part of the diagram commutes
by reducing $\rho$
to simpler, but equivalent, families of natural transformations and
eventually deriving a syntactical specification which is then shown to
be equivalent to an image-finite set of WF-GSOS rules and an
intermediate interpretation $\xi$.
The use of another signature $\Xi$
besides $\Theta$
gives us an extra degree of freedom and simplifies the proof.  In
particular, it allows us to encode natural transformations of type
$\fw
\Rightarrow \pf\fw$ (yielded by the aforementioned reduction) in
$\xi$
and handle them downstream to the interpretation
$\llbrace\hole\rrbrace$.
This expressiveness gain is one of the reasons for the introduction of
non-determinism in Definition~\ref{def:wf-gsos-eval}.

First, note that, by \cite[Lem.~A.1.1]{bartels04thesis}, $\rho$ as above is equivalent to:
\vspace{-.5ex}\[\vspace{-.5ex}
\textstyle
	\bar\rho:\Sigma(\id \times (\pf\fw)^A)\times A
	\Longrightarrow 
	\pf T^\Theta(\id +\fw)
\]
which is equivalent to a family of natural transformations
\vspace{-.5ex}\begin{equation}\label{eq:alpha-nat}\vspace{-.5ex}
\textstyle
	\alpha_{\mathtt f,c} : (\id \times (\pf\fw)^A)^N
	\Longrightarrow 
	\pf T^\Theta(\id + \fw)
\end{equation}
indexed by $\mathtt f \in \Sigma$ and $c \in A$ and 
where $N = \{1,\dots,\mathrm{ar}(\mathtt f)\}$.
In fact, $\Sigma$ is a polynomial functor and $\id \times A \cong A \cdot \id$
is an $|A|$-fold coproduct.

By \cite[Lem.~A.1.7]{bartels04thesis}, each $\alpha_{\mathtt f,c}$ is equivalent to a natural transformation
\vspace{-.5ex}\begin{equation}\label{eq:alpha-bar-nat}\vspace{-.5ex}
\textstyle
	  \bar\alpha_{\mathtt f,c} : 
	  (\pf\fw)^{A\times N}
	  \Longrightarrow \pf T^\Theta(N + \id + \fw)
\end{equation}
and, by the natural isomorphism
\vspace{-.5ex}\[\vspace{-.5ex}
\textstyle
	(\pf)^{A\times N} \cong (\pf^+ + 1)^{A\times N} \cong
	\coprod_{E \subseteq A \times N} (\pf^+)^E
\]
each $\bar \alpha_{\mathtt f,c}$ is equivalent to a family of
natural transformations
\vspace{-.5ex}\begin{equation}\label{eq:beta-nat}\vspace{-.5ex}
\textstyle
	\beta_{\mathtt f,c,E} : (\pf^+\fw)^{E}
	\Longrightarrow 
	\pf T^\Theta(N + \id + \fw)
\end{equation}
where the added index corresponds to the 
vector of sets of labels $\langle E_1,\dots,E_{\mathrm{ar}(\mathtt f)}\rangle$
composing the trigger of a WF-GSOS rule.
By the natural isomorphism
\vspace{-.5ex}\[\vspace{-.5ex}
\textstyle
	\pf^+\fw \cong \pf^+\coprod_{v \in \mathfrak W}\fw^v \cong
	\coprod_{V \in \pf^+\mathfrak W}
	\prod_{v \in V}\pf^+\fw^v
\]
where $\fw^v X \defeq \{ \phi \in \fw X \mid \totalweight{\phi} = v\}$,
each $\beta_{\mathtt f,c,E}$ is equivalent to a family of
natural transformations
\vspace{-.5ex}\begin{equation}\label{eq:gamma-nat}\vspace{-.5ex}
\textstyle
	  \gamma_{\mathtt f,c,E,w} : 
	  \coprod_{e \in E}\prod_{v \in w(e)}\pf^+\fw^v
	  \Longrightarrow 
	  \pf T^\Theta(N + \id + \fw)
\end{equation}
where $w : E \to \pf^+\mathfrak W$. 
Since total weight premises associate pairs from $E$
to weights, maps like $w$ can be seen as families 
of triggering weights.

By \cite[Lem.~A.1.3]{bartels04thesis} and by the natural isomorphism
\vspace{-.5ex}\[\vspace{-.5ex}
\textstyle
T^\Theta \cong \coprod_{\psi \in T^\Theta 1} \id^{|\psi|_*}
\]
where $|\psi|_*$ denotes the number of occurrences of $* \in 1$
in the $\Theta$-term $\psi$ (cf.~\cite[Lem.~A.1.5]{bartels04thesis})
each $\gamma_{\mathtt f,c,E,w}$ corresponds to a family
of natural transformations
\vspace{-.5ex}\begin{equation}\label{eq:delta-nat}\vspace{-.5ex}
\textstyle
	  \delta_{\mathtt f,c,E,w,\psi} : 
	  \coprod_{e \in E}\prod_{v \in w(e)}\pf^+\fw^v
	  \Longrightarrow 
	  \pf^+((\id + \fw)^{|\psi|_*})
\end{equation}
where the added index $\psi$ ranges over some subset of
$T^\Theta(1+N)$ (cf.~target terms of WF-GSOS rules).

Then, following \cite[§3.3.1, Cor.~A.2.8]{bartels04thesis} it is 
easy to check that each $\delta_{\mathtt f,c,E,w,\psi}$ describes a 
non-empty, finite set of derivation rules as
\vspace{-.5ex}\[\vspace{-.5ex}
	\frac{
		\phi_{j,{v_j}} \in \pi_{v_j}(\Phi_{e_j}) \quad y_i \in \epsilon_{j,{v_j}}(\phi_{j,{v_j}})
	}{
		\langle z_{1},\dots,z_{|\psi|_*}\rangle \in 
		\delta_{\mathtt f,c,E,w,\psi}((\Phi_e)_{e \in E})
	}
\]
where $p,q\in \mathbb N$, $e_j \in E$, $1\leq j \leq p$,
$1\leq i\leq q$, $v_j \in w(e_j)$, each $z_k \in \{y_i \mid 1\leq i\leq q \}$
for $1 \leq k \leq |\psi|_*$ and each $\epsilon_{j,v_j}$ is a natural transformation:
\vspace{-.5ex}\[\vspace{-.5ex}\epsilon_{j,{v_j}}:\fw^{v_j} \Longrightarrow \pf^+(\id + \fw)\text{.}\]
Natural transformations of this type can be easily encoded in the
term $\psi$ by suitable extensions of $\Theta$ and therefore
each $\delta_{\mathtt f,c,E,w,\psi}$ can be shown to be equivalent
to a $\delta$-specification i.e.~a non-empty, finite set of derivation rules as above
except for each $z_k$ being a term wrapping $\phi_{j,v}$ with
the symbol denoting $\epsilon_{j,v}$. These terms are then evaluated 
by the interpretation $\xi^\delta$ as expected.

This proof points out the trade-off that has to be made in presence of
specifications with interpretation such as WF-GSOS or MGSOS
\cite{bm:2015stocsos}.  In fact, clubs were not mentioned in the above
reduction of $\rho$ since each $\epsilon_{j,v_j}$ was handled by the
interpretation $\xi$. However, the following result shows that clubs
(hence, premises like $\corestr{\phi}{\mathfrak C}\ni y$), characterize
natural transformations of type $\fw^v \Rightarrow \pf$.
\begin{lemma}\label{lem:clubs-nat}
	For any natural transformation $\upsilon : \fw^w \Rightarrow \pf$
	there exists a club $\mathfrak C_\upsilon$ characterizing it:
	$x \in \upsilon_X(\phi) \iff \phi(x) \in \mathfrak C_\upsilon$.
\end{lemma}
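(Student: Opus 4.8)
The plan is to read the club $\mathfrak C_\upsilon$ off directly from which points $\upsilon$ selects, and to extract everything from naturality of $\upsilon$ against three kinds of $\set$-maps: inclusions of supports, two-to-one \emph{merge} maps, and the collapse of a carrier onto a single point. The goal is to force the selection of a point to depend only on the weight that point carries, and to force the set of ``selecting'' weights to be a club.

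First I would record the naturality consequences that do the work. For $\phi \in \fw^w X$ with support $S = \support{\phi}$, naturality along the inclusion $\iota : S \hookrightarrow X$ gives $\upsilon_X(\phi) = \iota(\upsilon_S(\restr{\phi}{S}))$; hence $\upsilon_X(\phi) \subseteq \support{\phi}$, so every selected point carries a non-zero weight and the weight $0$ is never selected (this is what makes $0 \notin \mathfrak C_\upsilon$, forcing selections into the support). Next, for a fixed point $x$ I would collapse its complement: the map $f : X \to \{*,\bullet\}$ with $f^{-1}(*) = \{x\}$ sends $\phi$ to $\psi$ with $\psi(*) = \phi(x)$ and $\psi(\bullet) = \sum_{x' \neq x}\phi(x')$, so that $\phi(x) + \psi(\bullet) = w$, and naturality yields $x \in \upsilon_X(\phi) \iff * \in \upsilon_{\{*,\bullet\}}(\psi)$. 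This reduces the entire question to the behaviour of $\upsilon$ on two-point carriers, and collapsing all the way to a point $\{\star\}$ pins the single global datum $\star \in \upsilon_{\{\star\}}(\star \mapsto w)$, which controls whether $\upsilon$ is ever non-empty.

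I would then set $\mathfrak C_\upsilon \defeq \{\, v \in W \mid \exists\, X,\ \phi \in \fw^w X,\ x : \phi(x) = v \text{ and } x \in \upsilon_X(\phi) \,\}$ and prove it is a club. The engine is naturality along a merge map $m : X \to X'$ identifying two points of weights $v$ and $t$ and fixing the rest: the merged point carries weight $v + t$ and has exactly those two points as preimage, so naturality forces the merged point to be selected iff at least one of the two originals is. On weights this reads $v + t \in \mathfrak C_\upsilon \iff v \in \mathfrak C_\upsilon \lor t \in \mathfrak C_\upsilon$, which is exactly the statement that $\mathfrak C_\upsilon$ is a monoid ideal whose complement is a submonoid; together with $0 \notin \mathfrak C_\upsilon$ this is the definition of a club. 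Combining the merge law with the single-point collapse then shows that selecting a point is insensitive to the remaining weights, giving $x \in \upsilon_X(\phi) \iff \phi(x) \in \mathfrak C_\upsilon$, as claimed.

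The hard part will be this last independence step: showing that whether a weight-$v$ point is selected cannot depend on the surrounding weights. Naturality relates two configurations only when one is a functorial image of the other, and the fixed-total-weight constraint built into $\fw^w$ sharply restricts which configurations are connected by any $\set$-map; consequently the comparison of a weight-$v$ point against two different ``companion'' totals $u_1,u_2$ (both summing with $v$ to $w$) cannot be made by a single pushforward fixing that point. The argument therefore has to route through the single-point collapse—which ties the global select-or-not datum to $\upsilon_{\{\star\}}$ on the unique weight-$w$ function—and through the merge law above, exploiting the additive structure of $\mathfrak W$ to bridge the different companions. This is precisely where the defining algebra of clubs is indispensable and where the combinatorics of the weight monoid enter, so I expect it, rather than the two essentially bookkeeping reductions that precede it, to be the crux of the proof.
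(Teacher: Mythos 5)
Your two preparatory reductions are sound: naturality along support inclusions does confine $\upsilon_X(\phi)$ to $\support{\phi}$, and naturality along the collapse $f : X \to \{*,\bullet\}$ with $f^{-1}(*)=\{x\}$ does reduce everything to two-point carriers, because $\pf$ acts by direct image. The genuine gap is exactly the step you flag as the crux, and it is not merely left open --- it cannot be closed by the route you outline. First, your reading of the merge law ``on weights'' is circular: the merge map gives a statement about one configuration and its image (the merged point is selected iff one of its preimages is); to promote this to $v+t \in \mathfrak C_\upsilon \iff v \in \mathfrak C_\upsilon \lor t \in \mathfrak C_\upsilon$, with $\mathfrak C_\upsilon$ defined existentially over all configurations, you must already know that selection depends only on the weight a point carries, which is the independence statement you then propose to deduce from the merge law. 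Second, the two tools you route through cannot supply that independence: collapsing to a singleton only records whether $\upsilon_X(\phi) \neq \emptyset$, and pushforwards only coarsen configurations, so there is no zigzag of naturality squares that fixes a distinguished point of weight $v$ while changing the total weight of its complement from $u_1$ to $u_2$ (possible with $v+u_1 = v+u_2 = w$, $u_1 \neq u_2$, once $\mathfrak W$ is not cancellative). When $\mathfrak W$ is cancellative the companion is unique and your plan essentially goes through; in general it does not.

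In fact the independence claim --- hence the lemma in this generality --- is false, so no proof strategy can avoid extra hypotheses. Take the paper's own four-element monoid $(\{0,a,b,1\},+,0)$ with $x+y=1$ whenever $x \neq 0 \neq y$, and $w=1$. Define $\upsilon_X(\phi) = \support{\phi}$ for every $\phi$ of total weight $1$, \emph{except} when $\support{\phi}$ consists of exactly two points of weights $a$ and $b$, in which case $\upsilon_X(\phi)$ contains only the $a$-point. This is natural: a pushforward of an $(a,b)$-type configuration is again of type $(a,b)$ (if the two support points stay separate) or of type $(1)$ (if they merge), and both cases check directly; conversely no pushforward of a non-$(a,b)$ configuration can be of type $(a,b)$, since a merged point of nonzero weight would have weight $1$, and positivity gives $f(\support{\phi}) = \support{\phi[f]}$ in the remaining cases. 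This $\upsilon$ selects the $b$-point of the configuration $(b,b)$ but not the $b$-point of $(a,b)$, so no club --- indeed no subset of $W$ at all --- characterizes it; your existentially defined $\mathfrak C_\upsilon$ equals $W\setminus\{0\}$ (a club), yet the backward implication $\phi(x) \in \mathfrak C_\upsilon \Rightarrow x \in \upsilon_X(\phi)$ fails. For calibration: the paper's own argument is only a sketch which asserts precisely this point (``elements can be only singled out by their weight'') before passing to a topological reformulation, so your instinct that this is the crux is correct; but neither your proposal nor that sketch discharges it, and the counterexample shows it cannot be discharged without further assumptions on $\mathfrak W$.
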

\begin{proof}[Proof (sketch)]
	Intuitively, natural transformations of this type are ``selecting
	a finite subset from each weight function domain'' and it is
	easy to check that elements can be only singled out by their
	weight. Likewise, finiteness and naturality prevent the selection of
	anything outside function supports. Then, the problem readily
	translates into finding the finest topology on the weight monoid
	that ``plays well'' with $\fw$ i.e.~such that monoidal addition,
	seen as a continuous map from the product topology,
	preserves opens (i.e.~any admissible selection). Clubs are a base for this
	topology since, by definition, these are the only substructures
	isolated w.r.t.~$\fw$-action. Hence selections made by $\upsilon$ 
	are completely characterized by a single club $\mathfrak C_\upsilon$.
\end{proof}

Finally, we have to translate the set of rules we got so far into the
WF-GSOS format; we do it by reversing the chain that led us from
$\rho$ to $\delta$ and $\delta$-specification.  By
Lemma~\ref{lem:clubs-nat} every $\delta$-specification
is equivalent to a $\gamma$-specification
\vspace{-.5ex}\[\vspace{-.5ex}
	\left\{
	\frac{
		\phi_{j} \in \Phi_{e_j} \quad 
		\totalweight{\phi_j} = v_j \quad  
		\corestr{\phi_j}{\mathfrak C_i} \ni y_i
	}{
		\psi(z_1,\dots,z_{|\psi|_*}) \in 
		\gamma_{\mathtt f,c,E,w}((\Phi_e)_{e \in E})
	}\,\middle|
		\begin{array}{l}\scriptstyle
			v_j \in w(e_j),\\\scriptstyle
			z_k \in \{\phi_j[\zeta_j], y_i\}+N,\\\scriptstyle
			\psi
		\end{array}\!\!\!
	\right\}_\text{fininte}
\]
where $\phi_j[\zeta_j]$ is a term build with the $\Xi$-operator
denoting the natural transformation 
$\zeta_j : \fw^{v_j} \Rightarrow \pf\fw$ and $\xi^\gamma$ acts as
$\xi^\delta$ on these terms, as the identity on those generated from $\Theta$
(distributing the powerset as expected) and maps everything else to $\emptyset$.
A $\gamma$-specification defines a natural transformation as in \eqref{eq:gamma-nat}
and every family of $\gamma$-specifications characterizing
a natural transformation as in \eqref{eq:beta-nat} is equivalent to
a $\beta$-specification i.e.~a set of derivation rules
\vspace{-.5ex}\[\vspace{-.5ex}
	\left\{
	\frac{
		\phi_{j} \in \Phi_{e_j} \quad 
		\totalweight{\phi_j} = v_j \quad  
		\corestr{\phi_j}{\mathfrak C_i} \ni y_i
	}{
		\psi(z_1,\dots,z_{|\psi|_*}) \in 
		\beta_{\mathtt f,c,E}((\Phi_e)_{e \in E})
	}\,\middle|
		\begin{array}{l}\scriptstyle
			z_k \in \{\phi_j[\zeta_j], y_i\}+N,\\\scriptstyle
			\psi
		\end{array}\!\!\!
	\right\}_\text{image finite}
\]
finite up to vectors of
total weights $\vec v = \langle v_0,\dots,v_p\rangle$.
Since $E \subseteq A \times N$, every family of
$\beta$-specifications describing a natural transformation
as in \eqref{eq:alpha-bar-nat} is equivalent to a set
\vspace{-.5ex}\[\vspace{-.5ex}
	\left\{
	\frac{
		\Phi_{m_n,b_n} = \emptyset \quad
		\phi_{j} \in \Phi_{l_j,a_j} \quad 
		\totalweight{\phi_j} = v_j \quad  
		\corestr{\phi_j}{\mathfrak C_i} \ni y_i
	}{
		\psi(z_1,\dots,z_{|\psi|_*}) \in 
		\bar\alpha_{\mathtt f,c}(\langle \Phi_1,\dots,\Phi_{|\mathtt f|}\rangle)
	}\,\middle|
		\begin{array}{l}\scriptstyle
			\langle m_n,b_n\rangle \neq \langle l_j,a_j\rangle,\\\scriptstyle
			z_k \in \{\phi_j[\zeta_j], y_i\}+N,\\
			\psi
		\end{array}\!\!\!
	\right\}_\text{im.fin.}
\]
containing finitely many rules for every $E$ and $\vec v$.
This set corresponds to an $\alpha$-specification i.e.~an image-finite
set like the following:
\vspace{-.5ex}\[\vspace{-.5ex}
	\left\{
	\frac{
		\Phi_{m_n}(b_n) = \emptyset \quad
		\phi_{j} \in \Phi_{l_j}(a_j) \quad 
		\totalweight{\phi_j} = v_j \quad  
		\corestr{\phi_j}{\mathfrak C_i} \ni y_i
	}{
		\psi(z_1,\dots,z_{|\psi|_*}) \in 
		\alpha_{\mathtt f,c}(\langle \langle x_1,\Phi_1\rangle,\dots,\langle x_{|\mathtt f|},\Phi_{|\mathtt f|}\rangle\rangle)
	}\,\middle|
		\begin{array}{l}\scriptstyle
			\langle m_n,b_n\rangle \neq \langle l_j,a_j\rangle,\\\scriptstyle
			z_k \in \{\phi_j[\zeta_j], y_i, x_h\},\\\scriptstyle
			\psi
		\end{array}\!\!\!
	\right\}_\text{im.fin.}
\]
Finally, every family of $\alpha$-specifications equivalent to a natural
transformation as $\rho$ corresponds to an image-finite set of WF-GSOS rules
and an interpretation. Therefore we conclude that for any $\rho$
there exist $\mathcal R$ and $\xi$ as in Figure~\ref{fig:diag-comp}.

\section{Conclusions and future work}\label{sec:concl}
\looseness=-1
In this paper we have presented WF-GSOS, a GSOS-style format for
specifying non-deterministic systems with quantitative aspects. A
WF-GSOS specification is composed by a set of rules for the derivation
of judgements of the form $P \xrightarrowu{a} \psi$, where $\psi$ is a
term of a specific signature, together with an \emph{interpretation} for these
terms as weight functions.  We have shown that a specification in this
format defines an ULTraS, and it is expressive enough to subsume other
more specific formats such as Klin's \emph{Weighted GSOS} for WLTS
\cite{ks2013:w-s-gsos}, and Bartel's \emph{Segala-GSOS} for Segala
systems \cite[§5.3]{bartels04thesis}, and those subsumed by them
e.g.~Klin and Sassone's Stochastic GSOS \cite{ks2013:w-s-gsos} and
Bloom's GSOS \cite{bloomIM:95}.  WF-GSOS induces naturally a notion of
(strong) bisimulation, which we have compared with
$\mathcal{M}$-bisimulation used in ULTraS.  We have also provided a
general categorical presentation of ULTraSs as coalgebras of a precise
class of functors, parametric on the underlying weight structure.
This presentation allows us to define categorically the notion of
\emph{abstract GSOS} for ULTraS, i.e., natural transformations of a
precise type.  We have proved that WF-GSOS specification format is
\emph{adequate} (i.e., sound and complete) with respect to this
notion.  Taking advantage of Turi-Plotkin's bialgebraic framework, we
have proved that the bisimulation induced by a WF-GSOS is always a
congruence; hence our specifications can be used for compositional and
modular reasoning in quantitative settings (e.g., for ensuring
performance properties). Moreover, the format is at least as
expressive as every GSOS specification format for systems subsumed by
ULTraS.

\paragraph{Related works}
In this paper we have shown that commutative monoids are enough to
define ULTraSs, their homomorphisms and bisimulations.  The original
work \cite{denicola13:ultras} assumed weights to be organised into a
partial order with bottom $(W,\leq,\bottom)$, but the order plays no
r\^ole in the definition besides distinguishing the point $\bottom$
used to express unreachability.  A monoidal sum is eventually and
implicitly assumed by the notion of $\mathcal{M}$-bisimulation and,
because of the definition of $M$-function, this operation is assumed
to be be monotone in both its components and to have $\bottom$ as
unit.  In other words, $\mathcal{M}$-bisimulation implicitly assumes
weights to form a commutative positively ordered monoid
$(W,+,\leq,0)$.  Any such a monoid is positive and hence it has a
natural order $a \trianglelefteq b \iff \exists c .\, a+ c = b$; this
order is the weakest one rendering the monoid $W$ positively ordered,
in the sense that for any such ordering $\leq$, it is
${\trianglelefteq} \subseteq {\leq}$.

\looseness=-1
We note that in \cite{denicola13:ultras}, weights used to define
ULTraSs are decoupled from those of $M$-functions; e.g., the formers
can be in $([0,1],\leq,0)$ and the latters in $(\mathbb R^+_0,+,0)$.
However, the notion of constrained ULTraS is sill needed to precisely
capture probabilistic systems or, in other words, the use of partial
orders may still require to embed the systems under study into a
larger class of ULTraSs.  We remark that $(\mathbb R^+_0,+,0)$ is the
smaller completion of $([0,1],\leq,0)$ under $+$ and in this sense the
embedding can be seen as canonical.  Therefore, defining ULTraSs in
terms of commutative monoids is a conservative generalisation that
additionally provides a natural notion of homomorphisms and hence
bisimulations.  As a side note, existence of bottoms does not allow
weights to have opposites, e.g., to model opposite transitions like in
calculi for reversible computations.

Although in this paper we have taken ULTraSs as a reference,
WF-GSOS can be interpreted in other meta-models, such as FuTSs
\cite{latella:qapl2015}.  Like ULTraSs, FuTSs have state-to-function
transitions, but admit several distinct domains for weight functions
and more free structure besides the strict alternation between
non-deterministic and quantitative steps. In their more general
form, they can be understood as coalgebras for functors of shape:
\begin{equation}
	\label{eq:futs-fun}
	F_{\vec{A},\vec{\mathfrak W}} = 
	(\f{\mathfrak W_{0,k_0}}\dots\ff{\mathfrak W_{1,0}})^{A_0}
	\times\dots
	(\f{\mathfrak W_{n,k_n}}\dots\ff{\mathfrak W_{1,0}})^{A_n}
\vspace{-1ex}
\end{equation}
where each $\mathfrak W_{i,j}$ in $\vec{\mathfrak W}$ is a commutative monoid and each $A_i$ in $\vec{A}$ is a set.
We remark that, although in \cite{latella:qapl2015} weights are drawn 
from semirings, commutative monoids are sufficient to define $\fw$ and hence define FuTS, homomorphisms and eventually bisimulations. Moreover, Lemma~\ref{lem:wpp}
readily generalises to \eqref{eq:futs-fun}: if weights are drawn only
from positive refinement monoids then any such functor is wpp.
No rule format for FuTSs has been published yet; 
we believe the WF-GSOS specification format to be a step in this direction 
because of the similarities between the behavioural functors involved.
This would allow us to formulate compositionality results for
(meta)calculi defining FuTSs, e.g., the framework for stochastic 
calculi proposed in \cite{denicola13:ustoc}. Indeed
since ULTraSs can be viewed as FuTSs 
(assuming commutative monoids as a common ground)
any specification format for the latter that is both correct and complete 
w.r.t.~the suitable abstract GSOS law will necessarily subsume WF-GSOS.

The systems considered in this paper can be seen as generalised Segala
systems. We showed how the proposed format subsumes Bartels'
Segala-GSOS; however, this is not the only specification format for
this kind of systems.  In \cite{gdl2012:treerules} Gebler et
al.~proposed a $nt\mu f\nu/nt\mu x\nu$ rule format for describing
Segala systems. Since Turi-Plotkin seminal paper \cite{tp97:tmos} it
is well known that GSOS and coGSOS (i.e~tree-rule formats such as that
in \cite{gdl2012:treerules}) correspond to distributive laws of
completely different shapes: the former distribute monads over
copointed endofunctors whereas the latter distribute pointed
endofunctors over comonads.  These different shapes have obvious
implications on the data available to the derivation rules: monads
provide views ``inside terms'' whereas comonads provide views ``inside
executions''.  Their common generalisation are laws distributing
monads over comonads but has limited practical benefits because it
does not translate to any concrete rule format that would be complete
for any specification containing both GSOS and coGSOS
\cite{klin:sos2014}.

\paragraph{Future work}
The categorical characterization of ULTraS systems paves the way for
further interesting lines of research. One is to develop
Hennessy-Milner style modal logics for quantitative systems at the
generality level of the ULTraS framework. In fact, Klin has shown in
\cite{klin09:sosmlogic} that HML and CCS are connected by a
(contravariant) adjunction.  A promising direction is to follow this
connection taking advantage of the bialgebraic presentation of ULTraSs
provided in this paper. Another is to explore the implications of the
recent developments in the coalgebraic understanding of unobservable 
moves \cite{bmp:arxiv14-unobs,bonchi2015killing} in the context of this work.
An intermediate step in this direction is to develop a suitable monad structure
for $\pf\fw$ which is, in general, not a monad (cf.~$\pf\mathcal{D}$ where $\mathcal{D}$ is the probability distribution monad). This alone will
allow us to define e.g.~trace and testing equivalences in a principled coalgebraic way.

\vspace{-1ex}
\paragraph{Acknowledgements} We thank Rocco De Nicola, Daniel Gebler,
the anonymous reviewers and the QAPL'14 participants for useful discussions on the conference
version of this paper.
This work is partially supported by MIUR PRIN project 2010LHT4KM, \emph{CINA}.

{\footnotesize

}

\ifappendix
\clearpage
\appendix

\section{Weighted transition  Systems}
\label{apx:vs-wlts}

Weighted labelled transition systems 
(e.g.~\cite{handbook:weighted2009,ks2013:w-s-gsos}) 
are LTS whose transition are assigned a weight drawn from a commutative monoid
$\mathfrak W  =(W,0,+)$. Henceforth we will write $\mathfrak W$-LTS for $\mathfrak W$-Weighted LTS
or in general WLTS if no specific monoid is intended.
\begin{definition}[{\cite[Def.~2]{ks2013:w-s-gsos}}]
\label{def:wlts}
  Given a commutative monoid $\mathfrak W = (W,+,0)$, a \emph{$\mathfrak W$-weighted LTS} 
  is a triple $(X,A,\rho)$ where:
  \begin{itemize}
     \item $X$ is a set of \emph{states} (processes);
     \item $A$ is a set of \emph{labels} (actions);
     \item $\rho:X\times A \times X \to W$ is a \emph{weight
         function}, mapping each triple of $X\times A \times X$ to a
       weight.
  \end{itemize}
  $(X,A,\rho)$ is said to be \emph{image-finite} iff 
  for each $x\in X$ and $a\in A$, the set $\{ y \in X \mid \rho(x,a,y)\neq 0\}$
  is finite.
\end{definition}
It is well-known that, for suitable choices of $\mathfrak W$ and
constraints, WLTS subsume several kind of systems such as:
\begin{itemize}
\item $(\{\ltrue,\lfalse\},\lor,\lfalse)$ for non-deterministic systems;
\item $(\mathbb R^+_0,+,0)$ for rated systems \cite{ks2013:w-s-gsos,denicola09:rts} (e.g.~CTMCs);
\item $(\mathbb R^+_0,+,0)$ and 
  $\forall x \in X,a \in A\  \sum_{y\in X}\rho(x,a,y) \in \{0,1\}$ for
  generative (or fully) probabilistic systems;
\item $(\mathbb R^+_0,+,0)$ and 
  $\forall x \ \sum_{a\in A,y\in X}\mathrm P(x,a,y) \in \{0,1\}$ for
  reactive probabilistic systems;
\item $(\mathbb R^+_0,\max,0)$ for ``capabilities'' (weights denotes the capabilities of a process and similar capabilities add up to a stronger one);
\item etc.
\end{itemize}
Moreover, Klin defined in \cite{ks2013:w-s-gsos} a notion for
WLTS (based on cocongruences) which uniformly instantiates
to known bisimulations for systems expressible in the WLTS framework.
\begin{definition}[{\cite[Def.~4]{ks2013:w-s-gsos}}]
\label{def:w-bisim}
  Given two $\mathfrak W$-LTS $(X,A,\phi)$ and $(Y,A,\psi)$, 
  a \emph{$\mathfrak W$-bisimulation} is a relation $R \subseteq X \times Y$ 
  s.t.~for each pair $(x,y) \subseteq X \times Y$, 
  $(x,y) \in R$ implies that for each $a\in A$
  and each $(C,D)$ of $R^\star$:
  \[\sum_{c\in C}\phi(x,a,c) = \sum_{c\in D}\psi(y,a,d)\text.\]
\end{definition}

WLTS are precisely functional ULTraS and, as stated in
Proposition~\ref{prop:w-bisim}, every weighted bisimulation
for a WLTS is a bisimulation for the corresponding functional ULTraS and vice versa.
\begin{proof}[Proof of Proposition~\ref{prop:w-bisim}]
\label{proof:w-bisim}
Trivially, there is a 1-1 correspondence between $\mathfrak W$-LTS 
and functional $\mathfrak W$-ULTraS. Then, Proposition~\ref{prop:w-bisim}
readily follows by observing that, for any given pair of
$\mathfrak W$-LTS/ULTraS $(X,A,\rightarrowu_X)$ $(Y,A,\rightarrowu_Y)$, Definition~\ref{def:bisim}
degenerates in Definition~\ref{def:w-bisim} because 
for any $x \xrightarrow{a} \phi$ there is exactly one
$y \xrightarrow{a} \phi$.
\end{proof}

The coalgebraic understanding of WLTS makes the correspondence even more
immediate. In fact, there exists a bijective map between $\mathfrak W$-LTS
with labels in $A$ and $(\fw)^A$-coalgebras (cf.~\cite[Prop.~8]{ks2013:w-s-gsos})
and every $\mathfrak W$-weighted bisimulation arise from a cocongruence (cf.~\cite[Prop.~9]{ks2013:w-s-gsos}). Then, consider the natural transformations
$F : (\fw)^A \Longrightarrow (\pf\fw)^A$
and $G = (\pf\fw)^A \Longrightarrow (\fw)^A$:
\[
F_X(\phi)(a) \defeq \{\phi(a)\} \qquad
G_X(\Phi)(a) \defeq \lambda x: X.\sum_{\rho \in \Phi(a)}\rho(x)
\]
which lift the $\mathfrak W$-LTS behaviour to ULTraS and back.
These extends by composition to the functors, $\widetilde{F}$ and $\widetilde G$,
between the categories of coalgebras for $(\fw)^A$ and $(\pf\fw)^A$.
\[\begin{tikzpicture}[auto]
  \node (n0) at (0,0) {\((\fw)^A\cat{-CoAlg}\)};
  \node (n1) at (0,-2) {\((\pf\fw)^A\cat{-CoAlg}\)};
  \draw[->,bend right] (n0) to node[swap]{\(\widetilde F\)} (n1);
  \draw[->,bend right] (n1) to node[swap]{\(\widetilde G\)} (n0);
\end{tikzpicture}\]
The two are not adjoint but, the former is faithful and injective on objects 
whereas the latter is full and surjective on objects. 
Moreover $\widetilde G$ preserves the final coalgebra.

The natural transformations $F$ and $G$ give rise to the arrows
$G\hole F$ and $F\hole G$ (pictured below) by pre- and post- composition
and such that the first is injective and the second is surjective.
\[\begin{tikzpicture}[auto]
  \node (n0) at (0,0) {\(\cat{Nat}(\Sigma(\id \times (\fw)^A,(\fw T^\Sigma)^A)\)};
  \node (n1) at (0,-2) {\(\cat{Nat}(\Sigma(\id \times (\pf\fw)^A,(\pf\fw T^\Sigma)^A)\)};
  \draw[->,bend right] (n0) to node[swap]{\(G\hole F\)} (n1);
  \draw[->,bend right] (n1) to node[swap]{\(F\hole G\)} (n0);
\end{tikzpicture}\]

The functors above prove that on the same monoid ULTraS are a strict superclass of
WLTS. By a quick cardinality reasoning it is possible to extend the inclusion
result to the case where the monoid is allowed to change. In fact, for any $\mathfrak W = 
(W,+,0)$ s.t.~$|W| > 1$ there is no monoid $\mathfrak V = (V,\cdot,1)$ such that 
$2^{(|W|^x)} = |V|^x$. 
Unfortunately we cannot rule out the possibility of
``determinizing'' every ULTraS to some WLTS while preserving and reflecting
behavioural equivalences.

\section{Segala Systems}
\label{apx:vs-segala}
In their general format, Segala systems \cite{sl:njc95} are state 
machines (originally introduced as automata) 
whose transitions can be pictured as being
made of two steps belonging to two different behavioural
aspects: the first sub-step is non-deterministic and
the second one is probabilistic.
The following definitions are taken from \cite{sl:njc95}
with minor notational differences and by restricting to finite
probability distributions (whereas the original definition is given
to discrete at most countable probability spaces) for conciseness
and uniformity with the restriction to image-finite systems made in the
paper.
\begin{definition}
\label{def:segala-ts}
  A \emph{Segala system} is a triple $(X,A,\rightarrowu)$ where:
  \begin{itemize}
     \item $X$ is a set of \emph{states} (processes);
     \item $A$ is a set of \emph{labels} (actions);
     \item ${\rightarrowu} \subseteq X\times A \times \mathcal D(X)$ 
       a \emph{transition relation}
       between states and discrete probability spaces over pairs of labels and states.
  \end{itemize}
\end{definition}

\begin{definition}[{\cite[Def.~14]{sl:njc95}}]
\label{def:segala-bisim}
  Let $(X,A,\rightarrowu_X)$ and $(Y,A,\rightarrowu_Y)$ be two
  Segala systems.
  A bisimulation is a relation $R \subseteq X \times Y$ 
  such that for each $(x,y) \in X \times Y$, $(x,y) \in R$ 
  implies that for each $a\in A$
  for each $x \xrightarrowu{a} \phi$ there is $y\xrightarrowu{a} \psi$
  s.t.~for each $(C,D) \in R^\star$
  $\sum_{c\in C}\phi(c) = \sum_{d\in D}\psi(d)$
  and symmetrically for $y$.
\end{definition}

\begin{proof}[Proof of Proposition~\ref{prop:segala-bisim}]
\label{proof:segala-bisim}
Clearly $\mathcal DX \subseteq \f{\mathbb R^+_0}X$ and hence Segala systems
are constrained ULTraS. Then, Proposition~\ref{prop:segala-bisim}
readily follows by observing that the two notions coincide on
the non-deterministic part and then on the summations over
elements of $R^*$ i.e.~the extension of $R$
\end{proof}

\printproofs

\fi 

\end{document}